\documentclass[onecolumn,prx,longbibliography,nofootinbib, superscriptaddress]{revtex4-2}

\usepackage[dvips]{graphicx} 
\usepackage{amsfonts}
\usepackage{amssymb}
\usepackage{amscd}
\usepackage{amsmath}    
\usepackage{amsthm}
\usepackage{appendix}
\usepackage{bbm}
\usepackage{dsfont}
\usepackage{enumerate}
\usepackage{enumitem}
\usepackage{epsfig}
\usepackage{float}
\usepackage[colorlinks, linkcolor=red, anchorcolor=blue, citecolor=green]{hyperref}
\usepackage{makecell}
\usepackage{subfigure}
\usepackage{subfloat}
\usepackage{xcolor}		
\usepackage{physics}
\usepackage[most]{tcolorbox}
\usepackage{tabularx}
\usepackage{booktabs}

\usepackage{MnSymbol}
\usepackage{tikz}
\usetikzlibrary{quantikz}
\usepackage{tikzpeople}
\usepackage{pgfplots}
\usetikzlibrary{arrows}
\usetikzlibrary{shapes,fadings,snakes}
\usetikzlibrary{decorations.pathmorphing,patterns}
\usetikzlibrary{calc}
\usetikzlibrary{positioning}
\definecolor{amber}{rgb}{1.0, 0.75, 0.0}

\graphicspath{{./figure/}}

\newtheorem{theorem}{Theorem}

\newtheorem{lemma}{Lemma}
\newtheorem{corollary}{Corollary}

\newtheorem{definition}{Definition}

\newtheoremstyle{recap}
  {}                
  {}                
  {\itshape}        
  {}                
  {\bfseries}       
  {.}               
  {.5em}            
  {#3}              

\theoremstyle{recap}
\newtheorem*{thmrecap}{Theorem}

\newcommand{\id}{\mathbb{I}}

\newtcolorbox[auto counter]{mybox}[2][]{
	enhanced,
	breakable,
	colback=blue!5!white,
	colframe=blue!75!black,
	fonttitle=\bfseries,
	title=Box \thetcbcounter: #2,#1
}

\begin{document}
\title{Entanglement distillation based on Hamiltonian dynamics}

\author{Zitai Xu}
\email{zt3927@umd.edu}
\affiliation{Joint Center for Quantum Information and Computer Science, University of Maryland, College Park, Maryland 20742, USA}

\author{Guoding Liu}
\email{lgd22@mails.tsinghua.edu.cn}
\affiliation{Center for Quantum Information, Institute for Interdisciplinary Information Sciences, Tsinghua University, Beijing, 100084 China}


\begin{abstract}
    Efficient entanglement distillation is a central task in quantum information science and future quantum networks. At the core of distillation protocols are the quantum error correction and detection schemes which enhance the fidelity of entangled pairs. Conventional protocols focus on digital systems, which typically require complicated compiled circuits, high-fidelity multi-qubit operations and delicate pulse-level control that impose high demands on near-term hardware. Crucially, the leading physical platforms for quantum networks, trapped ions and neutral atoms, are governed by native many-body Hamiltonians inherently suited for analog, continuous-time evolution. Adopting these natural dynamics is simpler than engineering digital logic via delicate pulse-level control. Motivated by this experimental reality, we seek to leverage the intrinsic analog capabilities for efficient entanglement distillation. In this work, we introduce the Hamiltonian entanglement distillation protocol, which exploits the intrinsic information scrambling generated by random time evolution under native Hamiltonians. We establish a quantitative connection between output fidelity and Out-of-Time-Order Correlators, showing that efficient scrambling directly implies good distillation performance. Since generic Hamiltonians are naturally efficient scramblers, the capability for distillation is ubiquitous: almost all Hamiltonians in the Hilbert space suffice for high-fidelity distillation. Numerical simulations of representative Rydberg-atom and trapped-ion systems further confirm that robust performance could be achieved using only short-range interactions and evolution times feasible in current experiments. By avoiding the complexity of digital circuit control, our approach substantially relaxes experimental requirements, providing a scalable route to entanglement engineering on current analog quantum platforms.
\end{abstract}

\maketitle

\tableofcontents

\section{Introduction}

Quantum information processing promises a new era of computation and communication, with potential application like secure communication~\cite{bennett1984bb84, bennett1993teleportation}, distributed computing~\cite{cirac1999distributed, gottesman1999demonstrating, vanmeter2016path}, quantum simulation and enhanced sensing~\cite{gottesman2012telescopes}. At the heart of many of these applications lies the prerequisite of shared entanglement, the nonlocal resource that makes quantum protocols fundamentally different from classical counterparts.

Despite its central role, entanglement is inherently fragile and susceptible to errors arising from imperfect operations and environmental noise, especially when we want to distribute entanglement between two remote parties. Consequently, raw shared entanglement often suffers from low fidelity and cannot be used directly. To address this, one must employ entanglement distillation, a process that extracts high-fidelity pairs from noisy ensembles. Realizing efficient protocols for this task is crucial yet challenging, requiring joint efforts in both theoretical protocol design and experimental implementation.

On the theoretical front, the field has seen extensive development. Since the seminal work establishing the link between distillation and quantum error correction~\cite{BDSW}, a wide variety of protocols have been proposed~\cite{gottesman2003two-way,devetak2005distillation, chau2002practical, kraus2005lower, watanabe2007key, Du2025AKD, gu2025}. Most of these works leverage the stabilizer formalism for error detection or correction, employing high-performance codes to achieve strong distillation performance. In contrast, experimental realization lags behind these theoretical advances. Current implementations are restricted to elementary protocols like the 2-to-1 recurrence protocol~\cite{BDSW} to demonstrate the basic principle~\cite{Pan2001, Kalb2017, Kru23, Liu2022, Ecker21single-distillation}. This technological gap exists because while high-performance distillation protocols typically require high-weight measurements, multi-qubit gate operations, complex circuit compilation, and precise pulse-level control, even state-of-the-art experimental platforms~\cite{Google25Willow} have not yet achieved these capabilities at the required scale and fidelity. What current experiments can reliably realize is limited to the smallest examples, such as the [2,1,2] code for recurrence protocols. Consequently, there is still a need for practically efficient protocols that can achieve high-fidelity distillation within the constraints of current experimental technology.

In this work, we resort to Hamiltonian dynamics in analog system to relax the high resource demands. Importantly, native analog evolution is a natural feature of the physical platforms currently leading the development of quantum networks: trapped ions and neutral atoms. While these platforms can be engineered to perform digital logic through complex compilation and fine-tuned pulse sequence~\cite{SimuQ}, their native many-body Hamiltonians drive continuous-time analog evolution~\cite{Bloch2012, Hu2019, Bernien2017}. Exploiting such intrinsic evolution relaxes experimental demands while preserving the good scrambling power needed for a wide range of quantum tasks. Indeed, these analog dynamics have already proven useful in quantum learning protocols such as shadow tomography~\cite{Tran2023, HamiltonianShadow}. These findings lead to the core problem of this work: Can the intrinsic Hamiltonian dynamics of analog quantum simulators be effectively harnessed for high-performance entanglement distillation? Furthermore, can we establish a quantitative connection between the specific Hamiltonian evolution and the distillation performance?

We answer these questions affirmatively by explicitly introducing a Hamiltonian entanglement distillation protocol. We then present a comprehensive analysis that proceeds from the theoretical foundations of scrambling-based distillation to its practical implementation on near-term devices. Theoretically, we directly builds a quantitative relationship between Out-of-Time-Order Correlator, a physical quantity regarding scrambling power of Hamiltonians, and the fidelity and yield of the distillation protocol, showing that strong scrambling directly implies high output fidelity. Since for generic family of Hamiltonians, the error scrambling power is strong, our result directly shows that the good distillation performance property is ubiquitous: almost all Hamiltonians in the Hilbert space exhibit sufficiently good scrambling power for high-fidelity distillation. Another relevant metric for distillation protocol is its noise tolerance in the extremely high noise region. We show that the Hamiltonian-based scheme suffice to saturate the theoretical $33.3\%$ error-rate upper bound. Practically, we perform simulations on representative device-native Hamiltonians such as the Rydberg Hamiltonian and trapped-ion Hamiltonian to confirm that within experimentally feasible short time and qubit scale, these Hamiltonian could guarantee strong fidelity performance. In addition, we also demonstrate the noise tolerance of our protocol in practical applications including repeaters and quantum key distribution, obtaining better maximal distances. We believe these findings represent a significant advance in analog quantum simulation and entanglement distillation, with considerable potential for near-term applications.

The structure of this paper is as follows. In Section~\ref{sec:protocol}, we introduce the Hamiltonian entanglement distillation scheme and provide the intuition for why Hamiltonian dynamics serve as a suitable candidate for this task. In Section~\ref{sec:main}, we present our main results, establishing a connection between the out-of-time-order correlators, output fidelity and yield, demonstrating that general Hamiltonians possess strong error-detection capabilities. In Section~\ref{sec:tolerance}, we show that our protocol can saturate the noise-tolerance bound for local depolarizing noise. Section~\ref{sec:simulation} presents numerical simulations that illustrate the practicality of our approach on mainstream quantum platforms. Finally, Section~\ref{sec:outlook} concludes with a discussion of our results and directions for future research.

\section{Hamtiltonian-based distillation protocol}\label{sec:protocol}
The entanglement distillation task for two remote parties Alice and Bob, who share $n$ noisy EPR pairs $\tilde{\rho}_{AB}$, is to do some local operations and classical communications to obtain fewer pairs of states with higher fidelity. To see how Hamiltonian dynamics could be leveraged in this process, we first define a Hamiltonian twirling channel $\mathcal{N}_H$ as the ensemble average of the unitary evolution generated by $H$ over a random interaction time: 
\begin{definition}
Given a Hamiltonian $H$ on Hilbert space $\mathcal{H}$ and a time ensemble $\mu$, we define the Hamiltonian twirling channel $\mathcal{N}_{H}$ of a bipartite system  $\mathcal{H}_{AB}=\mathcal{H}^{\otimes 2}$ as,
\begin{equation}\label{eq:twirling}
\begin{split}
    \mathcal{N}_{H}(\rho_{AB})=\int_{t}(e^{-iHt}\otimes e^{iH^*t})\rho_{AB} (e^{-iHt}\otimes e^{iH^*t})^{\dagger}\mu(t).
\end{split}
\end{equation}
\end{definition}
By construction, $\mathcal{N}_{H}$ preserves the maximally entangled state. Specifically, Specifically, for an $n$-qubit Hamiltonian $H$, one has $\mathcal{N}_{H}(\ketbra{\Phi^+}{\Phi^+}^{\otimes n})=\ketbra{\Phi^+}{\Phi^+}^{\otimes n}$. This follows from the fact that, for any $n$-qubit unitary $U$, the maximally entangled state $\ket{\Phi^+}_{AB}^{\otimes n}$ is invariant under the bilateral action $U \otimes U^*$. That is, $(U_A \otimes U_B^*)\ket{\Phi^+}^{\otimes n}_{AB} = \ket{\Phi^+}^{\otimes n}_{AB}$, where the subscript of $U$ denotes which party it is acting on and $U^*$ denotes the complex conjugate of $U$.

Intuitively, evolving a fixed Hamiltonian for a random time scrambles local errors that do not commute with the Hamiltonian, effectively transforming them into global noise while leaving the EPR state invariant. This approach is reminiscent of the adding noise strategy in two-way entanglement distillation~\cite{Renner2005AddingNoise}. Although the average bit error rate and phase error rate may increase after applying the Hamiltonian twirling $\mathcal{N}_H$, the correlations between the noisy state and the environment are further decoupled. Consequently, when $\mathcal{N}_{H}$ is augmented with an error-detection scheme~\cite{gu2025}, the noisy components can be identified and discarded with high efficiency, yielding EPR pairs with higher fidelity. Motivated by these observations, we introduce the following Hamiltonian-based in Box~\ref{box:HEDP}: 
\begin{figure}[htbp]
\begin{mybox}[label={box:HEDP}]{Hamiltonian entanglement distillation protocol}
\begin{enumerate}[label=(\arabic*)]
    \item {Grouping:}  
    Alice and Bob select an $n$-qubit Hamiltonian $H$ compatible with their devices. 
    Suppose they share $k n$ noisy EPR pairs. They partition these pairs into $k$ groups 
    $\rho_1, \ldots, \rho_k$, each consisting of $n$ EPR pairs.
    \item {Pauli twirling:}  
    For each group, Alice and Bob jointly sample a Pauli operator $P$ from the $n$-qubit Pauli group and apply $P \otimes P^{*}$ to their respective subsystems, obtaining $\rho'_i = (P \otimes P^{*}) \, \rho_i \, (P \otimes P^{*})$.
    \item Hamiltonian twirling: For each group, Alice and Bob jointly sample a time $t$ from time ensemble $\mu$ and simultaneously evolve under $H$ and $-H^*$ on their respective subsystems for time $t$, producing $\rho''_i = (e^{-iHt} \otimes e^{iH^*t})\rho'_i(e^{-iHt} \otimes e^{iH^*t})^{\dagger}$.
    \item {Error detection:}  
    Alice and Bob agree on a POVM $M$ acting on $m$ qubit pairs. For each group, they randomly select $m$ pairs and apply the POVM. The group is kept if the two parties obtain identical measurement outcomes. otherwise, it is discarded.
    \item {Output:}  
    The surviving qubits form EPR pairs with enhanced fidelity, which can be used directly 
    or passed to a subsequent one-way hashing error-correction protocol.
\end{enumerate}
\end{mybox}
\end{figure}

Several notes regarding the scheme are as follows. First, the reason why Alice and Bob performs a random Pauli and its conjugate to their system is basically for analysis simplicity because it turns the noise channel for EPR states into Pauli channel~\cite{BDSW}. This is the most basic and cheap operations and even manageable in many analog simulators. And if, in reality, some analog simulators are not able to perform any Pauli gates, we can omit this step and the protocol should still give good performance as shown in simulation in Sec~\ref{subsec:non-Pauli}. Second, later on in this paper, for theoretical simplicity, we will assume that the Hamiltonian satisfies two properties. First, it is non-degenerate. That is, all its eigenvalues are distinct. Second, the eigenvalues satisfy the non-degenerate gap condition, i.e. $\lambda_i-\lambda_j= \lambda_{k}-\lambda_{l}$ implies $(i,j)=(k,l)$. These assumptions are adopted for analytical simplicity. In practice, Hamiltonians with spectral degeneracies, such as the transverse-field Ising Hamiltonian, can also exhibit good performance, as demonstrated numerically in Sec.~\ref{sec:simulation}. We also assume the time measure $\mu$ to be the Fourier transform of delta function $\delta$, which gives the property that $\int_{t}e^{ixt}\mu(t)=1_{x=0}$. In practice this can be approximated by randomly selecting $t$ in an experimental friendly time range $[0,T]$. Third, the protocol requires the ability to implement a given Hamiltonian and its time-reversed evolution. This capability has been demonstrated in analog quantum simulators such as trapped-ion systems~\cite{Gaerttner2017, Richerme2014}. For Hamiltonians that lack a natural inversion, the effective sign flip can often be achieved using simple local Pauli conjugations~\cite{Monroe21RMP}. Fourth, the choice of measurement POVM $M$ depends on the chosen Hamiltonian $H$. However, for most Hamiltonians, measuring $m$ pairs in the computational basis suffices. Choosing a larger value of $m$ can increase the fidelity of the retained states, 
but generally reduces the yield. Finally, additional error-correction steps may be applied after the protocol to further enhance performance. For instance, when the distilled entangled states are used for quantum key distribution, standard one-way hashing protocols~\cite{BDSW} can be employed to extract secret keys via classical post-processing. However, such one-way error correction typically relies on large CSS codes, whose coherent implementation remains experimentally challenging on current platforms. As a result, except in quantum key distribution settings, error correction is rarely applied in present-day experiments.

It worth noting that for entanglement distillation the specific targets vary between theoretical and practical contexts. Theoretical studies typically focus on the asymptotic yield of perfect EPR pairs, whereas practical applications often require only that the state fidelity exceeds a certain threshold. Relevant targets include $F > 2/3$ for quantum teleportation~\cite{Popescu95, Ma12teleportation}, $F > 1/\sqrt{2} \approx 0.71$ for violating Bell inequalities~\cite{QuantumEntanglement} and $F \gtrsim 0.9$ for distributed quantum networks~\cite{Wehner2018}. Despite this distinction, efficient error detection serves as an important step for both ends. In practical scenarios, a single round of error detection is often sufficient to meet fidelity targets, without resorting to the more resource-intensive one-way correction. In theoretical studies, stronger error detection can improve overall performance in high-noise regimes, where it helps suppress errors before subsequent one-way error correction is applied. Motivated by this unifying role, we first focus on analyzing the performance of the error-detection step itself, deriving results that are directly relevant to both near-term experiments and theoretical studies in the next section.

\section{Error detection via Hamiltonian twirling}\label{sec:main}

The error-detection step is central to the performance of the protocol. In the ideal limit, where all errors are perfectly detected and discarded, the remaining states are pure EPR pairs. More generally, more effective error detection leads to improved distillation performance. In the Hamiltonian based entanglement distillation protocol, the error detection is achieved through information scrambling. When the evolution effectively transforms local errors into global noise, measuring a relatively small number $m$ of qubits is sufficient to detect the noise with high probability, which would allow for high output fidelity. In this section, we formalize this intuition by establishing a quantitative connection between distillation performance and scrambling power, as characterized by out-of-time-order correlators (OTOCs). We further show that, for generic Hamiltonians, their intrinsic scrambling properties are sufficient to guarantee strong distillation performance. 

To build intuition, we first consider the simplest case where the Hamiltonian is diagonal, $H_d=\text{diag}\{\lambda_1,\cdots,\lambda_{d}\}$ and investigate how different errors behave under such Hamiltonian twirling and how efficiently they can be detected. In this setting, for a bipartite state $\rho$, we can observe that:
\begin{equation}\label{eq:projection}
\begin{split}
    \mathcal{N}_{H_d}(\rho)=\sum_{i\neq j}(\rho_{ii,jj}\ketbra{ii}{jj}+\rho_{ij,ij}\ketbra{ij}{ij})+\sum_{i}\rho_{ii,ii}\ketbra{ii}{ii}.
\end{split}
\end{equation}
This equation shows that only the diagonal components and the off-diagonal terms $\rho_{ii,jj}$ survive. Notably, the terms $\rho_{ii,jj}$ are exactly the off-diagonal terms of EPR states $\ket{\Phi^\pm}\bra{\Phi^\pm}$, whereas the off-diagonal terms of $\ket{\Psi^{\pm}}\bra{\Psi^{\pm}}$ vanish. Consequently, under diagonal Hamiltonian twirling, any error containing an \(X\)-type component leads to complete decoherence. We formalize this observation in the following theorem.
\begin{theorem}\label{thm:diagonal_twirling}
    Let Alice and Bob share \(n\) noisy EPR pairs of the form $\ket{\psi}=(P\otimes I)\ket{\Phi^+}_{AB}^{\otimes n}$, where a Pauli error
    \(
    P = X_0 Z_0
    \)
    acts on Alice’s subsystem, with \(X_0\) and \(Z_0\) denoting tensor products of single-qubit Pauli \(X\) and \(Z\) operators, respectively.
    Let \(\mathcal{N}_{H_d}\) denote the Hamiltonian twirling channel induced by the diagonal Hamiltonian \(H_d\). If \(X_0 \neq I\), then the output state after twirling is
    \begin{equation}
        \mathcal{N}_{H_d}\!\left( (P \otimes I)\ketbra{\Phi^+}{\Phi^+}^{\otimes n}(P \otimes I) \right)
        = (X_0\otimes I)\, \Delta\!\left(\ketbra{\Phi^+}{\Phi^+}^{\otimes n}\right) (X_0\otimes I) ,
    \end{equation}
    where \(\Delta(\cdot)\) denotes the global dephasing channel that removes all off-diagonal terms in the computational basis.  If otherwise \(X_0 = I\), then the state is invariant under twirling:
    \begin{equation}
        \mathcal{N}_{H_d}\!\left( (P \otimes I)\ketbra{\Phi^+}{\Phi^+}^{\otimes n}(P \otimes I) \right)
        = (Z_0\otimes I)\,\ketbra{\Phi^+}{\Phi^+}^{\otimes n} (Z_0\otimes I) .
    \end{equation}
\end{theorem}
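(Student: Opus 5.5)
The proof is a direct computation in the computational basis, using the projection formula \eqref{eq:projection} for diagonal Hamiltonian twirling. Write $\ket{\Phi^+}^{\otimes n}=2^{-n/2}\sum_{x\in\{0,1\}^n}\ket{x}_A\ket{x}_B$. Since $X_0$ is a bit-flip string $X^{a}$ for some $a\in\{0,1\}^n$ and $Z_0=Z^{b}$, we have
\begin{equation*}
(P\otimes I)\ket{\Phi^+}^{\otimes n}=2^{-n/2}\sum_x (-1)^{b\cdot x}\ket{x\oplus a}_A\ket{x}_B .
\end{equation*}
Hence the density matrix is
\begin{equation*}
2^{-n}\sum_{x,y}(-1)^{b\cdot(x\oplus y)}\ketbra{x\oplus a,x}{y\oplus a,y}.
\end{equation*}
The overall sign in front of $Z_0$ from $XZ$ versus $ZX$ ordering is a global phase and drops out of the projector.

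Next, apply the twirling channel. For $H_d=\mathrm{diag}(\lambda_i)$ we have $H_d^*=H_d$, and the matrix element $\ketbra{i,j}{k,l}$ acquires the phase $e^{-i(\lambda_i-\lambda_j-\lambda_k+\lambda_l)t}$. Under the assumption $\int_t e^{iut}\mu(t)=1_{u=0}$, combined with the non-degenerate gap condition, this phase averages to $1$ exactly when either $i=j$ and $k=l$, or $i=k$ and $j=l$. Otherwise it averages to $0$. This is precisely the content of \eqref{eq:projection}, which I will invoke rather than rederive. The only delicate point is checking that the gap condition rules out accidental survivals. The identity $\lambda_i-\lambda_j=\lambda_k-\lambda_l$ forces $(i,j)=(k,l)$ when $i\ne j$, and when $i=j$ the non-degeneracy gives $k=l$. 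This case check is the main (mild) obstacle.

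Now split into cases. If $a\neq 0$, no term in the sum has $i=j$, since $x\oplus a\neq x$. Therefore only the terms with $(x\oplus a,x)=(y\oplus a,y)$, i.e. $x=y$, survive. For these the sign is $(-1)^{0}=1$, and the result is
\begin{equation*}
2^{-n}\sum_x\ketbra{x\oplus a,x}{x\oplus a,x}=(X_0\otimes I)\,\Delta\bigl(\ketbra{\Phi^+}{\Phi^+}^{\otimes n}\bigr)(X_0\otimes I),
\end{equation*}
because $\Delta(\ketbra{\Phi^+}{\Phi^+}^{\otimes n})=2^{-n}\sum_x\ketbra{xx}{xx}$.

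If $a=0$, every term has the form $\ketbra{xx}{yy}$, which lies in the surviving set $i=j$, $k=l$. The state is therefore unchanged and equals $(Z_0\otimes I)\ketbra{\Phi^+}{\Phi^+}^{\otimes n}(Z_0\otimes I)$. Equivalently, $Z_0$ commutes with $e^{-iH_dt}$, and the bilateral invariance $U\otimes U^*$ of the EPR state gives the same conclusion.
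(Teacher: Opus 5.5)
Your proof is correct. It shares the paper's key step, the survival rule of Eq.~\eqref{eq:projection} obtained from the phase $e^{-i(\lambda_i-\lambda_j-\lambda_k+\lambda_l)t}$ plus the non-degeneracy and gap conditions. Your reading of the gap condition as constraining only the case $i\neq j$ is the right one, since read literally it fails for $i=j$, $k=l$. Where you differ from the paper is in how you evaluate the twirl afterwards.

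The paper switches to the Pauli basis. It writes $(P\otimes I)\ketbra{\Phi^+}{\Phi^+}^{\otimes n}(P\otimes I)$ pairwise as products $(II\pm XX)(II\pm ZZ)$, with $(II-ZZ)$ on the pairs where $X_0$ acts. It then argues that $\mathcal{N}_{H_d}$ annihilates any term containing a factor $(II-ZZ)_j$ together with an $XX_i$. This leaves only $\bigotimes_{I_-}(II-ZZ)\bigotimes_{I_+}(II+ZZ)$, which it identifies as $\bigotimes_{I_-}\frac12(\ketbra{\Psi^+}{\Psi^+}+\ketbra{\Psi^-}{\Psi^-})\bigotimes_{I_+}\frac12(\ketbra{\Phi^+}{\Phi^+}+\ketbra{\Phi^-}{\Phi^-})$.

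You instead expand the error-conjugated state directly into matrix units $\ketbra{x\oplus a,x}{y\oplus a,y}$ and apply the survival rule term by term. Since $a\neq 0$, no term is of the $\ketbra{ii}{jj}$ type, so only $x=y$ survives, and there the sign $(-1)^{b\cdot(x\oplus y)}$ is trivially $1$.

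What each route buys:
\begin{itemize}
\item Your version is shorter and keeps normalization and signs explicit; the paper drops the per-pair factors of $\frac14$ and uses a loose sign notation.
\item Your version treats $X_0\neq I$ and $X_0=I$ in one computation; the paper handles the latter separately by commutation.
\item The paper's stabilizer-style bookkeeping shows the output at once as a Bell-diagonal state with fully randomized phase errors on every pair, which is the form used later (e.g.\ phase error rate $\frac12$ in the QKD theorem). That structure is equally readable from your $2^{-n}\sum_x\ketbra{x\oplus a,x}{x\oplus a,x}$.
\end{itemize}
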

This theorem shows that, under diagonal Hamiltonian twirling, any bit-flip error induces a global complete dephasing channel, whereas phase errors remain unattended. Since the complete dephasing channel $\Delta$ can be expressed as $\Delta(\rho)=\sum_{P\in\langle I,Z\rangle^{\otimes n}} P\rho P$, measuring $m$ qubits in the Hadamard basis yields identical outcomes for Alice and Bob with probability only $2^{-m}$. Consequently, bit-flip errors are detected and suppressed exponentially, and the residual noise on the remaining $n-m$ pairs is almost exclusively $Z$-type. 

The above example suggests a general intuition: errors that do not commute with the Hamiltonian are scrambled by the evolution and can therefore be efficiently detected and suppressed. To extend this picture beyond diagonal Hamiltonians, we must rigorously quantify the dynamical consequences of this non-commutativity. A natural physical quantity for this purpose is the OTOC, defined for two operators $V$ and $W$ under unitary evolution $U(t)=e^{-iHt}$ as
\begin{equation}
    \mathrm{OTOC}_{V,W}^{U(t)} = \frac{1}{d}\,\mathrm{Tr}\!\left( \tilde{V}^\dagger\, W^\dagger\, \tilde{V}\, W \right),
\end{equation}
where $\tilde{V}=U^\dagger(t) V U(t)$. We add a normalization factor $1/d$ here with $d$ is the Hilbert space dimension such that for commuting Pauli operators $\mathrm{OTOC}$ can be normalized to $1$. The decay of the OTOC over time quantifies  the speed at which local information is scrambled into global information. In the following, we establishes the relationship between average OTOC of a Hamiltonian and the output fidelity, yield of the distillation protocols.

\begin{lemma}\label{lemma:OTOC}
Consider a Hamiltonian distillation protocol for $n$ noisy EPR pairs $\Lambda\!\left(\ket{\Phi^+}\bra{\Phi^+}^{\otimes n}\right)$ subject to Pauli noise $\Lambda(\rho)=\sum_{P\in\mathsf{P}} c_P\, P\rho P$ on Alice's system. Hamiltonian $H$ together with a time ensemble $\mu$ is adopted to produce the channel $\mathcal{N}_H$ in the twirling step. Error detection is implemented by Alice measuring a POVM $M=\{M_x\}$ and Bob measuring the conjugate $M^{*}=\{M_x^{*}\}$ on $m$ qubits, chosen such that $\sum_x \mathrm{Tr}\!\left(M_x \otimes M_x^{*} \, \ketbra{\Phi^+}{\Phi^+}^{\otimes n}\right) = 1$. The outcome corresponding to $I-\sum_x M_x \otimes M_x^{*}$ is interpreted as detected error and discarded. After post-selection, the output fidelity $f_{\mathrm{output}}$ and yield $Y$ are given by
\begin{equation}
\begin{aligned}
    f_{\mathrm{output}} &= \frac{1}{1+\mathrm{OTOC}_{\Lambda,M}^{\mathcal{N}_H}/c_I}, \\
    Y &= \bigl(c_I+\mathrm{OTOC}_{\Lambda,M}^{\mathcal{N}_H}\bigr)\,(1-m/n).
\end{aligned}
\end{equation}
Here $c_I$ is the probability of obtaining identity channel in noise $\Lambda$. The averaged OTOC is defined as
\begin{equation}
\begin{aligned}
    \mathrm{OTOC}_{\Lambda,M}^{\mathcal{N}_H}
    = \int_t\,
    \sum_{P\in\mathsf{P}\backslash I}
    \sum_x
    c_P \,
    \mathrm{OTOC}^{e^{-iHt}}_{P,M_x}\cdot\mu(t).
\end{aligned}
\end{equation}
\end{lemma}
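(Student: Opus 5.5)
We compute the acceptance probability of each Pauli error and express it as an OTOC. The whole argument runs through the transpose trick $(A\otimes I)\ket{\Phi^+}^{\otimes n}=(I\otimes A^T)\ket{\Phi^+}^{\otimes n}$, together with the invariance of $\ket{\Phi^+}^{\otimes n}$ under $U\otimes U^*$ noted after the definition of $\mathcal{N}_H$.

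\textbf{Step 1: reduce to a single error and a fixed time.} Write the twirled noisy state as
\begin{equation}
\int_t \sum_P c_P\,(U_tPU_t^\dagger\otimes I)\ketbra{\Phi^+}{\Phi^+}^{\otimes n}(U_tPU_t^\dagger\otimes I)^\dagger\,\mu(t),
\end{equation}
where $U_t=e^{-iHt}$. We moved $U_t\otimes U_t^*$ through the EPR state using the invariance property. Hence, for fixed $t$, the twirl only conjugates the Pauli error into $\tilde P_t = U_t P U_t^\dagger$. (The sign convention may need adjusting, $U_t$ versus $U_t^\dagger$, to match the paper's $\tilde V$. Since $\mu$ is symmetric under $t\to-t$ in the Fourier-delta idealization, this does not matter.)

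\textbf{Step 2: acceptance probability of each branch.} For a fixed branch we compute
\begin{equation}
p_P(t)=\sum_x \bra{\Phi^+}^{\otimes n}(\tilde P_t^\dagger\otimes I)(M_x\otimes M_x^*)(\tilde P_t\otimes I)\ket{\Phi^+}^{\otimes n}.
\end{equation}
Here $M_x$ is padded with identities outside the $m$ measured qubits. Use $(I\otimes M_x^*)\ket{\Phi^+}^{\otimes n}=(M_x^\dagger\otimes I)\ket{\Phi^+}^{\otimes n}$; $M_x$ is a POVM element, so it is Hermitian and $M_x^{*T}=M_x^\dagger=M_x$. Also use $\bra{\Phi^+}^{\otimes n}(A\otimes I)\ket{\Phi^+}^{\otimes n}=\mathrm{Tr}(A)/d$. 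Together these give
\begin{equation}
p_P(t)=\frac1d\sum_x\mathrm{Tr}\bigl(\tilde P_t^\dagger M_x\tilde P_t M_x\bigr)=\sum_x\mathrm{OTOC}^{U_t}_{P,M_x}.
\end{equation}
For $P=I$, the normalization hypothesis $\sum_x \mathrm{Tr}(M_x\otimes M_x^*\ketbra{\Phi^+}{\Phi^+}^{\otimes n})=1$ gives $p_I=1$. Averaging over $t$ and summing over $P\neq I$ with weights $c_P$ gives the total acceptance probability
\begin{equation}
p_{\rm acc}=c_I+\mathrm{OTOC}^{\mathcal{N}_H}_{\Lambda,M}.
\end{equation}

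\textbf{Step 3: fidelity and yield.} The post-selected state contains the perfect component with weight $c_I$, because the identity branch passes with certainty and remains $\ket{\Phi^+}^{\otimes n}$. The other branches contribute the OTOC weight. Here we must argue that the non-identity accepted branches carry no overlap with the perfect EPR state on the remaining $n-m$ pairs. This holds exactly if one counts any non-identity twirled error as a failure: fidelity is then taken with respect to the ideal branch, i.e.\ $f=c_I/p_{\rm acc}$. This is the step I expect to be most delicate, since a scrambled error could in principle retain partial overlap with the EPR state on the unmeasured qubits. I would first try to justify it via the Pauli twirling step. That twirl makes the noise diagonal in the Bell basis, so the identity branch is the only one proportional to $\ketbra{\Phi^+}{\Phi^+}^{\otimes n}$, and the paper's notion of output fidelity is exactly the weight on that branch. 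This yields
\begin{equation}
f_{\rm output}=\frac{c_I}{c_I+\mathrm{OTOC}}=\frac{1}{1+\mathrm{OTOC}/c_I}.
\end{equation}
Finally, the yield is the acceptance probability times the fraction $(n-m)/n$ of pairs not consumed by measurement. This gives
\begin{equation}
Y=\bigl(c_I+\mathrm{OTOC}^{\mathcal{N}_H}_{\Lambda,M}\bigr)(1-m/n).
\end{equation}
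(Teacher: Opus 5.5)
Your proposal is correct and is essentially the paper's proof. The paper likewise evaluates, for each Pauli error, $\bra{\Phi^+}^{\otimes n}(A\otimes B)\ket{\Phi^+}^{\otimes n}=\mathrm{Tr}(AB^T)/d$ to get the detection probability $1-\sum_x\mathrm{OTOC}^{e^{-iHt}}_{P,M_x}$; its ordering of the twirl amounts to your $t\to-t$ adjustment, which is harmless under the idealized $\mu$. It then substitutes this into $f=c_I/(c_I+(1-c_I)(1-R))$ and $Y=(c_I+(1-c_I)(1-R))(1-m/n)$.

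On your Step 3, the Pauli-twirl argument does not actually show that accepted non-identity branches have zero overlap with $\ket{\Phi^+}^{\otimes(n-m)}$ on the retained pairs. For example, take a diagonal $H$, computational-basis $M_x$, and $P=Z$ on a measured qubit: this branch is accepted with certainty and leaves perfect EPR pairs on the retained qubits. So the stated equality holds only under the convention that $f_{\mathrm{output}}$ is the post-selected weight of the error-free branch. That convention is implicit in the paper's proof, and it gives a lower bound on the true output fidelity.
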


This theorem shows that lower averaged OTOC leads to higher output fidelity. The behavior of the yield is more subtle. While reducing the averaged OTOC alone tends to decrease the it, the yield also depends explicitly on the number of measured qubits $m$. Crucially, for Hamiltonians with stronger scrambling capability, the same low averaged OTOC can be achieved with a smaller measurement size $m$. This reduction in $m$ compensates for the OTOC-induced decrease in yield. Consequently, Hamiltonians with strong scrambling power remain favorable choices for the distillation protocol. 

Strong scrambling, actually, is not a fine-tuned property but rather generic in large Hilbert spaces, suggesting that good distillation performance should be typical for a broad class of Hamiltonians. To make this statement precise, we consider the spectral decomposition $H = U D U^{\dagger}$, where $D$ is diagonal and satisfies the non-degenerate eigenvalue and gap conditions, and $U$ is the diagonalizing unitary. The most general $U$ could be taken from the Haar random and we show that the average of this Haar random ensemble is good. The most general $U$ comes from Haar measure. The next theorem shows that the average performance of $U$ from Haar measure is good. Treating $U$ as a generic unitary drawn from the Haar measure, we show in the next theorem that these general Hamiltonians lead to favorable distillation performance.

\begin{theorem}\label{thm:Haar}
    Consider a Hamiltonian distillation protocol for $n$ noisy EPR pairs $\Lambda(\ket{\Phi^+}\bra{\Phi^+}^{\otimes n})$ subject to Pauli noise $\Lambda(\rho)=\sum_{P\in\mathsf{P}} c_P\, P\rho P$ on Alice’s system, with probability of identity channel being $c_I$. Consider a $n$-qubit Hamiltonian be $H = U D U^{\dagger}$, where $U$ is drawn from the Haar measure in the twirling step. After Hamiltonian twirling with $H$ and measuring $m$ qubit pairs in the computational basis, the output fidelity and yield converge, in the large $n$ limit, to
    \begin{equation}
    \begin{split}
        f_{\mathrm{out}} &= \frac{c_I}{c_I + 2^{-m}(1 - c_I)}, \\
        Y &= \bigl(c_I (1 - 2^{-m}) + 2^{-m}\bigr)(1 - m/n).
    \end{split}
    \end{equation}
\end{theorem}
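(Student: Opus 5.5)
The plan is to reduce the statement to Lemma~\ref{lemma:OTOC} and then evaluate the Haar average of the averaged OTOC term by term. With measurement in the computational basis on $m$ qubits, the POVM elements are $M_x=\ketbra{x}{x}\otimes I_{n-m}$ for $x\in\{0,1\}^m$. Writing each projector as $M_x=2^{-m}\sum_{Q\in\langle I,Z\rangle^{\otimes m}}(-1)^{x\cdot Q}Q\otimes I$, the sum $\sum_x \mathrm{OTOC}_{P,M_x}$ becomes a weighted sum of Pauli--Pauli OTOCs $\frac1d\mathrm{Tr}(\tilde P Q\tilde P Q')$. Summing over $x$ forces $Q=Q'$, and we obtain
\[
\sum_x \mathrm{OTOC}^{e^{-iHt}}_{P,M_x}=2^{-m}\sum_{Q\in\langle I,Z\rangle^{\otimes m}}\frac1d\mathrm{Tr}\bigl(\tilde P Q\tilde P Q\bigr).
\]
The $Q=I$ term contributes exactly $2^{-m}$ for every $P$, since $\tilde P^2=I$. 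It therefore suffices to show that, for every nonidentity $P$ and nonidentity $Q$, the time-and-Haar average of $\frac1d\mathrm{Tr}(\tilde P Q\tilde P Q)$ vanishes as $n\to\infty$. Given this, the averaged OTOC tends to $2^{-m}(1-c_I)$. Substituting into Lemma~\ref{lemma:OTOC} gives $f_{\mathrm{out}}=c_I/(c_I+2^{-m}(1-c_I))$ and $Y=(c_I+2^{-m}(1-c_I))(1-m/n)=(c_I(1-2^{-m})+2^{-m})(1-m/n)$, as claimed.

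For the time average, I would work in the eigenbasis of $H$. Set $P'=U^\dagger PU$ and $Q'=U^\dagger QU$. Then
\[
\frac1d\mathrm{Tr}(e^{iDt}P'e^{-iDt}Q'e^{iDt}P'e^{-iDt}Q')=\frac1d\sum_{ijkl}P'_{ij}Q'_{jk}P'_{kl}Q'_{li}\,e^{i(\lambda_i-\lambda_j+\lambda_k-\lambda_l)t}.
\]
Using $\int e^{ixt}\mu(t)=1_{x=0}$ together with the non-degenerate eigenvalue and gap conditions, the surviving index tuples are those with $\lambda_i+\lambda_k=\lambda_j+\lambda_l$, namely $(i=j,\,k=l)$ or $(i=l,\,k=j)$, with the overlap $i=j=k=l$ counted once. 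This leaves a finite combination of terms, each quartic in the entries of $P'$ and $Q'$: $\sum_{i,k}P'_{ii}P'_{kk}Q'_{ik}Q'_{ki}$, $\sum_{i,k}|P'_{ik}|^2 Q'_{kk}Q'_{ii}$, and a diagonal correction $\sum_i (P'_{ii})^2(Q'_{ii})^2$. All are divided by $d$.

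The main obstacle is the Haar average of these terms, since each is a degree-$(4,4)$ polynomial in $U,\bar U$ and would require fourth-moment Weingarten calculus. To avoid a full Weingarten computation, I would instead bound each term's expectation. The key inputs are that $P$ and $Q$ are traceless, unitary and Hermitian, and that their Haar-rotated matrix elements behave like nearly independent Gaussians. Concretely, $P'_{ii}$ has mean $0$ and variance $\approx 1/(d+1)$, and $|P'_{ik}|^2\approx 1/d$ for $i\neq k$. Then, for example, $\frac1d\sum_{i,k}\mathbb E|P'_{ik}|^2Q'_{ii}Q'_{kk}$ is dominated by the leading Wick pairing. Tracelessness of $Q$ gives $\sum_i Q'_{ii}=0$, so the leading $O(1)$ contribution cancels and the term is $O(1/d)$. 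Similarly, the $P'_{ii}P'_{kk}Q'_{ik}Q'_{ki}$ term is $O(1/d)$, and the diagonal correction is $O(1/d^2)$. If the Gaussian heuristic proves too loose, the fallback is the exact Weingarten formula for $t=4$, which I would use only for the leading asymptotics in $d=2^n$. Either route gives averages of order $O(2^{-n})$, uniform in $P$ and $Q$. Summing over $Q$ then costs a factor of at most $2^m$, and summing over $P$ is weighted by $c_P$ with $\sum_P c_P\le 1$. The total error term is therefore $O(2^{m-n})$, which vanishes in the large-$n$ limit for fixed (or sublinear) $m$, and this establishes the convergence claimed in Theorem~\ref{thm:Haar}.
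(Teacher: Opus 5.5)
Your argument is correct, but it takes a different route from the paper's.

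The paper never invokes Lemma~\ref{lemma:OTOC}. It writes the detection probability for the full projector $\Pi$ directly as a partially transposed fourfold twirl of $e^{iDt}\otimes e^{-iDt}\otimes e^{-iDt}\otimes e^{iDt}$ and expands it with $S_4$ Weingarten calculus. It then tabulates the time-averaged traces $\mathrm{Tr}(OS_\sigma)$ and the contraction value for every $\pi$. Keeping only the dominant pair $\pi=\sigma=(13)(24)$ gives $1-2^{-m}+O(d^{-1})$.

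You instead Fourier-expand the computational-basis projectors, which produces the $2^{-m}$ exactly as the $Q=I$ component. You also time-average in the eigenbasis before touching $U$, so the Haar step only has to show that three quartic sums are small.

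That Haar step, done by a Gaussian/Wick heuristic with a Weingarten fallback, is the one soft spot. It can be closed without any fourth moments:
\begin{itemize}
\item Using $|P'_{ii}P'_{kk}|\le\frac12(|P'_{ii}|^2+|P'_{kk}|^2)$ and $\sum_k|Q'_{ik}|^2=(Q'^2)_{ii}=1$, you get $\bigl|\frac1d\sum_{i,k}P'_{ii}P'_{kk}|Q'_{ik}|^2\bigr|\le\frac1d\sum_i|P'_{ii}|^2$.
\item Likewise, the second sum is at most $\frac1d\sum_i|Q'_{ii}|^2$, and the diagonal correction is at most $\frac1d\sum_i|P'_{ii}|^2$.
\item For traceless $P$ and $u_i=U|i\rangle$, one has $\mathbb{E}_U\langle u_i|P|u_i\rangle^2=1/(d+1)$.
\end{itemize}
Hence each nontrivial time- and Haar-averaged OTOC is at most $3/(d+1)$ in magnitude. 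Moreover, the prefactor $2^{-m}$ cancels the $2^m-1$ terms of the $Q$-sum, so the total error is $O(2^{-n})$ uniformly in $m$, not $O(2^{m-n})$. Your restriction to fixed or sublinear $m$ is therefore unnecessary, which matters because the regime of interest is $m=\Theta(n)$.

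Your route buys three things:
\begin{itemize}
\item an explicit non-asymptotic error bound;
\item a direct tie to the scrambling narrative of Lemma~\ref{lemma:OTOC};
\item a deterministic per-Hamiltonian sufficient condition, namely that $\frac1d\sum_i|\langle u_i|P|u_i\rangle|^2$ be small for nonidentity Paulis.
\end{itemize}
In particular, the averaged statement already holds whenever the diagonalizing unitary is drawn from a unitary 2-design, whereas the paper's expansion relies on the full fourth-moment structure. In exchange, the paper's computation is a single self-contained evaluation of the detection probability. It could in principle be pushed to exact finite-$d$ corrections by keeping all Weingarten terms.
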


This theorem shows that generic Hamiltonians yield asymptotically high-performance distillation. In particular, measuring $O(n)$ qubits suffices to suppress errors exponentially. For example, under local i.i.d.\ noise models with noise strength below a certain threshold, the protocol achieves an output fidelity exponentially close to $1$.

Note that in this theorem we adopt measurements in the computational basis. This choice is sufficient for generic Hamiltonians because, in high-dimensional Hilbert spaces, two arbitrarily chosen bases are overwhelmingly likely to be close to mutually unbiased~\cite{Wootters1989MUB}. Hamiltonian twirling decoheres the noisy state in the eigenbasis of $H$. When the resulting state is measured in a basis that is mutually unbiased to this eigenbasis, the measurement outcomes are uniformly random, and the probability that Alice and Bob obtain identical outcomes is therefore exponentially small. Since the computational basis is almost mutually unbiased with the eigenbasis of a generic Hamiltonian, it guarantees effective error detection and hence good distillation performance.

We conclude this section by considering a more structured class of Hamiltonians admitting a spectral decomposition of the form $H = C D C^{\dagger}$, where $C$ belongs to the Clifford group. Such Hamiltonians are frustration free and can be written as sums of mutually commuting Pauli operators, $H_c = \sum_i c_i P_i$. In the following, we establish a bound on the output fidelity of the distillation protocol under a local i.i.d.\ depolarizing noise model.

\begin{theorem}\label{thm:clifford} 
    Consider a Hamiltonian distillation protocol for $n$ EPR pairs $\ket{\Phi^+}_{AB}^{\otimes n}$, each subjected to i.i.d. local depolarizing noise of strength $p$ on Alice’s side. Let a Hamiltonian distillation protocol be applied with the following choices: the Hamiltonian used in the twirling step is of the form $H = C D C^{\dagger}$, where $C$ is drawn from the Clifford group and $D$ is diagonal. The error-detection step is implemented via a projective measurement defined by the orthogonal projectors $\left\{C \left( \prod_{j=1}^{m} \frac{I + (-1)^{s_j} X_j}{2} \right)\otimes I^{\otimes (n-m)} C^{\dagger}\;:\;\vec{s} \in \{0,1\}^{m}\right\}$. The output fidelity could be lower bounded by:
    \begin{equation}\label{eq:clifford_fidelity}
        f_{\rm{out}}\geq \frac{1}{1+(\Delta+2^{-m}(1-\Delta-(1-3p/4)^{n}))/(1-3p/4)^{n}}.
    \end{equation}
    Here $\Delta$ is defined for an arbitrary integer parameter $d\leq n$, given by:
    \begin{equation}
        \Delta=2^{-n\Bigl(1-H(d/n)-d/n\log_2{3}-\delta\Bigr)}+2^{-n\Bigl(\log_2\frac{2}{4-3p}+d/n\log_2\frac{4-3p}{p}\Bigr)}.
    \end{equation}
    where $H(\cdot)$ denotes the binary entropy function, and $\delta > 0$ is an arbitrarily small constant.
\end{theorem}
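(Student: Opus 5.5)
Starting from Lemma~\ref{lemma:OTOC}, we have $f_{\rm out} = 1/(1+\mathrm{OTOC}_{\Lambda,M}^{\mathcal N_H}/c_I)$. For i.i.d.\ depolarizing noise of strength $p$, each qubit carries $I$ with probability $1-3p/4$ and each of $X,Y,Z$ with probability $p/4$, so $c_I=(1-3p/4)^n$. The task therefore reduces to showing
\[
\mathrm{OTOC}_{\Lambda,M}^{\mathcal N_H}\le \Delta+2^{-m}\bigl(1-\Delta-(1-3p/4)^n\bigr).
\]
We first conjugate by $C$. Writing $H=CDC^\dagger$, we have $e^{-iHt}=Ce^{-iDt}C^\dagger$, and the projectors are $C\Pi_{\vec s}C^\dagger$ with $\Pi_{\vec s}$ the $X$-basis projectors on the first $m$ qubits. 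Each OTOC term $\mathrm{OTOC}^{e^{-iHt}}_{P,M_x}$ then equals the corresponding term for $e^{-iDt}$, the Pauli $Q=C^\dagger P C$, and $\Pi_{\vec s}$. Since $C$ permutes Paulis, this only relabels the error distribution: $Q$ has the same probability $c_P$, though its weight may differ from that of $P$.

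Next we evaluate the twirled terms with the non-degenerate gap assumption and $\int e^{ixt}\mu(t)=1_{x=0}$, just as in Theorem~\ref{thm:diagonal_twirling}. Write $Q=X_QZ_Q$. If the $X$-part $X_Q\neq I$, twirling by the diagonal $D$ fully dephases, and the state $(Q\otimes I)\ket{\Phi^+}^{\otimes n}$ becomes $(X_Q\otimes I)\Delta(\Phi^{+\otimes n})(X_Q\otimes I)$. Passing the $X$-basis test on $m$ pairs then has probability exactly $2^{-m}$, because dephasing is a uniform mixture over $Z$-type Paulis and each of the $m$ tested pairs fails with probability $1/2$. If instead $Q$ is purely $Z$-type, the twirl does nothing, and the test passes iff $Z_Q$ acts trivially on the first $m$ qubits; we bound this contribution by $1$. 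Summing over nonidentity $P$ gives
\[
\mathrm{OTOC}\le \Pr[C^\dagger P C\in\langle Z\rangle^{\otimes n}\setminus I]+2^{-m}\bigl(1-c_I-\Pr[\cdot]\bigr)\le \Delta'+2^{-m}(1-c_I-\Delta'),
\]
where $\Delta'$ is the probability of the bad event. The right-hand side is monotone increasing in $\Delta'$ because $1-2^{-m}\ge0$, so any upper bound $\Delta'\le\Delta$ gives exactly the form of the claim.

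The main obstacle is bounding $\Delta'$. The set $S=C\langle Z\rangle^{\otimes n}C^\dagger\setminus\{I\}$ is the nonidentity part of a stabilizer group with $2^n-1$ elements, and we must bound the depolarizing mass it carries for arbitrary $C$. We split by weight at the threshold $d$.

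\textbf{Weight $\le d$.} For elements of weight at most $d$, the worst case would be the $2^n$ elements of $S$ sitting at the highest-probability Paulis. There are at most $\binom{n}{\le d}3^d\le 2^{n(H(d/n)+\frac dn\log_2 3)}$ Paulis of weight at most $d$, each of probability at most $(p/4)^{w}(1-3p/4)^{n-w}$. The argument I would try is the following. Since $S$ has size $2^n$ inside a set of size $4^n$, an averaging or combinatorial argument (invoking the absorbed slack $\delta$) should bound the mass of low-weight elements of $S$ by $2^{-n(1-H(d/n)-\frac dn\log_23-\delta)}$, i.e.\ by their count times $2^{-n}$ up to slack. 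This step is the least certain one, since it may really require averaging over $C$ (random Clifford) rather than a worst-case $C$. If worst-case fails, I would state the bound for typical $C$, using the fact that a random Clifford maps a nonidentity Pauli uniformly onto the $4^n-1$ nonidentity Paulis, so that $|S|\cdot$(low-weight count)$/4^n$ gives the first term.

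\textbf{Weight $>d$.} Each such Pauli has probability at most $(1-3p/4)^{n-d}(p/4)^d$, and there are at most $2^n$ of them in $S$. Their total mass is therefore at most $2^n(1-3p/4)^{n-d}(p/4)^d$. Rewriting in the form $2^{-n(\log_2\frac{2}{4-3p}+\frac dn\log_2\frac{4-3p}{p})}$ is a direct computation. Adding the two contributions yields $\Delta'\le\Delta$, and hence the theorem.
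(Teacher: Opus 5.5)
Your proposal, taken along its random-Clifford branch, is essentially the paper's proof. Both conjugate by $C$, use Theorem~\ref{thm:diagonal_twirling} to get pass probability exactly $2^{-m}$ for every error outside $S$ (the paper's $\mathsf{P}_{ud}$), and bound the mass on $S$ in two parts: for weight $\le d$, a union bound averaged over the Clifford group, giving $\frac{2^n-1}{4^n-1}$ times the number of Paulis of weight $\le d$; for weight $>d$, the estimate $2^n(p/4)^d(1-3p/4)^{n-d}$, valid for every $C$.

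Drop the worst-case-over-$C$ branch, because it is false. For $C=I$, $S$ carries mass $(1-p/2)^n-(1-3p/4)^n$, which is exponentially larger than $\Delta$ at, e.g., $p=0.05$, $d/n=0.16$. So the theorem is inherently a statement about random $C$. Once you average over $C$, you get directly $\mathbb{E}_C[\Delta']=(1-c_I)\frac{2^n-1}{4^n-1}<2^{-n}\le\Delta$, so the split at weight $d$ is not even needed.
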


Although Theorem~\ref{thm:clifford} assumes a joint Clifford measurement, we expect comparable performance when the error-detection step is implemented using local measurements. Our intuition rests on partitioning Pauli errors into an undetectable set $\mathsf{P}_{ud}=C\{ I,Z\}^{\otimes n} C^{\dagger}\backslash \{I\}$ and a detectable set of errors, $\mathsf{P}\backslash\mathsf{P}_{ud}$. While Pauli operators in $\mathsf{P}_{ud}$ cannot be scrambled and thus harder to detect, we have established in the proof for Theorem~\ref{thm:clifford} that the occurrence probability of these errors is also very small and can be bounded. For the detectable set $\mathsf{P}\backslash \mathsf{P}_{ud}$, the initial error probabilities are concentrated on low-weight Paulis. However, the twirling effectively uniformizing the probability distribution across the cosets $CX\{ I,Z\}^{\otimes n}C^{\dagger}$ (where $X\in \{ I,Z\}^{\otimes n}\backslash \{I\}$). This process smears the probability mass from low-weight Paulis to operators of weight $O(n)$, which ensures high detectability even with local measurements. Analogous to Theorem~\ref{thm:Haar}, by measuring $m=O(n)$ qubits, the output fidelity can be made exponentially close to $1$ as long as the local depolarizing noise strength $p$ remains below a certain threshold. 

\section{Noise Tolerance}\label{sec:tolerance}

In the theory of entanglement distillation, an important performance metric is noise tolerance, defined as the highest physical error rate under which a protocol can asymptotically produce a nonzero rate of EPR pairs with fidelity arbitrarily close to $1$. This metric characterizes the robustness of a protocol against severe noise. Noise tolerance is of practical interest for applications such as quantum key distribution, where ultra-long-distance transmission inevitably leads to high error rates that must be suppressed. In QKD, the achievable key rate after classical post-processing is directly determined by the number of high-fidelity EPR pairs that can be distilled. As a result, the noise tolerance of a distillation protocol is closely related to the maximum distance over which a QKD link can yield a nonzero key rate. Below we study the noise tolerance and key rate of a QKD protocol that employs Hamiltonian entanglement distillation as a pre-processing step.

\begin{theorem}\label{lemma:diagonal+classical}
    Consider an entanglement-based QKD protocol between Alice and Bob. Assume that the shared EPR pairs are subjected to local i.i.d.\ depolarizing noise of strength $p$ on Bob’s side. Alice and Bob group the noisy pairs into blocks of size $n$ and perform Hamiltonian twirling using a diagonal Hamiltonian. They then measure $m$ qubit pairs in the Hadamard basis for error detection. The surviving qubit of error detection are measured in the computational basis to generate key bits, followed by standard privacy amplification. The resulting secret key rate is given by:
    \begin{equation}
    \begin{split}
        R&=\Bigl((1-\frac{3}{4}p)^{m}(1-\frac{p}{2})^{n-m}+2^{-m}(1-(1-\frac{p}{2})^{n})\Bigr)(1-H(e_p))(1-m/n),
    \end{split}
    \end{equation}
    where $H(\cdot)$ denotes the Shannon entropy and $e_p=\frac{2^{-(m+1)}(1-(1-\frac{p}{2})^{n})+\frac{p}{4-2p}(1-\frac{3}{4}p)^{m}(1-\frac{p}{2})^{n-m}}{2^{-m}(1-(1-\frac{p}{2})^{n})+(1-\frac{3}{4}p)^{m}(1-\frac{p}{2})^{n-m}}$ is the effective phase error rate inferred from the error-detection step.
    This protocol tolerates local depolarizing noise up to a threshold $p_{\mathrm{tol}}$, where $p_{\mathrm{tol}}\in[0,2/3]$ is the solution to
    \begin{equation}
        (1-\frac{p}{2})^{1-\frac{m}{n}}(2-\frac{3p}{2})^{\frac{m}{n}}=1.
    \end{equation}
\end{theorem}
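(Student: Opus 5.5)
The plan is to compute the post-selected state exactly using Theorem~\ref{thm:diagonal_twirling}, then read off the key rate from the effective phase error rate, and finally take the limit $n\to\infty$ with $m/n$ fixed to get the threshold.

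\textbf{Step 1: split the noise.} I expand Bob's i.i.d.\ depolarizing noise on the block as a Pauli mixture $\sum_P c_P$, where each qubit carries $I$ with probability $1-\tfrac{3p}{4}$ and $X,Y,Z$ with probability $\tfrac p4$ each. Moving the error to Alice's side via $U\otimes I\ket{\Phi^+}=I\otimes U^T\ket{\Phi^+}$ only changes Paulis up to sign, so Theorem~\ref{thm:diagonal_twirling} applies. I split the Paulis into two classes:
\begin{itemize}
\item \emph{$X$-free errors} $P\in\{I,Z\}^{\otimes n}$, with total weight $(1-\tfrac p2)^n$. These are left invariant by $\mathcal{N}_{H_d}$.
\item \emph{All remaining errors}, with total weight $1-(1-\tfrac p2)^n$. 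By Theorem~\ref{thm:diagonal_twirling} these become $(X_0\otimes I)\Delta(\Phi^{+\otimes n})(X_0\otimes I)$.
\end{itemize}

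\textbf{Step 2: pass probabilities.} The Hadamard-basis check on $m$ pairs passes as follows in each class.
\begin{itemize}
\item An $X$-free error passes exactly when it acts as $I$ on the $m$ measured qubits. Given that it is $X$-free, the remaining $n-m$ qubits are unconstrained. This gives weight $(1-\tfrac{3p}{4})^m(1-\tfrac p2)^{n-m}$.
\item A dephased component passes with probability $2^{-m}$. This uses $\Delta(\rho)=\sum_{Q\in\langle I,Z\rangle^{\otimes n}}Q\rho Q$ (up to normalization) and the fact that the $X_0$ conjugation does not change Hadamard-basis agreement statistics up to signs that average out.
\end{itemize}
Adding the two contributions and multiplying by the fraction $1-m/n$ of unmeasured pairs gives the prefactor of $R$.

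\textbf{Step 3: phase error rate.} On the surviving $n-m$ qubits the two branches contribute differently.
\begin{itemize}
\item In the $X$-free branch, each qubit is independently $I$ or $Z$, and $Z$ has conditional probability $\tfrac{p/4}{1-p/2}=\tfrac{p}{4-2p}$.
\item In the dephased branch, the reduced state of each surviving pair is completely dephased in the computational basis, so its phase error rate is $\tfrac12$.
\end{itemize}
Weighting by the post-selection probabilities from Step 2 gives exactly the stated $e_p$.

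I then invoke the standard Shor--Preskill/entropic-uncertainty argument: privacy amplification costs $H(e_p)$ per surviving bit. The bit-error side is handled by the accompanying classical reconciliation as in the paper's convention. This yields $R$.

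\textbf{Main obstacle.} Justifying that the key rate depends only on the per-qubit averaged $e_p$ is the delicate step. The post-selected state is a block-correlated mixture, not an i.i.d.\ one. I would handle this by applying a random permutation together with the Pauli twirl within and across blocks, which symmetrizes the state. A de Finetti/postselection-type argument then lets the asymptotic phase-error-rate bound apply to the averaged rate.

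\textbf{Threshold.} For the threshold I take $n\to\infty$ with $r=m/n$ fixed. Write $A=(1-\tfrac{3p}{4})^m(1-\tfrac p2)^{n-m}$ and $B=1-(1-\tfrac p2)^n\to1$.
\begin{itemize}
\item A direct comparison shows that $e_p<\tfrac12$ is equivalent to $\tfrac{p}{4-2p}<\tfrac12$ whenever $A>0$. However, $e_p$ converges to a value strictly below $\tfrac12$ only if $A$ is not exponentially dominated by $2^{-m}B$.
\item If instead $2^{m}A\to0$, then $e_p\to\tfrac12$ and $R\to0$.
\end{itemize}
So the tolerance is where $2^mA$ has zero exponential rate, that is,
\[
\bigl(2(1-\tfrac{3p}{4})\bigr)^{r}(1-\tfrac p2)^{1-r}=1 .
\]
Since $2(1-\tfrac{3p}{4})=2-\tfrac{3p}{2}$, this is the stated equation. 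Monotonicity in $p$ gives a unique root, and this root lies in $[0,2/3]$ because the left side is at least $1$ at $p=0$ and is at most $1$ at $p=\tfrac23$.
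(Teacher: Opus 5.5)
Your pass probabilities and your $e_p$ are correct and agree with the paper. You merge its ``pure EPR'' and ``$Z$-only'' survivors into the single term $A=(1-\frac{3p}{4})^m(1-\frac p2)^{n-m}$, which is fine. The gap is that you never compute the residual \emph{bit} error rate on the surviving pairs, and that computation is the core of the paper's proof. The stated $R$ has no error-correction cost, and the protocol performs only privacy amplification; this is legitimate only if bit errors vanish asymptotically. Your remark that ``the bit-error side is handled by classical reconciliation'' does not give the stated formula: reconciliation costs $H(e_b)$ per bit, so you would get $1-H(e_b)-H(e_p)$ rather than $1-H(e_p)$.

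What is missing is $e_b=\frac{2^{-m}\,p/2}{A+2^{-m}B}=\frac{p/2}{B+2^mA}$, with $B=1-(1-\frac p2)^n$. This follows from three facts. The $X$-free branch has no bit errors. Conditional on $X_0\neq I$, each unmeasured qubit carries an $X$ or $Y$ with probability $\frac{p/2}{B}$. The $2^{-m}$ pass probability is independent of $X_0$, since $X_0\otimes I$ commutes with the $X\otimes X$ checks. The paper then notes that $e_b$ is exponentially suppressed exactly when $2^mA=(1-\frac p2)^{n-m}(2-\frac{3p}{2})^m$ grows exponentially. This both justifies dropping reconciliation and \emph{is} the paper's derivation of the threshold, since with $e_b\to0$ any $e_p<\frac12$ leaves $1-H(e_p)>0$.

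Your route to the threshold via $e_p\to\frac12$ reaches the same equation only because $e_b$ and $e_p$ depend on the noise through the same ratio $2^mA/B$. On its own it only describes what fails above $p_{\mathrm{tol}}$; showing that the protocol works below $p_{\mathrm{tol}}$ needs $e_b\to0$. Also, ``$R\to0$'' does not separate the two regimes. For $p>0$ and fixed $m/n>0$ the prefactor $A+2^{-m}B$ decays exponentially on both sides, so $R\to0$ always. What separates them is the limit of the per-surviving-bit factor once bit errors are included:
\begin{itemize}
\item Below $p_{\mathrm{tol}}$: $e_b\to0$ and $e_p\to\frac{p}{4-2p}$, so $1-H(e_b)-H(e_p)\to1-H(\frac{p}{4-2p})>0$.
\item Above $p_{\mathrm{tol}}$: $e_b\to\frac p2$ and $e_p\to\frac12$, so the factor tends to $-H(\frac p2)<0$.
\end{itemize}
Adding the $e_b$ computation closes the proof. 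The de Finetti symmetrization you flag is extra rigor the paper does not attempt and is not needed to match it.
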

The threshold is obtained when the bit error could be exponentially suppressed. The noise tolerance depends on the number of qubits measured during the error-detection step. The strongest noise tolerance should be achieved when $n-1$ qubits are measured and only a single EPR pair is retained in each group. These discussions lead us to the following result:

\begin{corollary}\label{thm:tolerance}
    The diagonal Hamiltonian error detection plus classical error correction scheme could achieve $33.3\%$ noise tolerance for i.i.d local depolarizing noise model. 
\end{corollary}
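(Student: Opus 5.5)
The plan is to use the threshold equation of Theorem~\ref{lemma:diagonal+classical} and optimize it over the ratio $r=m/n$. By that theorem, the protocol has a positive key rate whenever the bit-error contribution is exponentially suppressed, which happens when
\begin{equation}
    g_r(p) := \Bigl(1-r\Bigr)\log_2\!\Bigl(1-\frac{p}{2}\Bigr)+r\log_2\!\Bigl(2-\frac{3p}{2}\Bigr) > 0 .
\end{equation}
The tolerance $p_{\mathrm{tol}}(r)$ is the root of $g_r(p)=0$ on $[0,2/3]$. Since each logarithm is decreasing in $p$, $g_r$ is strictly decreasing, so the root is unique, and $p$ is tolerated iff $p<p_{\mathrm{tol}}(r)$. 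The corollary then amounts to showing $\sup_r p_{\mathrm{tol}}(r)=1/3$, approached in the limit the text describes, namely $m=n-1$ with $n\to\infty$, so that $r\to 1$.

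First, at fixed $p<2/3$ the coefficient of $r$ in $g_r(p)$ is $\log_2\frac{2-3p/2}{1-p/2}$. This is positive exactly when $2-\tfrac{3p}{2}>1-\tfrac{p}{2}$, i.e.\ $p<1$, which always holds here. So $g_r(p)$ is increasing in $r$, and $p_{\mathrm{tol}}(r)$ is increasing in $r$ as well.

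Second, take the limit $r\to1$. There $g_1(p)=\log_2(2-3p/2)$, which vanishes at $p=2/3$. Taken literally, this would suggest a tolerance of $2/3$, not $1/3$. I have to reconcile this with the claimed $33.3\%$, and I expect this to be the main obstacle.

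The resolution I would pursue is that the key rate in Theorem~\ref{lemma:diagonal+classical} also carries the factor $(1-H(e_p))$. Positivity of $R$ therefore requires the surviving phase error $e_p$ to stay below $1/2$, not only the exponential suppression of the $2^{-m}$ term. Substituting $p/(4-2p)$ as the per-pair residual phase error in $e_p$ gives $e_p\to p/(4-2p)$ once the first term is suppressed. I would then check which condition binds and where.

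If the $1/3$ figure instead refers to a depolarizing parametrization $p_{\mathrm{dep}}=3p/4$ (total Pauli error probability per pair), I would convert between the two. At the threshold $p=2/3$ from the bit-error condition this gives $3p/4=1/2$, which does not match either. A cleaner route is to express everything through the qubit-error rate: $p/2=1/3$ when $p=2/3$. This is the standard $33.3\%$ (equivalently, $1/3$ bit/phase error rate) bound of two-way protocols for i.i.d.\ depolarizing noise.

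So the concrete final step is as follows. Compute the bit error rate $e_b=p/2$ of the depolarizing channel. Show that $p_{\mathrm{tol}}(r)\uparrow 2/3$ as $r\to1$ via the monotonicity above. Verify that at $p<2/3$ the phase error $e_p\to p/(4-2p)<1/2$, so that $1-H(e_p)>0$. Finally, note that the choice $m=n-1$ keeps $(1-m/n)=1/n>0$ and the prefactor, of order $2^{-m}$ for large $n$, strictly positive. This yields a nonzero rate for every $e_b<1/3$, i.e.\ $33.3\%$ tolerance. Matching this to the known upper bound for the qubit error rate would be cited rather than reproved.
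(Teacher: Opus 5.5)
Your proposal is correct and follows the paper's route. Taking $m/n\to 1$ (i.e.\ $m=n-1$) in the threshold equation of Theorem~\ref{lemma:diagonal+classical} sends $p_{\mathrm{tol}}\to 2/3$, and the paper likewise reads $33.3\%$ as the per-qubit bit/phase error rate $p/2=1/3$ at depolarizing strength $p=2/3$, where the fidelity $1-3p/4$ reaches $1/2$ and the state becomes separable. So the apparent $2/3$-versus-$1/3$ ``obstacle'' is only this change of parametrization, and the phase-error detour is not needed.

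Your monotonicity-in-$r$ argument and the check $e_p\to p/(4-2p)<1/2$ are harmless additions that the paper leaves implicit. One small slip: for $p<2/3$ the survival prefactor is dominated by $(1-3p/4)^m(1-p/2)^{n-m}$, not by the $2^{-m}$ term. Only its positivity is used, so this does not affect the argument.
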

Since an EPR pair subjected to local depolarizing noise with strength $p=2/3$, or equivalently, error rate of approximately $33.3\%$, becomes separable, this result reaches the theoretical upper bound. Meaning that we could distill perfect EPR pairs as long as the initial local depolarized noisy states pertain nontrivial entanglement.

\section{Simulation}\label{sec:simulation}

In the previous sections, we established theoretical results for general Hamiltonians, demonstrating their efficacy in constructing entanglement distillation protocols. However, real-world Hamiltonians are typically constrained to specific families. To bridge the gap between theory and practice, we present numerical simulations using representative device-native Hamiltonians to demonstrate that our protocol maintains high performance under these experimentally relevant models and requires only experimentally feasible evolution times. We also discuss the impact of skipping Pauli twirling in the presence of non-Pauli noise channels. Finally, we provide additional numerical results on noise tolerance and representative applications including quantum key distribution and quantum repeaters.

\subsection{Performance of Device-native Hamiltonian}

The Hamiltonian entanglement distillation protocol is broadly applicable to a variety of quantum systems. Here, we illustrate its application using two representative device-native Hamiltonians: those of Rydberg atom systems and trapped ion systems. In a Rydberg system, a qubit is encoded using the ground state $\ket{g}$ and the Rydberg state $\ket{r}$ of a neutral atom. The system Hamiltonian can be written as
\begin{equation}
    H_{rydberg}=\frac{\Omega}{2}\sum_{j}(e^{i\phi}\ket{g_j}\bra{r_j}+h.c.)-\Delta\sum_{j}\hat{n}_j+\sum_{j<k}V_{jk}\hat{n}_j\hat{n}_k,
\end{equation}
where $\hat{n}=\ket{r}\bra{r}$. Following the parameters reported in~\cite{BloqadeHamiltonians}, we set the driving frequency to $\Omega = 2\pi \times 1~\mathrm{MHz}$ and the detuning to $\Delta = 2\pi \times 2.5~\mathrm{MHz}$. The interaction strength is given by $V_{jk} = C/|x_j - x_k|^6$, with $C = 2\pi \times 862890~\mathrm{MHz}\cdot\mu\mathrm{s}^6$. For simplicity, we assume the atoms are arranged in a linear chain with uniform spacing of $6~\mu\mathrm{m}$, slightly below the blockade radius, so that the interaction strength could be stronger.

For trapped-ion systems, effective spin–spin interactions can be engineered to realize long-range interaction Hamiltonians of the form~\cite{Monroe21RMP}
\begin{equation}
H_{\rm ion} = \sum_{i<j} J_{ij} \hat{\sigma}_i^x \hat{\sigma}_j^x + B \sum_i \hat{\sigma}_i^z.
\end{equation}
where the couplings are well approximated by a power-law decay $J{ij} = J_0/|i-j|^{\alpha}$, with tunable exponent $\alpha$ typically ranging from $0$ to $3$. We assume the ions are arranged in a uniform linear chain and take the nearest-neighbor coupling strength to be $J_0 = 2\pi \times 1~\mathrm{kHz}$. The transverse field is set to $B = 2J_0$, and throughout our simulations we fix $\alpha = 1$, corresponding to long-range interactions. In addition, we include two idealized Hamiltonians for comparison. The first is the transverse-field Ising model with periodic boundary conditions, given by:
\begin{equation}
\tilde{H}_{TFIM} = J_0 \sum_{i} \hat{\sigma}_i^x \hat{\sigma}_{i+1}^x + B \sum_i \hat{\sigma}_i^z.
\end{equation}
with the same parameters $J_0$ and $B$ as in the trapped-ion case. This is a simplified model of trapped-ion Hamiltonian. The second is a diagonal Hamiltonian $H_{\rm diag} = \mathrm{diag}(\lambda_1,\ldots,\lambda_d)$ satisfying the non-degenerate eigenvalue and gap conditions. Owing to the absence of off-diagonal terms, this Hamiltonian exhibits minimal scrambling and therefore represents the weakest case for error detection in our protocol.

For our simulations, unless stated otherwise, we consider Hamiltonians acting on $n=5$ qubit pairs, with error detection performed by measuring $m=3$ pairs in the Hadamard basis, which yields nontrivial performance for diagonal Hamiltonians. We assume a noise model of local i.i.d. depolarizing noise at $p=0.2$. The performance of different Hamiltonians are compared through yield and fidelity. We adopt the average per-pair fidelity $\overline{f} = f_{\text{out}}^{1/(n-m)}$ as fidelity metric. This normalization enables consistent comparisons between protocols producing varying numbers of entangled pairs.

We begin by demonstrating the asymptotic time performance of the protocol under four distinct Hamiltonian settings. Figure~\ref{fig:F-Y-m} shows the asymptotic fidelity and yield for diagonal, Ising, trapped-ion and Rydberg Hamiltonians as a function of the number of measured qubits. To show the scaling, we simulate Hamiltonians with $10$ qubits. We assume that one party of each EPR pair is subject to local depolarizing noise of strength $p = 0.2$, therefore the initial average fidelity of $\overline{f_0} = 1 - 3p/4 = 0.85$ before distillation. Ideally, increasing the number of measured qubits is expected to improve fidelity at the cost of a reduced yield. Overall, the trends observed in Fig.~\ref{fig:F-Y-m} are consistent with these expectations. The trapped-ion Hamiltonian and its periodic-boundary Ising approximation exhibit comparable performance. Although omitting the long-range interactions in the Ising model reduces its scrambling strength, the periodic boundary condition enhances qubit connectivity, partially compensating for this effect and resulting in similar distillation performance. The Rydberg Hamiltonian achieves smaller fidelity improvements compared to the trapped-ion case. This can be attributed to the faster spatial decay of interactions in Rydberg systems, which scale as $r^{-6}$, whereas the trapped-ion model considered here assumes $\alpha=1$, corresponding to a slower $r^{-1}$ decay and hence stronger long-range coupling. As a result, the trapped-ion Hamiltonian provides more effective error scrambling. The diagonal Hamiltonian yields the poorest performance among all Hamiltonians, due to its inability to detect and suppress $Z$-type. Compared with a single round of the recurrence method, the Hamiltonian-based protocol achieves substantially higher fidelity at the cost of reduced yield. This trend would also persist when both methods are applied over multiple rounds. Importantly, in realistic experimental settings, implementing multiple distillation rounds leads to great overhead due to the need for measurement feedback and the requirement to physically regroup qubits that may be spatially separated. Therefore, the ability of the Hamiltonian-based method to achieve higher fidelity within fewer rounds constitutes a significant practical advantage.

\begin{figure}[htbp]
    \centering
    \includegraphics[width=0.5\linewidth]{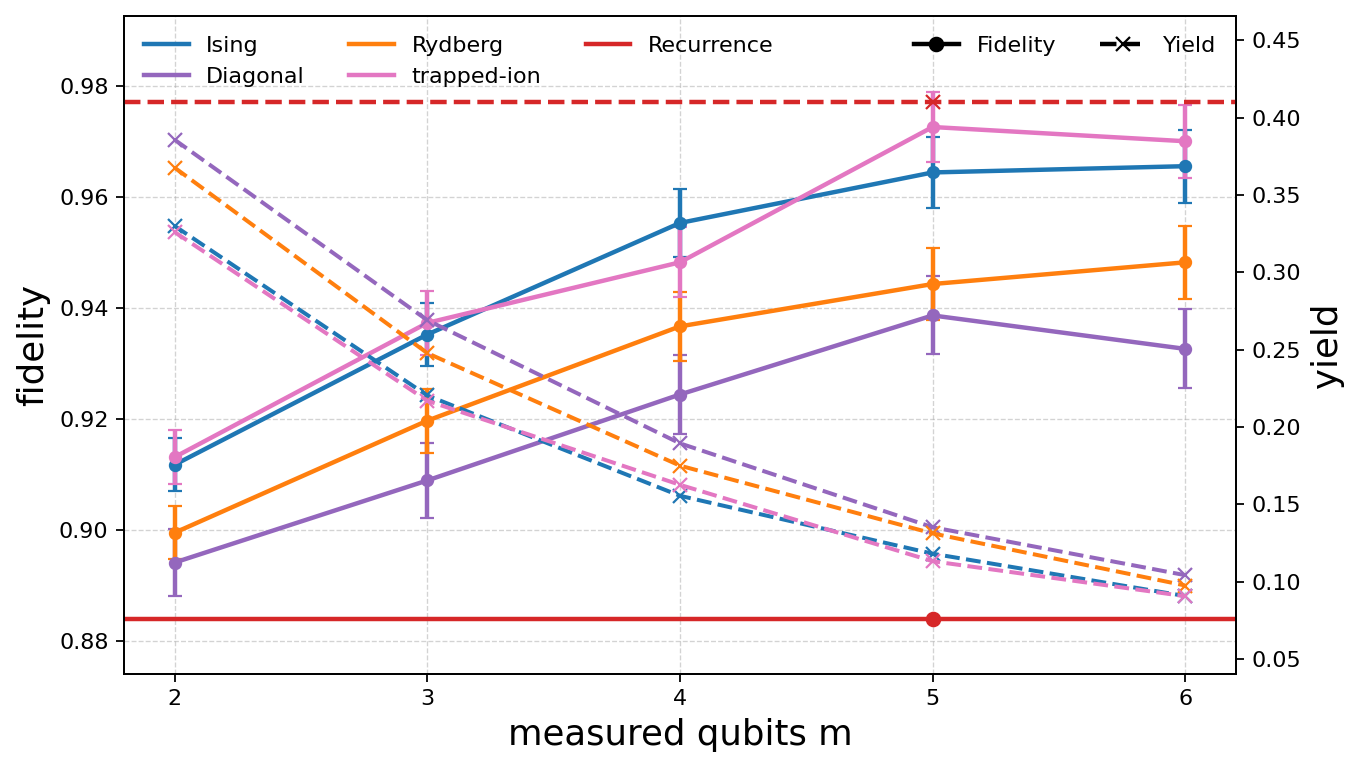}
    \caption{Distillation performance as a function of the number of measurement qubits. The solid line with circle markers denotes the output fidelity, while the dashed line with cross markers denotes the yield. We consider a ten-qubit Hamiltonian with local depolarizing noise of strength $p=0.2$.}
    \label{fig:F-Y-m}
\end{figure}

In Fig.~\ref{fig:F-Y-noise}, we examine the distillation performance under varying strengths of local depolarizing noise. The results show that the periodic-boundary Ising model achieves performance comparable to that of the trapped-ion Hamiltonian. Moreover, the trapped-ion Hamiltonian consistently outperforms the Rydberg Hamiltonian, which in turn outperforms the diagonal Hamiltonian across the entire noise range. Comparing this figure at point $p=0.2$ and the previous figure demonstrates that enlarging the Hamiltonian system and measuring more qubits substantially enhances distillation performance. From the above asymptotic analysis, we conclude that realistic Hamiltonians consistently outperform diagonal Hamiltonians across different measurement and noise settings. This observation implies that analytical results derived for diagonal Hamiltonians can be interpreted as conservative lower bounds for the performance of many practically relevant Hamiltonians, for which closed-form expressions are often difficult to obtain. Accordingly, in the following simulation sections, we focus on the more practically relevant trapped-ion and Rydberg Hamiltonians.

\begin{figure}[htbp]
    \centering
    \includegraphics[width=0.5\linewidth]{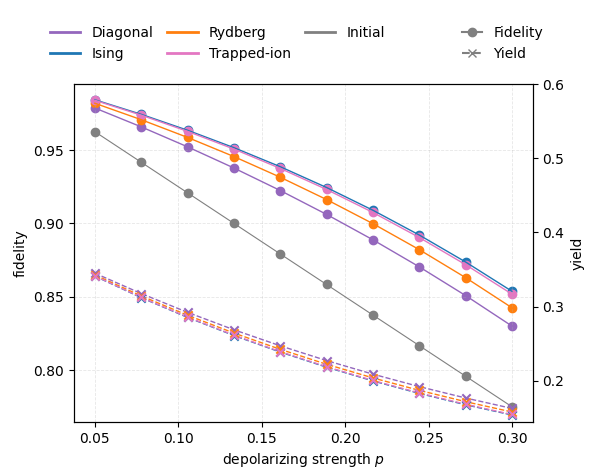}
    \caption{Distillation performance as a function of the depolarizing noise strength. The solid line with circle markers denotes the output fidelity, while the dashed line with cross markers denotes the yield. We consider a six-qubit Hamiltonian with three measurement qubits.}
    \label{fig:F-Y-noise}
\end{figure}

We now consider the finite-time effects of different Hamiltonians, which are crucial for realistic applications, as current quantum platforms have limited coherence times. The Rydberg platform operates at frequencies on the order of $\mathrm{MHz}$, while trapped-ion systems are typically in the $\mathrm{kHz}$ range. To enable a meaningful comparison, we define a unit of time as $1/2\pi~\mu\mathrm{s}$ for Rydberg atoms and $1/2\pi~\mathrm{ms}$ for trapped ions, allowing both time scales to be displayed on the same plot in Figure~\ref{fig:finite-time}. 
The asymptotic performance for both Hamiltonians can be approximated in Figure~\ref{fig:F-Y-noise}, where the trapped-ion Hamiltonian has fidelity of approximately $0.925$ and the Rydberg Hamiltonian reaches $0.91$.

\begin{figure}[htbp]
    \centering
    \includegraphics[width=0.5\linewidth]{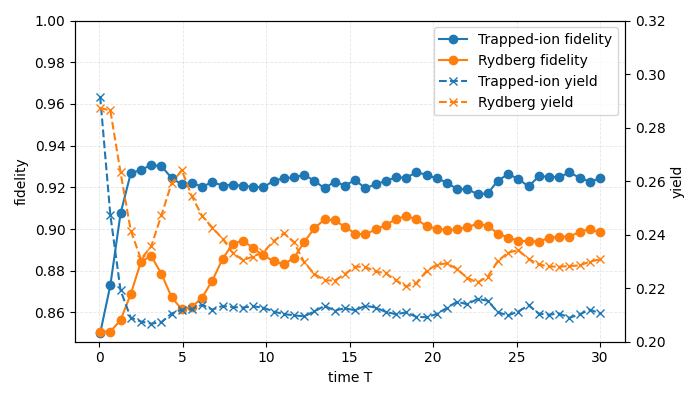}
    \caption{Finite-time distillation performance for trapped-ion and Rydberg Hamiltonians. The characteristic evolution times are at the microsecond ($\mu$s) scale for the Rydberg Hamiltonian and at the millisecond (ms) scale for the trapped-ion Hamiltonian. The solid line with circle markers denotes the output fidelity, while the dashed line with cross markers denotes the yield.}
    \label{fig:finite-time}
\end{figure}

Figure~\ref{fig:finite-time} shows that evolving the trapped-ion Hamiltonian for just 2 units of time (about $1/\pi~\mathrm{ms}$) is sufficient to achieve fidelity higher than asymptotic performance. For reference, experiments in~\cite{Guo2024} demonstrate coherence times of roughly $2~\mathrm{ms}$ over hundreds of qubits, indicating that Hamiltonian entanglement distillation is already practical on current devices. For the Rydberg Hamiltonian, fidelity reaches a plateau close to its asymptotic fidelity at about $13$ unit time(about $2\rm{\mu}s$) and then becomes quite stable in this high fidelity plateau. This again reflects the weaker scrambling power of the Rydberg Hamiltonian compared to trapped-ion Hamiltonian, requiring longer evolution times to approach its asymptotic fidelity. Both Hamiltonians exhibit oscillatory behavior in fidelity, with the effect being more significant for the Rydberg Hamiltonian. This behavior can be attributed to the presence of quasi-degenerate eigenvalue gaps, or equivalently, very small different in the eigenvalue gaps, which compromise efficient scrambling over time. Importantly, these results indicate that the Hamiltonian-based protocol remains effective in the presence of moderate spectral degeneracies and does not require a fully nondegenerate spectrum.  

\subsection{Non-Pauli Noises}\label{subsec:non-Pauli}
In our protocol in Sec~\ref{sec:protocol}, we include a step of Pauli twirling to convert arbitrary noise into Pauli noise. While this is a digital operation, it can still be implemented on many analog simulators. To support our claim that the protocol performs well even without this step, we analyze the effect of Pauli twirling on non-Pauli noise, demonstrating that a purely analog implementation remains effective. Specifically, we focus on amplitude damping noise, which is commonly encountered in trapped-ion and neutral-atom systems. This noise is both non-Pauli and non-unital, making it challenging to treat in many theoretical settings. In our simulations, we consider a local depolarizing noise with strength $p = 0.2$, followed by a local amplitude damping noise with varying strength $\gamma$. This allows us to investigate how the protocol performs in a realistic analog scenario without relying on Pauli twirling.

\begin{figure}[htbp]
    \centering
    \includegraphics[width=0.5\linewidth]{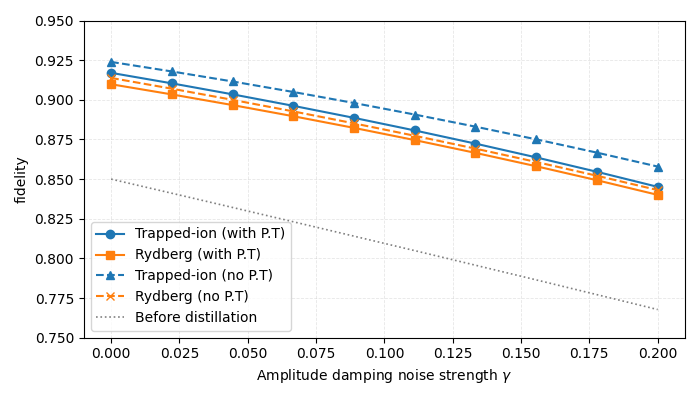}
    \caption{Distillation performance with and without Pauli twirling. The solid line with circle markers corresponds to the case with Pauli twirling, while the dashed line with triangle markers corresponds to the case without twirling.}
    \label{fig:Pauli_Twirling}
\end{figure}

The performance of different Hamiltonians, both with and without Pauli twirling, is shown in Figure~\ref{fig:Pauli_Twirling}. From Figure~\ref{fig:Pauli_Twirling}, we observe that for local amplitude damping noise, when measured in the hadamard basis, the protocol without Pauli twirling actually achieves higher fidelity performance than the Pauli twirling version. This is potentially because the Pauli Liouville representation of amplitude damping channel is 
\begin{equation}
    \mathcal{R}_{\text{AD}} = \begin{pmatrix}
        1 & 0 & 0 & 0 \\
        0 & \sqrt{1-\gamma} & 0 & 0 \\
        0 & 0 & \sqrt{1-\gamma} & 0 \\
        \gamma & 0 & 0 & 1-\gamma
    \end{pmatrix},
\end{equation}
in the Pauli basis 2$\{I, \sigma_x, \sigma_y, \sigma_z\}$ where additional $Z$ operators are introduced before Pauli twirling and thus being more probably to be detected through the X-basis measurement. Therefore even in the absence of Pauli twirling, the protocol remains effective, demonstrating its robustness for purely analog implementations.

\subsection{Noise Tolerance}

Finally, we present additional numerical results on noise tolerance. Figure~\ref{fig:tolerable_asymp} shows the noise tolerance for different fractions of measured qubits in the asymptotic qubits number $n$ limit. The blue region below the curve are the tolerable error region while gray region above would result in the protocol to output zero EPR pair. Measuring all but one qubit is represented as the rightmost point of the curve, which allows the protocol to approach the $1/3$ threshold, while measuring only half of the qubits reduces the tolerable error rate to $(5-\sqrt{13})/6 \approx 23\%$. The noise tolerance for finite $n$ and $m$ are shown in Figure~\ref{fig:tolerable_finite}. For a Hamiltonian of size $n = 20$, the maximum tolerable error rate is approximately $26\%$. When the qubit number increases to $n = 50$, the noise tolerance exceeds $30\%$. Note that when the measuring qubits is very few, the noise tolerance is quite low, even less than the  noise tolerance for one-way hashing protocol. This is because in the diagonal Hamiltonian twirling step, we would introduce more phase error for the states that already have $X$-error. So if the probability of those components are not well suppressed, we are introducing too much phase error and compromise the further one-way distillation step. To guarantee good noise tolerance and practical performance, we typically need to measure $O(n)$ qubits for error detection. 

\begin{figure}[htbp]
    \centering
    \subfigure[Asymptotic tolerable error rate.]{
        \includegraphics[width=0.52\linewidth]{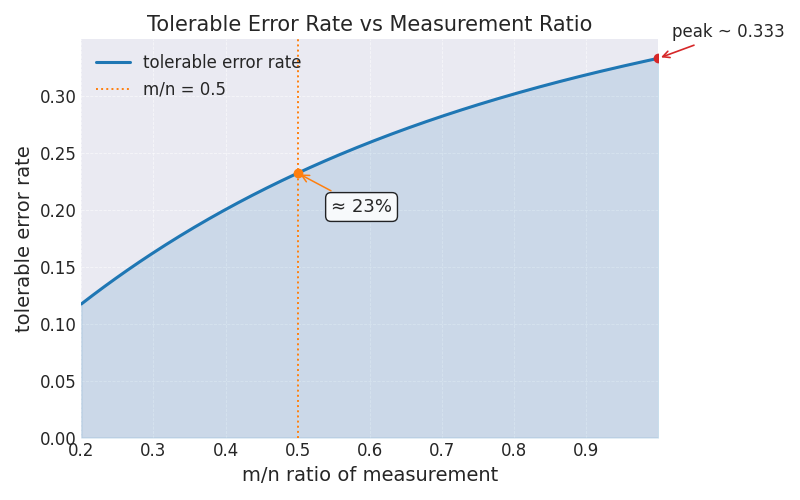}
        \label{fig:tolerable_asymp}
    }
    \hfill 
    \subfigure[Finite tolerable error rate.]{
        \includegraphics[width=0.44\linewidth]{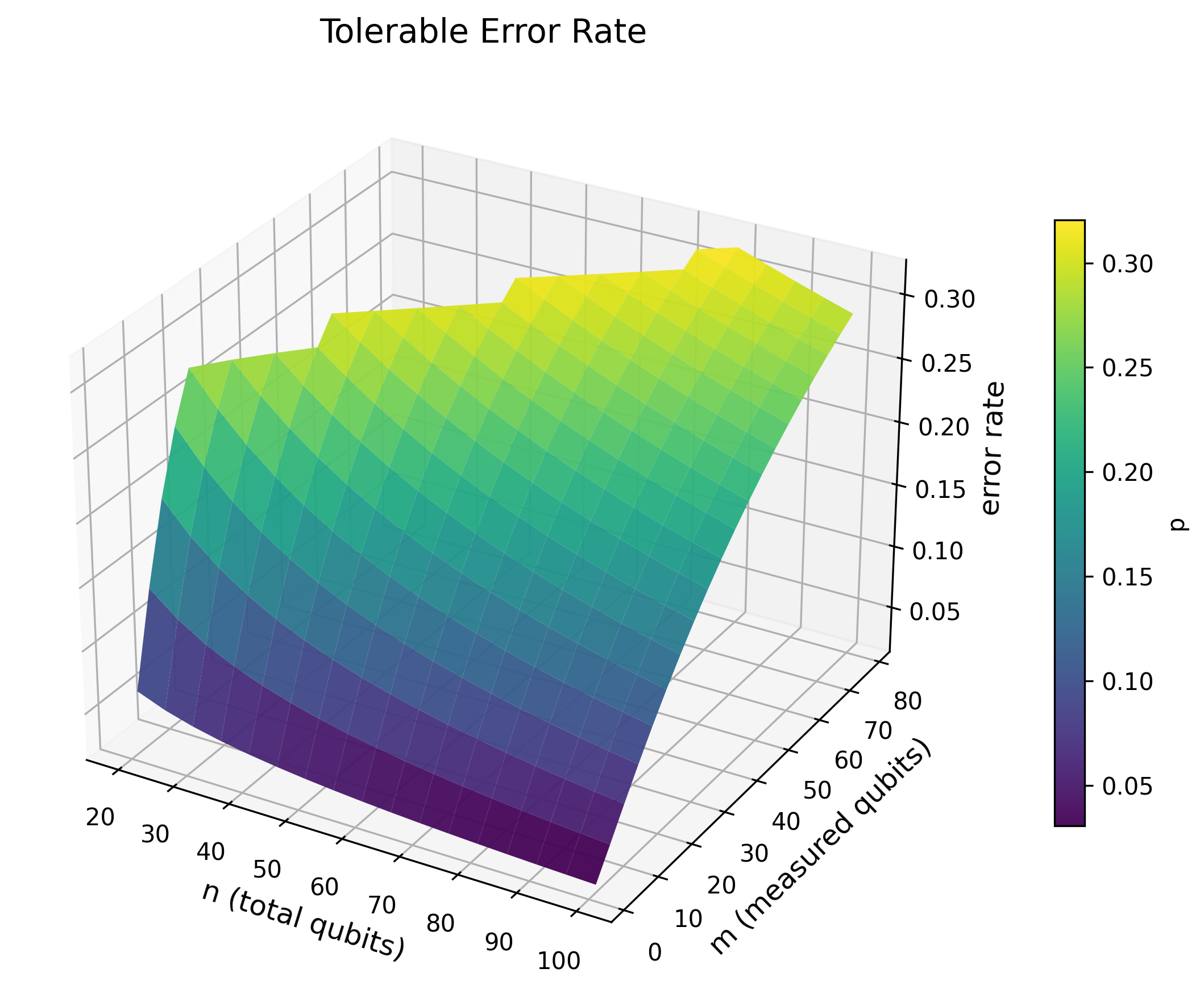}
        \label{fig:tolerable_finite}
    }
    \caption{Tolerable error rates under varying conditions: (a) asymptotic behavior as a function of the measurement qubit ratio $m/n$; (b) performance for finite system sizes where $n, m \leq 100$.}
    \label{fig:combined_plots}
\end{figure}

To illustrate the practical implications of noise tolerance in quantum networks, we consider two representative applications. The first is the maximum achievable transmission distance in quantum key distribution. The second concerns quantum repeater networks: specifically, the largest separation between neighboring repeaters for which the distilled entanglement can still exceed a certain fidelity threshold. Following recent work on quantum repeaters~\cite{Liu2026}, we set the fidelity threshold to $90\%$, which could lead to the violation of the Bell inequality up to approximately $2.5$. Further details on the channel noise model and simulation parameters are provided in the Appendix~\ref{appendix:QKD} and references~\cite{AQT,Liu2026}. Figure~\ref{fig:QKD} compares the maximum transmission distance of different QKD protocols versus the photon bit error rate $e_d$ determined by the quantum channel. The Hamiltonian-based protocol considered here is introduced in Theorem~\ref{lemma:diagonal+classical}. Specifically, we use a protocol based on a $15$-qubit diagonal Hamiltonian, with error detection performed on $12$ qubits. The Hamiltonian-based protocol attains a substantially larger maximum distance, outperforming the recurrence protocol even after two rounds of iteration. This improvement reflects the enhanced noise tolerance enabled by Hamiltonian-based distillation. Figure~\ref{fig:repeater} shows the distilled fidelity versus transmission distance. Since no error correction could be applied in this case, we assume Hamiltonian in Theorem~\ref{thm:Haar} instead of diagonal Hamiltonian in this setting. Note that when no distillation is performed, the fidelity drops below $90\%$ at about $120\mathrm{km}$, while two-way method can boost this $90\%$ fidelity threshold to about $300\mathrm{km}$, with Hamiltonian-based method achieving highest separation distance.  

\begin{figure}[htbp]
    \centering
    \subfigure[Maximum transmission distance in QKD.]{
        \includegraphics[width=0.44\linewidth]{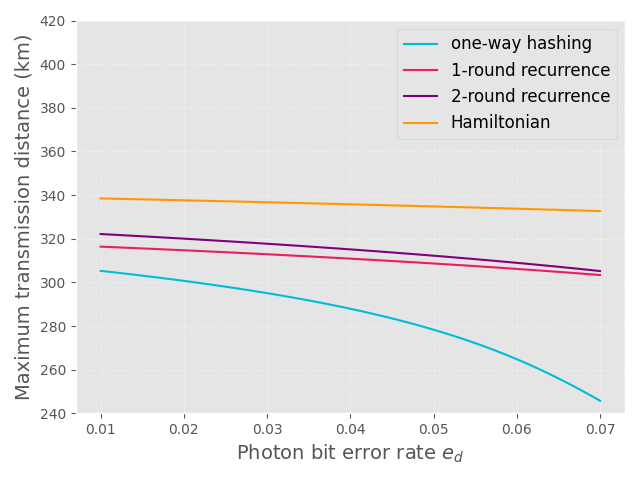}
        \label{fig:QKD}
    }
    \hfill 
    \subfigure[Largest separation between neighboring repeaters.]{
        \includegraphics[width=0.48\linewidth]{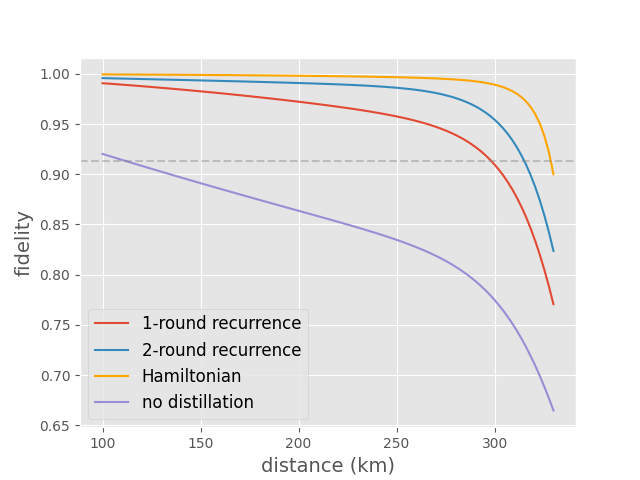}
        \label{fig:repeater}
    }
    \caption{Noise tolerance shown in practical application.}
    \label{fig:combined_plots}
\end{figure}

\section{Outlook}\label{sec:outlook} 

In this work, we introduced a class of entanglement distillation protocols based on Hamiltonian evolution. In contrast to conventional digital schemes, our approach exploits the intrinsic scrambling dynamics of many-body Hamiltonians to enable efficient error detection and high-performance distillation. We established a quantitative connection between a Hamiltonian’s scrambling capability and its distillation performance, showing that generic interacting Hamiltonians provide sufficient scrambling for this task. We further discuss the problem of noise tolerance and show strong performance in this metric of the Hamiltonian based protocol.

On the practical side, we validated our protocol through numerical simulations of experimentally relevant platforms, including Rydberg atom arrays and trapped-ion systems. The results demonstrate favorable distillation performance achievable within short, experimentally accessible evolution times. Additional numerical results on representative applications in quantum key distribution and quantum repeaters show that the our protocol gives sufficient high noise tolerance that could enable longer maximum transmission distance than conventional methods like recurrence protocol with few rounds. Our protocol requires only evolution under a native Hamiltonian followed by local measurements, avoiding the need for fine-grained pulse-level control, complex circuit compilation, or high-overhead quantum error correction. This makes it well suited for near-term implementation on existing analog quantum simulators.

Several directions for future work naturally arise. A central theoretical question is to quantify the time complexity required to reach a target fidelity, which is closely tied to the growth of out-of-time-order correlators and the rate of information scrambling. Establishing explicit bounds between scrambling rates and distillation time would provide deeper insight into the power of analog dynamics for quantum information processing. In addition, it would be interesting to extend this framework to broader classes of dynamics, including driven or Floquet systems, and to identify Hamiltonians with enhanced scrambling properties. Optimizing the underlying interactions or driving protocols may further reduce the required evolution time, enabling even more efficient entanglement distillation on near-term hardware. Finally, since quantum networks are designed based on optical system, whether this protocol could be generalized to continuous-variable system is also an interesting question for future efforts.

\acknowledgments
We thank Michael Gullans, Shiv Akshar Yadavalli, Chushi Qin, Xiaodi Wu, Yingkang Cao, Haohai Shi, Connor Clayton, Mattias Ehatamm, Hongzheng Zhu, Kecheng Liu, Zhenyu Du, Shengfan Liu for their helpful discussions. Z.X acknowledges funding and support from Joint Center for Quantum Information and Computer Science (QuICS) Lanczos Graduate Fellowship. G.L. acknowledges the National Natural Science Foundation of China Grant No.~12575023.

\bibliographystyle{ieeetr}

\bibliography{ref}

\clearpage

\appendix

\section{Preliminary}
\subsection{Basic notations and quantities}
Let us start from the basic notations of this work. We denote the Hilbert space as $\mathcal{H}$. A quantum state $\rho$ is a Hermitian operator in the Hilbert space which has unity trace. When this state is a pure state, sometimes we will write it as a vector $\ket{\psi}$, which in terms meaning that the density matrix is $\ket{\psi}\bra{\psi}$. The physical operation $\mathcal{N}:\mathcal{H}\rightarrow\mathcal{H}$ that maps a quantum state to another quantum state is called a quantum channel, which should satisfy CPTP condition. Here CP means completely positive, that is for any positive operator $\sigma\in\mathcal{H}\otimes\mathcal{H}_0$, the channel should have $\mathcal{N}\otimes\mathcal{I}(\sigma)>0$ with $\mathcal{I}$ being the identity channel in Hilbert space $\mathcal{H}_0$. The measurement of a quantum state is performed by a positive operator valued measurement, or POVM $\{M_0,\cdots, M_{k}\}$ such that each $M_i\in\mathcal{H}$ is a positive operator and $\sum_{i}M_i=I$.

In this paper, certain quantum states, channels, and measurements will be frequently used. The most important states are the four Bell state on a bipartite system $\mathcal{H}_A\otimes\mathcal{H}_B$:
\begin{equation}
\begin{split}
\ket{\Phi^+} = \frac{1}{\sqrt{2}}\left( \ket{00} + \ket{11} \right),\\
\ket{\Phi^-} = \frac{1}{\sqrt{2}}\left( \ket{00} - \ket{11} \right),\\
\ket{\Psi^+} = \frac{1}{\sqrt{2}}\left( \ket{01} + \ket{10} \right),\\
\ket{\Psi^-} = \frac{1}{\sqrt{2}}\left( \ket{01} - \ket{10} \right).
\end{split}
\end{equation}
Among the four states, $\ket{\Phi^+}$, sometimes called EPR state, is of particular interest due to its symmetry. For example, it can be extended naturally into the $2n$-qubit EPR state
\begin{equation}
\ket{\Phi^+}^{\otimes n} = \sum_{i=0}^{2^{n}-1} \ket{ii},
\end{equation}
where the first $\ket{i}$ indicates the state in system $\mathcal{H}_A$ and the second $\ket{i}$ indicate the state in system $\mathcal{H}_B$. One important property of EPR state enabling our work is the following identity: 
\begin{equation}\label{eq:useful_identity}
    U_A\ket{\Phi^+}_{AB}=U_B^T\ket{\Phi^+}_{AB}.
\end{equation}
Together with the fact that $(U^T)^{\dagger}=U^{*}$, we can see that $U_AU_B^*\ket{\Phi^+}_{AB}=\ket{\Phi^+}_{AB}$. The EPR state serves as a fundamental resource in quantum information science. However, during the transmission of EPR states, noise is inevitable. Two commonly encountered noise models are depolarizing noise and dephasing noise. The single-qubit depolarizing channel with noise strength $p$ is defined as
\begin{equation}
\mathcal{D}_p(\rho)=(1-p)\rho+p\frac{I}{2},
\end{equation}
Since for arbitrary single qubit state $\rho$ we have $\frac{1}{4}(\rho+X\rho X+Y\rho Y+Z\rho Z)=\frac{I}{2}$. This channel can be also written as
\begin{equation}
\mathcal{D}_p(\rho)=(1-\frac{3}{4}p)\rho+\frac{p}{4}X\rho X+\frac{p}{4}Y\rho Y+\frac{p}{4}Z\rho Z,
\end{equation}
The above two definition are equivalent. This channel isotropically randomizes the state and models symmetric errors in all Pauli directions. Another useful channel is the dephasing channel with error rate $p$ is given by
\begin{equation}
\mathcal{Z}_p(\rho)
=
(1-p)\rho
+
p\, Z\rho Z.
\end{equation}
This noise preserves populations in the computational basis while suppressing off-diagonal coherence. In our main theorem, we will consider a special case of the dephasing channel, called global complete dephasing channel, defined by
\begin{equation}
    \Delta(\rho)=\mathcal{Z}_{1/2}^{\otimes n}(\rho).
\end{equation}
This channel removes all off-diagonal elements of a density matrix in the computational basis, effectively turning it into a classical mixture. 

Both depolarizing noise and dephasing noise belong to the class of \emph{Pauli channels}. A Pauli channel is a quantum channel whose action can be expressed as a probabilistic mixture of Pauli operators. For an $n$-qubit system, a Pauli channel takes the form of
\begin{equation}
\Lambda(\rho)
=
\sum_{P \in \mathsf{P}_n} c_P \, P \rho P,
\end{equation}
where $\mathsf{P}_n = \{I,X,Y,Z\}^{\otimes n}$ denotes the $n$-qubit Pauli group (up to phases), and $\{c_P\}$ is a probability distribution. Local depolarizing noise corresponds to the single qubit case in which all non-identity Pauli operators occur with equal probability, while dephasing noise corresponds to Pauli channels supported only on $Z$-type operators. Pauli channels are diagonal in the Pauli operator basis and naturally arise after Pauli twirling, making them a convenient and widely used noise model in quantum information theory. Suppose there is a Pauli channel $\Lambda_{AB}$ that acts on Hilbert space $\mathcal{H}_A\otimes\mathcal{H}_B$ with $\mathcal{H}_A, \mathcal{H}_B$ having equal dimension. Then there always exists a Pauli channel $\Lambda_A$ that acts on $\mathcal{H}_A$ such that $\Lambda_{AB}(\ket{\Phi^+}_{AB})=(\Lambda_{A}\otimes \mathcal{I}d)(\ket{\Phi^+}_{AB})$. Here $\mathcal{I}d$ denotes the identity channel. This can be shown by again considering Eq.~\eqref{eq:useful_identity}, For a Pauli operator in space $\mathcal{H}_A\otimes\mathcal{H}_B$, we can decompose it as $P\otimes Q$ where $P\in\mathcal{H}_A,Q\in\mathcal{H}_B$. Then $(P\otimes Q)\ket{\Phi^+}=(PQ^T\otimes I)\ket{\Phi^+}$, which becomes an error operator on system $\mathcal{H}_A$.

When an ideal EPR pair is subjected to Pauli noise, the system evolves into a Bell-diagonal state. Consider, for instance, the maximally entangled state $|\Phi^+\rangle = \frac{1}{\sqrt{2}}(|00\rangle+|11\rangle)$. If one subsystem undergoes a local depolarizing channel, the global state transforms into the density matrix:
\begin{equation}
    \rho = (1-p)|\Phi^+\rangle\langle\Phi^+| + \frac{p}{3}\left(|\Phi^-\rangle\langle\Phi^-| + |\Psi^+\rangle\langle\Psi^+| + |\Psi^-\rangle\langle\Psi^-|\right).
\end{equation}
This highly symmetric mixed state, representing a uniform mixture of the target Bell state with the orthogonal error states, is also called the Werner state.

For quantum measurement, there is a set of measurement basis of particular interest, which is called the Mutually Unbiased Basis (MUB). 
\begin{definition}[Mutually unbiased bases]
Let $\mathcal{H}$ be a $d$-dimensional Hilbert space. Two orthonormal bases
\[
\mathcal{B} = \{\,\lvert e_i \rangle \,\}_{i=1}^d,
\qquad
\mathcal{B}' = \{\,\lvert f_j \rangle \,\}_{j=1}^d
\]
are called \emph{mutually unbiased} if
\[
\bigl\lvert \langle e_i \mid f_j \rangle \bigr\rvert^2 = \frac{1}{d}
\quad
\text{for all } i,j .
\]
\end{definition}

Suppose in basis $\mathcal{B}$ we prepare a classical mixture of the basis vectors, i.e. 
\begin{equation}
    \rho_{\mathcal{B}}=\sum_{i}\lambda_i\ketbra{e_i}{e_i}.
\end{equation}
And we perform measurement in $\mathcal{B}'$ basis. The probability of outcome $\ket{f_j}$ occurs is
\begin{equation}
\begin{split}
    p_j&=\bra{f_j}\sum_{i}\lambda_i\ketbra{e_i}{e_i}f_j\rangle\\
    &=\frac{1}{d}\sum_{i}\lambda_i=\frac{1}{d}.
\end{split}
\end{equation}
That is to say, the measurement outcome in a mutually unbiased basis is uniformly random for any classical mixture state in one basis. 

The distance of two quantum state is often measured via fidelity, which, for two mixed state $\rho$ and $\sigma$, is
\begin{equation}
    F(\sigma,\rho)=\left(\tr\sqrt{\sqrt{\sigma}\rho\sqrt{\sigma}}\right)^2.
\end{equation}
When $\sigma$ is a pure state, say $\ket{\psi}$, the above formula could be reduced to $F(\ket{\psi},\rho)=\bra{\psi}\rho\ket{\psi}$. In our paper, we will mainly focus on the case where $\ket{\psi}=\ket{\Phi^+}$, the EPR state. And the goal is to do some operations on state $\rho$ to increase its fidelity with $\ket{\Phi^+}$.

\subsection{Entanglement distillation}\label{appendix:EDP}
More details on entanglement distillation could be found in~\cite{BDSW}. Here we just do a brief review of basic one-way and two-way protocols. We first introduce the one-way hashing method which is very standard in quantum key distribution. Through the discussion of one-way method we motivates why two-way detection based method are desired and then introduce basic two-way protocols.

\subsubsection{One-way hashing method.}
The one-way hashing method is an error-correction based method. Due to its equivalency to classical post-processing in the QKD setting, it has become the standard in QKD. Also since it does not discard qubit probabilistically, its yield is higher than most two-way method in low noise region. 

The hashing method involves the following steps:
\begin{enumerate}
    \item Parameter estimation: Alice and Bob first choose some noisy states and measure in $X$ or $Z$ basis. After measurement they compare their results to estimate the bit error rate $e_b$ and phase error rate $e_p$. 
    \item Bit error correction: Following parameter estimation, suppose $n$ entangled pairs remain with a bit error rate $e_b$. Alice and Bob agree on a binary hashing matrix $B^{e}$ of dimension $n \times nH(e_b)$, drawn from a universal hashing family. They compute the hashing results into the final $nH(e_b)$ pairs (the "check" qubits) by performing CNOT operations determined by the matrix entries: for every non-zero element $B^{e}_{i,j}=1$, a CNOT gate is applied between the $i$-th data qubit and the corresponding check qubit.  Subsequently, both parties measure these check qubits in the $Z$-basis. Alice transmits her outcome vector $\mathbf{s}_A$ to Bob, who combines it with his local result $\mathbf{s}_B$ to compute the error syndrome $\mathbf{s} = \mathbf{s}_A \oplus \mathbf{s}_B$. Finally, Bob decodes this syndrome to identify the error configuration and applies the necessary bit-flip corrections to the remaining $n-nH(e_b)$ data pairs.
    \item Phase error correction: Similar to phase error correction, they agree on a hashing matrix $B^p$ of dimension $n(1-H(e_b))\times nH(e_p)$ and compute the hashing results into the current last $nH(e_p)$ qubits. This time they perform CNOT in the Hadamard basis and measure the last $nH(e_p)$ qubits in $X$-basis measurement. Similarly Alice sends her measurement result to Bob. Bob do the decoding and perform phase-flip correction on remaining $n(1-H(e_b)-H(e_p))$ qubits.  
\end{enumerate}
After performing both bit-error correction and phase-error correction, the remaining
$n\bigl(1 - H(e_b) - H(e_p)\bigr)$ pairs converge to perfect EPR pairs in the asymptotic limit $n \to \infty$. The corresponding asymptotic yield of the one-way hashing protocol is therefore 
\begin{equation}
    Y=1-H(e_b)-H(e_p).
\end{equation}
When $e_b = e_p \approx 11\%$, this yield vanishes, indicating that one-way entanglement distillation fails beyond this error threshold. Consequently, there exist entangled states that are not distillable by one-way protocols alone. In such cases, it is necessary to first reduce the error rates below this threshold—using, for example, a two-way or pre-processing distillation step—before applying one-way hashing to obtain high-fidelity EPR pairs. In the following we introduce the most basic two-way protocol, the recurrence protocol.

\subsubsection{Recurrence method.}
The recurrence method relies on a parity check protocol where Alice and Bob compare their local measurement results to determine the quality of their shared entanglement. By verifying the consistency of their outcomes, they can selectively preserve EPR pairs with higher fidelity while discarding those that are likely erroneous. This process effectively suppresses errors, distilling a smaller set of high-purity pairs from a larger, noisy ensemble. 

The recurrence protocol consists of the following steps:
\begin{enumerate}
    \item Pauli twirling: Alice and Bob select two noisy pairs. They perform Pauli twirling to transform these states into Bell diagonal states $\rho_{AB}\otimes\sigma_{A'B'}$.
    \item Coupling: They both perform CNOT operations targeted at the second qubit, effectively transforming their state to
    \begin{equation}
        \varrho=\mathrm{CNOT}_{AA'}\cdot\mathrm{CNOT}_{BB'}(\rho_{AB}\otimes \sigma_{A'B'}).
    \end{equation}
    \item Measurement: They measure their targeted qubit and compare their results. They keep the control qubit if their measurement outcomes are identical. Otherwise they discard the control qubit.
    \item Recurrence: The surviving pairs are then grouped together, and the process is repeated recursively. Each round consumes resources but drives the remaining entanglement closer to a perfect maximally entangled state.
\end{enumerate}
To analyze the performance of this protocol, suppose initially Alice and Bob share states,
\begin{equation}
    \varrho_{\mathrm{ini}}=(p_{00}\ketbra{\Phi^+}{\Phi^+}+p_{01}\ketbra{\Psi^+}{\Psi^+}+p_{10}\ketbra{\Phi^-}{\Phi^-}+p_{11}\ketbra{\Psi^-}{\Psi^-})^{\otimes 2}.
\end{equation}
After performing CNOT operations, the state can pass the consistency check if there exists even number of $X$-errors. The overall probability of passing the check is thus
\begin{equation}
\begin{split}
    p_{pass}&=p_{00}^2+p_{01}^2+p_{10}^2+p_{11}^2+2p_{00}p_{10}+2p_{01}p_{11}\\
    &=(p_{00}+p_{10})^2+(p_{01}+p_{11})^2.
\end{split}
\end{equation}
The probability of the remaining state being the four Bell states are respectively,
\begin{equation}\label{eq:boost}
\begin{split}
    p_{\Phi^+}&=p_{00}^2+p_{10}^2,\\
    p_{\Psi^+}&=p_{01}^2+p_{11}^2,\\
    p_{\Phi^-}&=2p_{00}p_{10},\\
    p_{\Psi^-}&=2p_{01}p_{11}.
\end{split}
\end{equation}
Eq.~\eqref{eq:boost} could be derived using two very useful identities $\mathrm{CNOT}_{A,B}X_A=X_AX_B\mathrm{CNOT}_{A,B}$ and $\mathrm{CNOT}_{A,B}Z_B=Z_AZ_B\mathrm{CNOT}_{A,B}$. Full table of outcome of different Bell diagonal state in the recurrence protocol could be found in~\cite{BDSW}. After normalization of the passing probability $p_{pass}$, the state after the recurrence protocol would become
\begin{equation}
    \rho_{\mathrm{final}}=\frac{p_{00}^2+p_{10}^2}{p_{pass}}\ketbra{\Phi^+}{\Phi^+}+\frac{p_{01}^2+p_{10}^2}{p_{pass}}\ketbra{\Psi^+}{\Psi^+}+\frac{2p_{00}p_{10}}{p_{pass}}\ketbra{\Phi^-}{\Phi^-}+\frac{2p_{01}p_{11}}{p_{pass}}\ketbra{\Psi^-}{\Psi^-}.
\end{equation}
When $p_{00}$ itself is high enough, we can always show that $\frac{p_{00}^2+p_{10}^2}{(p_{00}+p_{10})^2+(p_{01}+p_{11})^2}>p_{00}$, guaranteeing enhanced fidelity through iteration.

\subsection{Twirling}\label{appendix:twirling}
We introduce the concept of twirling operation and basic properties of it in this part. More details regarding this could be found at~\cite{Mele2024introductiontohaar}. The twirling operation is defined over an ensemble, or put it simple, elements plus probability, of quantum unitary operations, $\mathfrak{S}$. We consider the following $k$-th order twirling operation over $\mathfrak{S}$:
\begin{equation}
\mathcal{T}^k_{\mathfrak{S}}(X) = \mathbb{E}_{U\sim\mathfrak{S}} U^{\otimes k}X U^{\dagger\otimes k}.
\end{equation}
where $U$ is drawn from the ensemble $\mathfrak{S}$ and $X$ is the operator being twirled. The expectation is taken over the random unitary operations from $\mathfrak{S}$.

The mostly commonly seen unitary ensemble for twirling is the Haar random unitaries on a Hilbert space. Intuitively, this is the uniform distribution of all unitaries over a Hilbert space. Suppose $X$ is an operator on the $k$-fold Hilbert of the Haar random ensemble, then its $k$-th order twirling satisfying the following equation:
\begin{equation}
\begin{split}
    \mathcal{T}_{\mathrm{Haar}}^{(k)}(X)&=\int_{U}U^{\otimes k}XU^{\dagger\otimes k}\mathrm{d}U\\
    &=\sum_{\pi,\sigma\in S_k}\mathrm{Wg}(\pi^{-1}\sigma, d)\tr(XS_{\sigma}^{\dagger})S_{\pi},
\end{split}
\end{equation}
where $S_{k}$ is the permutation group over dimension $k$. Operator $S_{\pi}$ is the permutation operator over permutation $\pi$. For example if $\pi$ is a length-2 cycle interchanging system $A$ and $B$, $S_{\pi}$ would be the SWAP operator on system $AB$, which we would also specifically termed as $F_{AB}$. $\textrm{Wg}(\sigma, d)$ denotes the Weingarten function for permutation $\sigma$ over dimension $d$. For large $d$, we have $\mathrm{Wg}(1, d)\approx\frac{1}{d^{k}}$, while other permutations decay as $O(d^{-n - | \pi |})$, where $| \pi |$ is the minimal number of transpositions needed to generate $\pi$.

For low order twirling, we can write the expressions in closed form. For example for the first order twirling, we would have:
\begin{equation}
    \mathcal{T}_{\mathrm{Haar}}^{(1)}(X)=\tr(X)\frac{\mathbb{I}_d}{d},
\end{equation}
where $\mathbb{I}_{d}$ is the identity operator with dimension $d$. The second-order twirling over Haar random ensemble is,
\begin{equation}\label{eq:secondtwirling}
\begin{split}
    \mathcal{T}_{\mathrm{Haar}}^{(2)}(X)=\tr(X \frac{\mathbb{I}_{d^2}-d^{-1} F}{d^2-1})\id_{d^2}+\tr(X \frac{F-d^{-1} \id_{d^2}}{d^2-1})F.
\end{split}
\end{equation}
Similarly $\id_{d^2}$ is the identity operator on the $d^2$-dimension subsystem, and $F$ denotes the SWAP operator on the $d^2$-dimension subsystem. This second order twirling is also called bilateral twirling. In our paper, another variant called the mixed bilateral twirling is also considered. The mixed bilateral twirling over a unitary ensemble $\mathfrak{S}$ is
\begin{equation}
    \mathcal{N}_{\mathfrak{S}}(X)=\mathbb{E}_{U\sim\mathfrak{S}}(U\otimes U^{*})X(U^{\dagger}\otimes U^{T}).
\end{equation}
The difference of this channel with the ordinary bilateral twirling channel is that it put a complex conjugate on unitaries on the second system. This mixed twirling channel is of particular interest in entanglement distillation because it naturally preserves the EPR pairs.

The Haar random unitary ensemble  requires exponential resources to implement in practice. An approximate version of the Haar random unitary ensemble for low order twirling is often used for practical applications. An ensemble $\mathfrak{S}$ is called a unitary $k$-design if $\mathcal{T}^{(k)}_{\mathfrak{S}} = \mathcal{T}^{(k)}_{\mathrm{Haar}}$. It was shown that the $n$-qubit Clifford group is an exact unitary 3-design~\cite{Webb2016Clifford3design,Zhu2017MultiqubitClifford}. As we can see in later proof, the most general Hamiltonian twirling requires $4$-design unitary ensemble, for which a Clifford group is not enough. 

Another very basic twirling operator we would use is the Pauli twirling, defined as
\begin{equation}
\begin{split}
    \mathcal{T}_{\mathsf{P}}^{(k)}(X)&=\sum_{P\in\mathsf{P}}P^{\otimes k}XP^{\otimes k},
\end{split}
\end{equation}
where $\mathsf{P}$ is the group of all Pauli operators in the Hilbert space. One good property of the Pauli channel is that the first order Pauli twirling of an arbitrary quantum channel would makes the channel into a Pauli channel. That is to say, channel $\Lambda_{\mathcal{E}}(\cdot)$, defined as
\begin{equation}
\begin{split}
    \Lambda_{\mathcal{E}}(\rho)&=\sum_{P\in\mathsf{P}}P\mathcal{E}(P\rho P)P
\end{split}
\end{equation}
would become a Pauli channel. Here we have abused the notion of twirling a bit by considering the twirling over a channel instead of a state. But the idea of these two kinds of twirling is similar. 

In this work, we will use the unitary evolution of Hamiltonian with random time as the unitary ensemble for twirling. For certain Hamiltonian $H$ and a time measure $\mu$, this ensemble could be constructed as $\mathfrak{S}_{H}=(e^{-iHt},\mu(t))$. And a major part of this paper is to study the property of mixed bilateral Hamiltonian twirling channel in the form of,
\begin{equation}
    \mathcal{N}_{H}(\rho)=\int_{t}(e^{-iHt}\otimes e^{iH^*t})\rho(e^{-iHt}\otimes e^{iH^*t})^{\dagger}\mu(t).
\end{equation}
For readers interested in more materials about the diagonal design induced by eigenvalues of the Hamiltonian, we refer to~\cite{HamiltonianShadow} for more details. 

\subsection{Out-of-Time-Order Correlator}
In this part, we briefly introduce the definition and basis properties of OTOC. 

\begin{definition}
    Consider a Hamiltonian $H$ on a Hilbert space $\mathcal{H}$. Let $\rho$ be a density matrix also in $\mathcal{H}$. Let $W$ and $V$ be two Hermitian  operators acting on specific subsystems in $\mathcal{H}$. The Out-of-Time-Order Correlator is defined as
    \begin{equation}
        \mathrm{OTOC}_{V,W}^{H}(t)=\tr(\rho \tilde{V}^{\dagger}W^{\dagger}\tilde{V}W)
    \end{equation}
    where $\tilde{V}=e^{iHt}Ve^{-iHt}$.
\end{definition}

In the above definition, $\rho$ is typically chosen as the thermal Gibbs state $\rho = e^{-\beta H} / \tr(e^{-\beta H})$, where $\beta = 1/(k_B T)$ is the inverse temperature. In our case, we would select $\rho=\mathbb{I}/d$ to be the identity. This can also be viewed as a Gibbs state at infinitely high temperature. 

OTOC could also be related to commutator, note that $[\tilde{V},W]=\tilde{V}W-W\tilde{V}$, suppose $U$ and $V$ are unitary operators,
\begin{equation}
\begin{split}
    \frac{1}{d}\tr([\tilde{V},W]^{\dagger}[\tilde{V},W])&=\frac{1}{d}\tr\Bigl((W^{\dagger}\tilde{V}^{\dagger}-\tilde{V}^{\dagger}W^{\dagger})(\tilde{V}W-W\tilde{V})\Bigr)\\
    &=\frac{1}{d}\tr\Bigl(W^{\dagger}\tilde{V}^{\dagger}\tilde{V}W-W^{\dagger}\tilde{V}^{\dagger}W\tilde{V}-\tilde{V}^{\dagger}W^{\dagger}\tilde{V}W+\tilde{V}^{\dagger}W^{\dagger}W\tilde{V}\Bigr)\\
    &=\frac{1}{d}\tr\Bigl(2I-W^{\dagger}\tilde{V}^{\dagger}W\tilde{V}-\tilde{V}^{\dagger}W^{\dagger}\tilde{V}W\Bigr)\\
    &=2-\frac{1}{d}\tr(\tilde{V}^{\dagger}W^{\dagger}\tilde{V}W)-\frac{1}{d}\tr\Bigl((\tilde{V}^{\dagger}W^{\dagger}\tilde{V}W)^{\dagger}\Bigr)\\
    &=2-(\mathrm{OTOC}_{V,W}^{H}+(\mathrm{OTOC}_{V,W}^{H})^{*}).
\end{split}
\end{equation}
The last line is due to the fact that $\tr(X)=\tr(X^{T})$. Therefore $\tr(X^{\dagger})=\tr(X^{*})=\tr(X)^{*}$. This connection shows that lower OTOC directly implies higher non-commutativity. 

In the context of quantum information, we often consider the OTOC for two local operators, $V$ and $W$, acting on spatially separated subsystems. Consider, for example, an $n$-qubit register where $V$ acts on the first qubit and $W$ on the last. At time $t=0$, these operators act on disjoint Hilbert spaces and therefore commute, yielding a OTOC value of $1$. However, as the system evolves under the Hamiltonian $H$, the operator $\tilde{V}(t) = e^{iHt} V e^{-iHt}$ undergoes operator spreading. The initially local operator $\tilde{V}(t)$ grows in support, evolving into a complex, non-local operator that eventually overlaps with the support of $W$. This emerging non-commutativity causes the OTOC value to decay. This phenomenon, known as information scrambling, quantifies how local quantum information is delocalized with the certain Hamiltonian dynamics. This plays a central role in the characterization of complex many-body dynamics as well as a wide range of quantum information tasks. 

\section{Some proofs}
\subsection{Proof of Theorem~\ref{thm:diagonal_twirling}}
Theorem~\ref{thm:diagonal_twirling} could be proved by simply rewriting all states in the Pauli basis and do a direct calculation. For convenience, we rewrite the Theorem~\ref{thm:diagonal_twirling} below.
\begin{thmrecap}[Theorem \ref{thm:diagonal_twirling}]
    Let Alice and Bob share \(n\) noisy EPR pairs of the form $\ket{\psi}=(P\otimes I)\ket{\Phi^+}_{AB}^{\otimes n}$, where a Pauli error
    \(
    P = X_0 Z_0
    \)
    acts on Alice’s subsystem, with \(X_0\) and \(Z_0\) denoting tensor products of single-qubit Pauli \(X\) and \(Z\) operators, respectively.
    Let \(\mathcal{N}_{H_d}\) denote the Hamiltonian twirling channel induced by the diagonal Hamiltonian \(H_d\). If \(X_0 \neq I\), then the output state after twirling is
    \begin{equation}
        \mathcal{N}_{H_d}\!\left( (P \otimes I)\ketbra{\Phi^+}{\Phi^+}^{\otimes n}(P \otimes I) \right)
        = (X_0\otimes I)\, \Delta\!\left(\ketbra{\Phi^+}{\Phi^+}^{\otimes n}\right) (X_0\otimes I) ,
    \end{equation}
    where \(\Delta(\cdot)\) denotes the global dephasing channel that removes all off-diagonal terms in the computational basis.  If otherwise \(X_0 = I\), then the state is invariant under twirling:
    \begin{equation}\label{eq:trivial}
        \mathcal{N}_{H_d}\!\left( (P \otimes I)\ketbra{\Phi^+}{\Phi^+}^{\otimes n}(P \otimes I) \right)
        = (Z_0\otimes I)\,\ketbra{\Phi^+}{\Phi^+}^{\otimes n} (Z_0\otimes I).
    \end{equation}
\end{thmrecap}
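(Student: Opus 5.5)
Write $|\Phi\rangle:=\ket{\Phi^+}^{\otimes n}=2^{-n/2}\sum_{i}\ket{ii}$ in the computational basis, so the noisy input is $\rho=(P\otimes I)\ketbra{\Phi}{\Phi}(P\otimes I)$. The plan is a direct matrix-element computation using Eq.~\eqref{eq:projection}, which already gives the action of $\mathcal{N}_{H_d}$ on any bipartite operator. Eq.~\eqref{eq:projection} follows from $e^{-iH_dt}\otimes e^{iH_d t}$ acting on $\ketbra{ij}{kl}$ by the phase $e^{-it(\lambda_i-\lambda_j-\lambda_k+\lambda_l)}$. Together with $\int_t e^{ixt}\mu(t)=1_{x=0}$ and the non-degenerate gap condition, only the terms with $(i,j)=(k,l)$ or ($i=j$, $k=l$) survive. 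So the proof reduces to identifying which matrix elements of $\rho$ have these index patterns.

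\textbf{Step 1: write the input in the computational basis.} Let $X_0$ act as the bit string $a\in\{0,1\}^n$, so $X_0\ket{i}=\ket{i\oplus a}$. Let $Z_0$ correspond to $b$, so $Z_0\ket{i}=(-1)^{b\cdot i}\ket{i}$. Up to an irrelevant global phase,
\[
(P\otimes I)\ket{\Phi}=2^{-n/2}\sum_i (-1)^{b\cdot i}\ket{i\oplus a,\, i}.
\]
Hence
\[
\rho=2^{-n}\sum_{i,k}(-1)^{b\cdot(i\oplus k)}\ketbra{i\oplus a,\,i}{k\oplus a,\,k}.
\]

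\textbf{Step 2: case $a\neq 0$.} Every ket $\ket{i\oplus a,i}$ has unequal Alice and Bob labels, so the pattern $\ketbra{jj}{ll}$ never occurs. The only surviving terms are those with $(i\oplus a,i)=(k\oplus a,k)$, i.e.\ $i=k$. For these the sign is $+1$, so
\[
\mathcal{N}_{H_d}(\rho)=2^{-n}\sum_i\ketbra{i\oplus a,i}{i\oplus a,i}.
\]
Separately, $\Delta$ acts on the $2n$ qubits, and $\Delta(\ketbra{\Phi}{\Phi})=2^{-n}\sum_i\ketbra{ii}{ii}$. Conjugating by $X_0\otimes I$ gives exactly the expression above. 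The $Z_0$ factor has disappeared because it only contributes phases to off-diagonal elements, all of which have been removed.

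\textbf{Step 3: case $a=0$.} Now every term is $\ketbra{ii}{kk}$, which is precisely the pattern preserved by Eq.~\eqref{eq:projection}, both for $i\neq k$ and for $i=k$. So $\rho$ is invariant, and $\rho=(Z_0\otimes I)\ketbra{\Phi}{\Phi}(Z_0\otimes I)$, which is Eq.~\eqref{eq:trivial}.

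The only delicate point is to justify Eq.~\eqref{eq:projection} carefully. The phase $\lambda_i-\lambda_j-\lambda_k+\lambda_l$ vanishes exactly when $\lambda_i-\lambda_k=\lambda_j-\lambda_l$. If $i\neq k$, the gap condition forces $(i,k)=(j,l)$. If $i=k$, then $\lambda_j=\lambda_l$, and non-degeneracy gives $j=l$. These two cases are precisely the surviving patterns, and everything else is routine bookkeeping.
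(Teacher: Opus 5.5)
Your proof is correct. Its first half coincides with the paper's: both obtain Eq.~\eqref{eq:projection} from the phase $e^{-it(\lambda_i-\lambda_j-\lambda_k+\lambda_l)}$, the property $\int_t e^{ixt}\mu(t)=1_{x=0}$, and the spectral assumptions. Your use of the gap condition only when $i\neq k$, with non-degeneracy covering $i=k$, is the right reading, since taken literally the stated condition cannot hold at zero gaps $i=j$, $k=l$.

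The routes diverge when identifying the surviving components. The paper rewrites the surviving operator space in the Pauli basis and expands the noisy state pairwise as $\bigotimes_i (II\pm XX)(II\pm ZZ)$. It then uses Eq.~\eqref{eq:proj2zero} to discard every term carrying an $XX$ alongside an $(II-ZZ)$ factor. What remains is $\bigotimes_{I_-}(II-ZZ)\otimes\bigotimes_{I_+}(II+ZZ)$, i.e.\ $\tfrac12(\Psi^+ + \Psi^-)$ or $\tfrac12(\Phi^+ + \Phi^-)$ on each pair. The paper handles $X_0=I$ separately, through $[H_d,Z_0]=0$ and the $U\otimes U^*$ invariance of the EPR state.

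You stay in the computational basis. Writing the input as $2^{-n}\sum_{i,k}(-1)^{b\cdot(i\oplus k)}\ketbra{i\oplus a,i}{k\oplus a,k}$ makes both cases one-line observations. For $a\neq 0$, no ket or bra has matching Alice/Bob labels, so only the diagonal survives and the $Z_0$ signs disappear. For $a=0$, every term has the form $\ketbra{ii}{kk}$, whose phase vanishes identically; as in the paper, this case needs no spectral assumption.

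Your version is shorter and treats both cases uniformly. It also avoids the cancellation rule~\eqref{eq:proj2zero}, which holds for the products that actually arise in the expansion but not for an arbitrary $O$ as stated (e.g.\ $O=XX$ on a pair $i\neq j$ leaves a nonzero diagonal operator). The paper's Pauli bookkeeping gives the output directly in the Bell-diagonal/Pauli-error language that the later detection-probability and phase-error estimates use. That form is, however, just as easy to read off from your diagonal expression.
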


\begin{proof}
    Eq.~\eqref{eq:trivial} is true because the diagonal Hamiltonian commutes with state $(Z_0\otimes I)\,\ketbra{\Phi^+}{\Phi^+}^{\otimes n} (Z_0\otimes I)$. Therefore for any time $t$, we always have
    \begin{equation}
    \begin{split}
        &\quad(e^{-iH_dt}\otimes e^{iH_dt})(P\otimes I)\,\ketbra{\Phi^+}{\Phi^+}^{\otimes n} (P\otimes I)(e^{-iH_dt}\otimes e^{iH_dt})^{\dagger}\\
        &=(Z_0\otimes I)(e^{-iH_dt}\otimes e^{iH_dt})\ketbra{\Phi^+}{\Phi^+}^{\otimes n}(e^{-iH_dt}\otimes e^{iH_dt})^{\dagger}(Z_0\otimes I)\\
        &=(Z_0\otimes I)\,\ketbra{\Phi^+}{\Phi^+}^{\otimes n} (Z_0\otimes I).
    \end{split}
    \end{equation}
    For the $X_0\neq I$ case, recall Eq.~\eqref{eq:projection} in the main text, we first prove this equation:
    \begin{equation}
    \begin{split}
        \mathcal{N}_{H_d}(\rho)&=\sum_{jklm}\int_{t}(e^{-iH_dt}\otimes e^{iH_dt})\rho_{jk,lm}\ketbra{jk}{lm} (e^{-iH_dt}\otimes e^{iH_dt})^{\dagger}\mu(t)\\
        &=\sum_{jklm}\int_{t}e^{-i(\lambda_j-\lambda_k-\lambda_{l}+\lambda_{m})t}\rho_{jk,lm}\ketbra{jk}{lm}\mu(t)\\
        &=\sum_{jklm}(\delta_{j,k}\delta_{l,m}+\delta_{j,l}\delta_{k,m}-\delta_{jk}\delta_{jl}\delta_{jm})\rho_{jk,lm}\ket{jk}\bra{lm}\\
        &=\sum_{j\neq k}(\rho_{jj,kk}\ketbra{jj}{kk}+\rho_{jk,jk}\ketbra{jk}{jk})+\sum_{j}\rho_{jj,jj}\ketbra{jj}{jj}.
    \end{split}
    \end{equation}
    The only term in a bipartite state $\rho_{jk,lm}$ that can survive under the bilateral Hamiltonian twirling is in the form of $\rho_{jk,jk}$ and $\rho_{jj,kk}$. We rewrite these two kinds of terms in Pauli basis. In this proof, we will stick to the pairwise ordering $\rho_{(AB)^{n}}$ for convenience, i.e. $2i$ and $2i+1$-th qubits form an entangled EPR pair that $2i$-th qubit are hold in Alice's system and $2i+1$-th qubit are hold in the Bob's system. In this ordering, the diagonal terms $\ketbra{jk}{jk}$ lie in the subspace generated by pure $Z$-type Pauli operators, i.e. $\langle I,Z\rangle^{\otimes 2n}$. For the terms like $\ketbra{ii}{jj}$, they could be viewed as, 
    \begin{equation}
        \ketbra{ii}{jj}=\ketbra{ii}{ii}(X_{i\oplus j}\otimes X_{i\oplus j}).
    \end{equation}
    In the pairwise ordering, one have
    \begin{equation}
    \begin{split}
        \ket{00}\bra{00}&=(II+ZZ+IZ+ZI)/4,\\
        \ket{11}\bra{11}&=(II+ZZ-IZ-ZI)/4.
    \end{split}
    \end{equation}
    Therefore states $\alpha\ket{00}\bra{00}+\beta\ket{00}\bra{11}$ lies in the span of $\langle II+ZZ, IZ+ZI\rangle$. Then it is clear that states in the form $\ketbra{ii}{jj}$ actually lie in the space $\langle II+ZZ, IZ+ZI\rangle^{\otimes n}\times\langle II, XX\rangle^{\otimes n}$. So the Hamiltonian bilateral twirling $\Lambda_{H}$'s effect is equivalent to applying a projection onto the subspaces $\langle I,Z\rangle^{\otimes 2n}+\langle II+ZZ, IZ+ZI\rangle^{\otimes n}\times\langle II, XX\rangle^{\otimes n}$. From these we can conclude that arbitrary indices $i$ and $j$, one always have:
    \begin{equation}\label{eq:proj2zero}
        \mathcal{N}_{H_d}\Bigl(\bigl((II-ZZ)_{2j,2j+1}\otimes O_{[2n]\backslash(2j, 2j+1)}\bigr)\cdot XX_{2i,2i+1}\Bigr)=0, 
    \end{equation}
    for arbitrary operator $O$ on remaining qubits other than $2j, 2j+1$. Or intuitively, as long as the Pauli combination has structure like $(II-ZZ)XX$, possibly on different sites, it would be projected to $0$. Denote $I_{-}$ to be the indices of qubits where $P$ has $X$-error on it and $I_{+}=n\backslash I_{-}$. The twirling could thus be written as,
    \begin{equation}\label{eq:DiagTwirl}
    \begin{split}
        &\quad\mathcal{N}_{H_d}\!\left( (P \otimes I)\ketbra{\Phi^+}{\Phi^+}^{\otimes n}(P \otimes I) \right)\\
        &=\mathcal{N}_{H_d}\Bigl((\otimes_{i=0}^{n-1}(P_i\otimes I))\ket{\Phi^+}_{(AB)^{n}}\bra{\Phi^+}\otimes_{i=0}^{n-1}(P_i\otimes I)\Bigr)\\
        &=\mathcal{N}_{H_d}\Bigl((\otimes_{i=0}^{n-1}(P_i\otimes I))\otimes_{i=0}^{n-1}(II+XX)(II+ZZ)_{2i,2i+1}\otimes_{i=0}^{n-1}(P_i\otimes I)\Bigr)\\
        &=\mathcal{N}_{H_d}(\otimes_{i\in I_{-}}(II+(-1)^{[P,X_{2i}]}XX)(II-ZZ)_{2i,2i+1}\otimes_{i\in I_{+}}(II+(-1)^{[P,X_{2i}]}XX)(II+ZZ)_{2i,2i+1}),
    \end{split}
    \end{equation}
    where in the second line $P_i={X_0}_i{Z_{0}}_i$ means the $i$-th operator of the Pauli operator $P$, the last line is obtained by moving the left Pauli operator $\otimes_{i=0}^{n-1}(P_i\otimes I)$ to the right to cancel with the right one and consider the anti-commutation relationship between Paulis. 
    For the case $P=X_0Z_0$ is not pure $Z$-type, the index set $I_{-}$ is non-empty. Any appearance of $XX$ in Eq.~\eqref{eq:DiagTwirl} would lead to the Pauli operations of the form $(II-ZZ)_{2j,2j+1}XX_{2i,2i+1}$ and thus becomes $0$ after applying $\mathcal{N}_{H_d}(D)$ according to Eq.~\eqref{eq:proj2zero}. Therefore the remaining terms are in the form of:
    \begin{equation}
    \begin{split}
        \mathcal{N}_{H_d}\!\left( (P \otimes I)\ketbra{\Phi^+}{\Phi^+}^{\otimes n}(P \otimes I) \right)
        &=\otimes_{i\in I_{-}}(II-ZZ)\otimes_{i\in I_{+}}(II+ZZ)_{2i,2i+1}\\
        &=\otimes_{i\in I_{-}}\frac{\ketbra{\Psi^+}{\Psi^+}+\ketbra{\Psi^-}{\Psi^-}}{2}\otimes_{i\in I_{+}}\frac{\ketbra{\Phi^+}{\Phi^+}+\ketbra{\Phi^-}{\Phi^-}}{2}\\
        &=X_0\Delta(\ketbra{\Phi^+}{\Phi^+})X_0.
    \end{split}
    \end{equation}
\end{proof}

\subsection{Proof of Lemma~\ref{lemma:OTOC}}

\begin{thmrecap}[Lemma~\ref{lemma:OTOC}]
Consider a Hamiltonian distillation protocol for $n$ noisy EPR pairs $\Lambda\!\left(\ket{\Phi^+}\bra{\Phi^+}^{\otimes n}\right)$ subject to Pauli error $\Lambda(\rho)=\sum_{P\in\mathsf{P}} c_P\, P\rho P$ on Alice's system. Hamiltonian $H$ together with a time ensemble $\mu$ is adopted to produce the channel $\mathcal{N}_H$ in the twirling step. Error detection is implemented by Alice measuring a POVM $M=\{M_x\}$ and Bob measuring the conjugate $M^{*}=\{M_x^{*}\}$ on $m$ qubits, chosen such that $\sum_x \mathrm{Tr}\!\left(M_x \otimes M_x^{*} \, \ketbra{\Phi^+}{\Phi^+}^{\otimes n}\right) = 1$. The outcome corresponding to $I-\sum_x M_x \otimes M_x^{*}$ is interpreted as detected error and discarded. After post-selection, the output fidelity $f_{\mathrm{output}}$ and yield $Y$ are given by
\begin{equation}
\begin{aligned}
    f_{\mathrm{output}} &= \frac{1}{1+\mathrm{OTOC}_{\Lambda,M}^{\mathcal{N}_H}/c_I}, \\
    Y &= \bigl(c_I+\mathrm{OTOC}_{\Lambda,M}^{\mathcal{N}_H}\bigr)\,(1-m/n).
\end{aligned}
\end{equation}
Here $c_I$ is the probability of obtaining identity channel in noise $\Lambda$. The averaged OTOC is defined as
\begin{equation}
\begin{aligned}
    \mathrm{OTOC}_{\Lambda,M}^{\mathcal{N}_H}
    = \int_t\,
    \sum_{P\in\mathsf{P}\backslash I}
    \sum_x
    c_P \,
    \mathrm{OTOC}^{e^{-iHt}}_{P,M_x}\cdot\mu(t).
\end{aligned}
\end{equation}
\end{thmrecap}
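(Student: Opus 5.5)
The approach is to compute the post-selection probability term by term in the Pauli expansion of $\Lambda$, and to convert each term into an OTOC using the EPR identity $U_A\ket{\Phi^+}=U_B^T\ket{\Phi^+}$ (Eq.~\eqref{eq:useful_identity}). By linearity, the state after twirling is $\sum_P c_P\,\mathcal{N}_H\bigl((P\otimes I)\Phi^{\otimes n}(P\otimes I)\bigr)$, where $\Phi=\ketbra{\Phi^+}{\Phi^+}$. I will fix a time $t$, write $U=e^{-iHt}$, and integrate over $\mu$ at the end. Since $(U\otimes U^*)\ket{\Phi^+}^{\otimes n}=\ket{\Phi^+}^{\otimes n}$, the twirled error term is $(UPU^\dagger\otimes I)\Phi^{\otimes n}(UPU^\dagger\otimes I)$. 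The $P=I$ term is therefore unchanged, and by the normalization assumption $\sum_x \mathrm{Tr}(M_x\otimes M_x^*\,\Phi^{\otimes n})=1$ it passes the check with probability exactly $1$.

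For $P\neq I$, the acceptance probability is $\sum_x \bra{\Phi^+}(UPU^\dagger\otimes I)(M_x\otimes M_x^*)(UPU^\dagger\otimes I)\ket{\Phi^+}$. I will move $M_x^*$ on Bob's side to Alice's side as $(M_x^*)^T=M_x^\dagger$, and then use $\bra{\Phi^+}(A\otimes I)\ket{\Phi^+}=\mathrm{Tr}(A)/d$ with the normalized EPR state. This gives $\frac1d\mathrm{Tr}\bigl(UPU^\dagger M_x UPU^\dagger M_x^\dagger\bigr)$. Here I must take care that $M_x$ acts only on the $m$ measured qubits (tensored with identity) and is Hermitian. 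Cycling $U$ through the trace gives $\frac1d\mathrm{Tr}(P\tilde M_x P\tilde M_x^\dagger)$ with $\tilde M_x=U^\dagger M_x U$. I will then rewrite this in the form $\frac1d\mathrm{Tr}(\tilde V^\dagger W^\dagger\tilde V W)$, matching the paper's definition of $\mathrm{OTOC}^{e^{-iHt}}_{P,M_x}$. The roles of $V$ and $W$, and whether the evolution sits on $P$ or on $M_x$, are interchangeable up to replacing $t\to -t$ and cyclicity. Since $P$ is Hermitian, this matches the convention with $\tilde V=U^\dagger VU$. Summing over $x$, weighting by $c_P$, and integrating over $\mu$ shows that the total acceptance probability of the error terms is exactly $\mathrm{OTOC}^{\mathcal{N}_H}_{\Lambda,M}$.

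The post-selected state is then $c_I\,\Phi^{\otimes n}$ plus the accepted error components, normalized by $p_{\rm acc}=c_I+\mathrm{OTOC}^{\mathcal{N}_H}_{\Lambda,M}$. The fidelity is $c_I/p_{\rm acc}=1/(1+\mathrm{OTOC}/c_I)$. This requires arguing that accepted error components contribute no overlap with $\ket{\Phi^+}^{\otimes(n-m)}$ on the surviving pairs, i.e.\ that the fidelity is counted as the probability that no error occurred. This is the step I expect to be the main obstacle. A twirled nontrivial $P$ could in principle leave a component on $\Phi^+$ in the unmeasured register; for example, this happens for errors supported only on the measured qubits, or for Hamiltonians that do not spread $P$ onto the kept register. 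To handle it, I would first read $f_{\rm output}$ as the probability that the surviving block is error-free, i.e.\ the conservative lower bound. I would then note that this equals the exact fidelity under the paper's genericity assumptions. The yield follows by multiplying the acceptance probability by the fraction $1-m/n$ of pairs kept per group.
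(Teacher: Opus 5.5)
Your computation is the paper's. Per Pauli error, $(U\otimes U^*)\ket{\Phi^+}^{\otimes n}=\ket{\Phi^+}^{\otimes n}$ and the transpose identity turn the acceptance probability into $\sum_x\mathrm{OTOC}_{P,M_x}$. Hence $p_{\rm acc}=c_I+\mathrm{OTOC}^{\mathcal{N}_H}_{\Lambda,M}$, $f=c_I/p_{\rm acc}$ and $Y=p_{\rm acc}(1-m/n)$; the paper only repackages this through a detection rate $R$. Your time-direction bookkeeping is cleaner than the paper's. The paper conjugates the detection projector as $V_t(\cdot)V_t^\dagger$ instead of $V_t^\dagger(\cdot)V_t$, with $V_t=e^{-iHt}\otimes e^{iH^*t}$, and so lands on $\mathrm{OTOC}_{P,M_x}$ directly. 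As you note, the two versions differ by $t\to-t$, which is harmless under the standing assumption $\int_t e^{ixt}\mu(t)=1_{x=0}$.

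The paper offers no argument for the step you flag. It simply writes $f_{\rm out}=c_I/(c_I+(1-c_I)(1-R))$, i.e.\ it implicitly identifies the output fidelity with the post-selected weight of the $P=I$ branch. Your ``conservative lower bound'' reading is therefore exactly what is proved, and it is a valid lower bound. Since $M_x$ acts only on the $m$ measured pairs, the $P=I$ branch leaves $\ket{\Phi^+}^{\otimes(n-m)}$ on the kept pairs, and every accepted error branch has nonnegative overlap.

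However, drop the claim that this equals the exact fidelity under the non-degeneracy and gap assumptions; it does not. Take a diagonal $H$ satisfying them, with Hadamard-basis detection. By Theorem~\ref{thm:diagonal_twirling}, an error whose $X$-part $X_0\neq I$ is supported only on measured pairs becomes $(X_0\otimes I)\Delta(\ketbra{\Phi^+}{\Phi^+}^{\otimes n})(X_0\otimes I)$. This is a product over pairs. It is accepted with probability $2^{-m}$ and leaves every kept pair in $\frac12(\ketbra{00}{00}+\ketbra{11}{11})$, contributing overlap $2^{-(n-m)}$ with $\ket{\Phi^+}^{\otimes(n-m)}$. Likewise, in the Haar setting of Theorem~\ref{thm:Haar} the averaged twirled error state is maximally mixed to leading order in $d$. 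Accepted error branches therefore still carry overlap of order $4^{-(n-m)}$. The lemma's fidelity formula is thus exact only under the paper's implicit convention and is otherwise a strict lower bound, which matters when $n-m$ is small.
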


\begin{proof}
    We define error detection rate $R$ as the ratio of the probability of obtaining distinct outcomes in the error detection step to the total probability of Pauli errors occurring. Suppose after Hamiltonian twirling the error detection rate is $R$ for Pauli channel $\Lambda$ where the probability of the identity channel is $c_I$, the fidelity and yield of the post-selected state is then given by 
    \begin{equation}\label{eq:rate&fidelity}
    \begin{split}
        f_{out}&=\frac{c_I}{(c_I+(1-c_I)(1-R))},\\
        Y&=(c_I+(1-c_I)(1-R))(1-m/n).
    \end{split}
    \end{equation}
    Since the Pauli noise could be viewed as classical mixture of different Pauli error. We only need to consider the error detection probability of each Pauli error individually. Then gathering all individual results we are able to compute the final yield and fidelity. 
    
    Suppose the EPR states suffer a Pauli error $P$ on Alice's system. The probability of detecting errors  could be expressed as
    \begin{equation}
    \begin{split}
        p(P,t)&=\bra{\Phi^+}^{\otimes n}(P\otimes I)(e^{-iHt}\otimes e^{iH^*t})(I-\sum_x M_x \otimes M_x^*) (e^{-iHt}\otimes e^{iH^*t})^{\dagger} (P\otimes I)\ket{\Phi^+}^{\otimes n}\\
        &=1-\sum_{x}\bra{\Phi^+}^{\otimes n}(P\otimes I)(e^{-iHt}\otimes e^{iH^*t})(M_x\otimes M_x^*) (e^{-iHt}\otimes e^{iH^*t})^{\dagger}(P\otimes I)\ket{\Phi^+}^{\otimes n}\\
        &=1-\sum_{x}\tr\Bigl(e^{iHt}Pe^{-iHt}M_xe^{iHt}Pe^{-iHt}M_x\Bigr)\\
        &=1-\sum_{x}\mathrm{OTOC}_{P,M_x}^{e^{-iHt}}.
    \end{split}
    \end{equation}
    The third line is because $\bra{\Phi^+}^{\otimes n}A\otimes B\ket{\Phi^+}^{\otimes n}=\tr(AB^T)$. Averaging over time ensemble $\mu$ and all Pauli errors, we can obtain the error detection rate as
    \begin{equation}
    \begin{split}
        R&=\frac{\int_t\sum_{P\in\mathsf{P}\backslash I}\sum_{x}c_P(1-\sum_{x}\mathrm{OTOC}_{P,M_x}^{e^{-iHt}})\mu(t)}{1-c_I}\\
        &=\frac{1-c_I-\mathrm{OTOC}_{\Lambda,M}^{\mathcal{N}_H}}{1-c_I}.
    \end{split}
    \end{equation}
    Take this into Eq.~\eqref{eq:rate&fidelity} proves this theorem.
\end{proof}

\subsection{Proof of Theorem~\ref{thm:Haar}}
\begin{thmrecap}[Theorem~\ref{thm:Haar}]
    Consider a Hamiltonian distillation protocol for $n$ noisy EPR pairs $\Lambda(\ket{\Phi^+}\bra{\Phi^+}^{\otimes n})$ subject to Pauli noise $\Lambda(\rho)=\sum_{P\in\mathsf{P}} c_P\, P\rho P$ on Alice’s system, with probability of identity channel being $c_I$. Consider a $n$-qubit Hamiltonian be $H = U D U^{\dagger}$, where $U$ is drawn from the Haar measure in the twirling step. After Hamiltonian twirling with $H$ and measuring $m$ qubit pairs in the computational basis, the output fidelity and yield converge, in the large $n$ limit, to
    \begin{equation}
    \begin{split}
        f_{\mathrm{out}} &= \frac{c_I}{c_I + 2^{-m}(1 - c_I)}, \\
        Y &= \bigl(c_I (1 - 2^{-m}) + 2^{-m}\bigr)(1 - m/n).
    \end{split}
    \end{equation}
\end{thmrecap}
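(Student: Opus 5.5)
The plan is to reduce the statement to Lemma~\ref{lemma:OTOC}. It suffices to show that, for every non-identity Pauli $P$ and the computational-basis POVM on $m$ qubits, $M_x=\ketbra{x}{x}\otimes I^{\otimes(n-m)}$ with $x\in\{0,1\}^m$, the Haar-averaged, time-averaged quantity $\mathbb{E}_U\int_t\sum_x \mathrm{OTOC}^{e^{-iHt}}_{P,M_x}\mu(t)$ tends to $2^{-m}$ as $n\to\infty$. This quantity does not depend on the particular $P\neq I$. Granting it, $\mathrm{OTOC}^{\mathcal{N}_H}_{\Lambda,M}\to 2^{-m}(1-c_I)$. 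Substituting into the lemma gives $f_{\mathrm{out}}=c_I/(c_I+2^{-m}(1-c_I))$ and $Y=(c_I(1-2^{-m})+2^{-m})(1-m/n)$, which is exactly the claim. The normalization condition of the lemma holds because $\sum_x \mathrm{Tr}(M_x\otimes M_x^*\,\ketbra{\Phi^+}{\Phi^+}^{\otimes n})=\sum_x \mathrm{Tr}(M_x)/d=1$.

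\textbf{Step 1: time average in the eigenbasis.} Write $H=UDU^\dagger$ with eigenvectors $\ket{u_a}=U\ket{a}$, and set $P_{ab}=\bra{u_a}P\ket{u_b}$ and $M_{ab}=\bra{u_a}M_x\ket{u_b}$. Expanding $\frac1d\mathrm{Tr}(P(t)M_xP(t)M_x)$ produces phases $e^{i(\lambda_a-\lambda_b+\lambda_c-\lambda_d)t}$. By the non-degenerate eigenvalue and gap conditions and the property $\int_t e^{iyt}\mu(t)=1_{y=0}$, exactly as in the proof of Theorem~\ref{thm:diagonal_twirling}, only the index patterns $(a=b,\,c=d)$ and $(a=d,\,c=b)$ survive, with the overlap $a=b=c=d$ subtracted once. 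This gives
\begin{equation*}
\int_t \mathrm{OTOC}^{e^{-iHt}}_{P,M_x}\mu(t)=\frac1d\Bigl[\sum_{a,c}P_{aa}P_{cc}|M_{ac}|^2+\sum_{a,b}|P_{ab}|^2M_{aa}M_{bb}-\sum_a P_{aa}^2M_{aa}^2\Bigr].
\end{equation*}

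\textbf{Step 2: Haar average.} Each term is a degree-$(4,4)$ polynomial in the entries of $U$ and $U^*$, because $P$ and $M_x$ are fixed operators conjugated by $U$. The averages are therefore fourth-moment Haar integrals, which I would evaluate with the Weingarten expansion from the twirling subsection.

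For the leading order I only need the identity-permutation contributions, using $\mathrm{Tr}P=0$, $\mathrm{Tr}P^2=d$ and $\mathrm{Tr}M_x=2^{n-m}$.
\begin{itemize}
\item In the second term, $M_{aa}\approx 2^{-m}$ and $\sum_{a,b}|P_{ab}|^2=d$. After dividing by $d$ and summing over the $2^m$ outcomes $x$, this term contributes $2^m\cdot 2^{-2m}=2^{-m}$.
\item In the first term, the diagonal elements $P_{aa}$ have mean zero and variance of order $1/d$. Its contribution is then about $\frac1d\sum_{a,c}\mathbb{E}[P_{aa}P_{cc}]\,\mathbb{E}|M_{ac}|^2$, which I expect to be $O(2^{-m})\cdot O(1/d)$ or smaller.
\item The third term is suppressed by a further factor of $d$.
\end{itemize}
Hence the total is $2^{-m}+O(d^{-1}\mathrm{poly})$, and since $d=2^n$ this converges to $2^{-m}$ in the large-$n$ limit.

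\textbf{Main obstacle.} The hard step is controlling the fourth-moment Weingarten sum rigorously. The identity-permutation term has to be separated cleanly, and every other permutation must be shown to be suppressed by at least $1/d$ relative to it, including after the sum over the $2^m$ outcomes. The concern is that $m$ may grow with $n$, so this sum could offset the suppression.

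I would first handle this by bounding each non-identity permutation contribution by traces of products of $P$, $M_x$ and their squares. These are all at most $d\cdot\mathrm{Tr}M_x$, using $\sum_x M_x=I$. That yields a correction of order $O(1/d)$, which vanishes in the limit.

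The first term is the delicate one, because $P_{aa}P_{cc}$ for $a\neq c$ is negatively correlated (its sum is $(\mathrm{Tr}P)^2=0$). I would compute it directly through the second-order twirling formula Eq.~\eqref{eq:secondtwirling}, applied to the pair $(P\otimes P)$ combined with $M_x$, to confirm that it is $O(2^{-m}/d)$.
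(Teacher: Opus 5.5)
Your proposal is correct. It uses the same machinery as the paper's proof: a fourth-moment Weingarten expansion in which the gap condition removes the oscillating phases and a single diagonal term with $\mathrm{Wg}(e,d)\approx d^{-4}$ dominates. The bookkeeping, however, is organized differently.

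\textbf{How the paper proceeds.} The paper does not pass through Lemma~\ref{lemma:OTOC}. It writes the detection probability with the single projector $\Pi=I-\Pi_{id}$ as a contraction of the twirl $U^{\otimes 4}OU^{\dagger\otimes 4}$, partially transposed on slots $2,3$. Here $O=\int_t e^{iDt}\otimes e^{-iDt}\otimes e^{-iDt}\otimes e^{iDt}\mu(t)$ is the time-averaged diagonal operator. The paper tabulates $\mathrm{Tr}(OS_\sigma)$ and identifies $\pi=\sigma=(13)(24)$ as the unique pair maximizing all three factors.

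\textbf{How you proceed.} You time-average first, splitting the OTOC into the three index patterns permitted by the gap condition. You then let $U$ act on $P\otimes P\otimes M_x\otimes M_x$. The dominant contribution is $\pi=\sigma=\tau_P$, the transposition pairing the two copies of $P$, inside $\frac1d\sum_{a,b}|P_{ab}|^2M_{aa}M_{bb}$.

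\textbf{What each ordering buys.} Your ordering makes the mechanism explicit. The leading term is $\frac1d\sum_{a,b}|P_{ab}|^2\,\mathbb{E}M_{aa}\,\mathbb{E}M_{bb}$ with $\mathbb{E}M_{aa}=2^{-m}$, which is the near-unbiasedness heuristic of the main text. It also invites error bounds uniform in $P$ and $m$, whereas the paper only argues that the other pairs are subleading. The paper's ordering isolates all Hamiltonian dependence in one table that is independent of $P$ and of the measurement. Because it works with $\Pi$ rather than the individual $M_x$, it never has to control a sum over $2^m$ outcomes.

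\textbf{Adjustments to your error control.}
\begin{enumerate}
\item Your claim that every non-leading trace factor is at most $d\,\mathrm{Tr}M_x$ fails for $\sigma=\tau_P$. Its factor is $\mathrm{Tr}(P^2)(\mathrm{Tr}M_x)^2=d(\mathrm{Tr}M_x)^2$, and $\tau_P$ also occurs paired with $\pi\neq\tau_P$. These terms are still harmless. There $|\mathrm{Wg}(\pi^{-1}\tau_P,d)|=O(d^{-5})$ and every index contraction is at most $d^2$. This gives $O((\mathrm{Tr}M_x)^2/d^3)$ per outcome, and $O(2^{-m}/d)$ after summing, since $\sum_x(\mathrm{Tr}M_x)^2=2^{-m}d^2$.
\item For all $\sigma\neq\tau_P$ your bound does hold. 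Combined with $\sum_x\mathrm{Tr}M_x=d$, it gives $O(1/d)$ uniformly in $m$, which settles your worry about growing $m$.
\item The term $\frac1d\sum_{a,c}P_{aa}P_{cc}|M_{ac}|^2$ needs no special treatment. Eq.~\eqref{eq:secondtwirling} could not evaluate it anyway, since $M_{ac}M_{ca}$ also depends on $U$ and the average does not factor. Its contraction with $\tau_P$ forces $a=c$ and equals only $d$, so the same crude bounds show it is $O(1/d)$.
\item The Haar-averaged OTOC does depend on $P$ at subleading order. It enters through $\mathrm{Tr}(PM_x)$ and $\mathrm{Tr}(PM_xPM_x)$, which are the paper's $y$ and $x_p$. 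What you need, and what your bounds supply, is that the $O(1/d)$ error is uniform in $P$, so the sum against $c_P$ is legitimate.
\end{enumerate}
As in the paper, your result concerns the $U$-averaged OTOC inserted into the lemma's nonlinear formula. A statement for a fixed typical $U$ would additionally require a concentration bound.
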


\begin{proof}
Throughout this proof, we use $\ket{\Phi^+}=\sum_{i=0}^{d-1}\ket{ii}$ to denote the unnormalized EPR state (typically $d=2^{n}$). We will specifically add a $1/\sqrt{d}$ factor to denote the quantum state. For the $m$-qubit computational basis measurement, the identical outcomes corresponds to the projector 
\begin{equation}
    \Pi_{id}=\sum_{i=0}^{2^{m-1}}\ket{ii}\bra{ii}\otimes I^{\otimes 2(n-m)}.
\end{equation}
The distinct outcomes, representing the event of detecting an error, thus corresponds to the projector
\begin{equation}
    \Pi=I-\Pi_{id}=I^{\otimes 2n}-\sum_{i=0}^{2^{m}-1}\ketbra{ii}{ii}\otimes I^{\otimes 2(n-m)}.
\end{equation}
For arbitrary bipartite state $\rho$, $\tr(\Pi\rho)$ indicates the probability that an error is detected for the state. Since we assume the noise channel is a Pauli channel, we only need to account for each Pauli error individually.

Consider a Pauli error $P$, we first express the error detection probability for EPR state affected by this error $P$ on Alice's side, which is
\begin{equation}
\begin{split}
    p(P)&=\frac{1}{d}\tr\Bigl(\Pi (U\otimes U^{*})(e^{-iDt}\otimes e^{iDt})(U^{\dagger}\otimes U^{T})(P\otimes I)\ketbra{\Phi^+}{\Phi^+}(P\otimes I)(U\otimes U^{*})(e^{iDt}\otimes e^{-iDt})(U^{\dagger}\otimes U^{T})\Bigr)\\
    &=\frac{1}{d}\bra{\Phi^+}(P\otimes I)(U\otimes U^{*})(e^{iDt}\otimes e^{-iDt})(U^{\dagger}\otimes U^{T})\Pi (U\otimes U^{*})(e^{-iDt}\otimes e^{iDt})(U^{\dagger}\otimes U^{T})(P\otimes I)\ket{\Phi^+}.
\end{split}
\end{equation}
To better understand the twirling effect, we rewrite it into the 4-twirling form,
\begin{equation}\label{eq:singleP}
\begin{split}
    p(P)&=\frac{1}{d}\bra{\Phi^+}_{12}\bra{\Phi^+}_{34}(PUe^{iDt}U^{\dagger})\otimes (U^{*}e^{-iDt}U^{T})\otimes (I\otimes P)\cdot(Ue^{-iDt}U^{\dagger}\otimes U^{*}e^{iDt}U^{T})^{T}\cdot\Pi_{1,2}\cdot\ket{\Phi^+}_{13}\ket{\Phi^+}_{24}\\
    &=\frac{1}{d}\bra{\Phi^+}_{12}\bra{\Phi^+}_{34}PUe^{iDt}U^{\dagger}\otimes U^{*}e^{-iDt}U^{T}\otimes U^{*}e^{-iDt}U^{T}\otimes PUe^{iDt}U^{\dagger}\cdot\Pi_{1,2}\cdot\ket{\Phi^+}_{13}\ket{\Phi^+}_{24},
\end{split}
\end{equation}
where the subscripts means the system that bipartite operator affects. This equation is also shown in Fig~\ref{fig:4-twirling} for visual understanding.

\begin{figure}[htbp]
    \centering
    \includegraphics[width=0.6\linewidth]{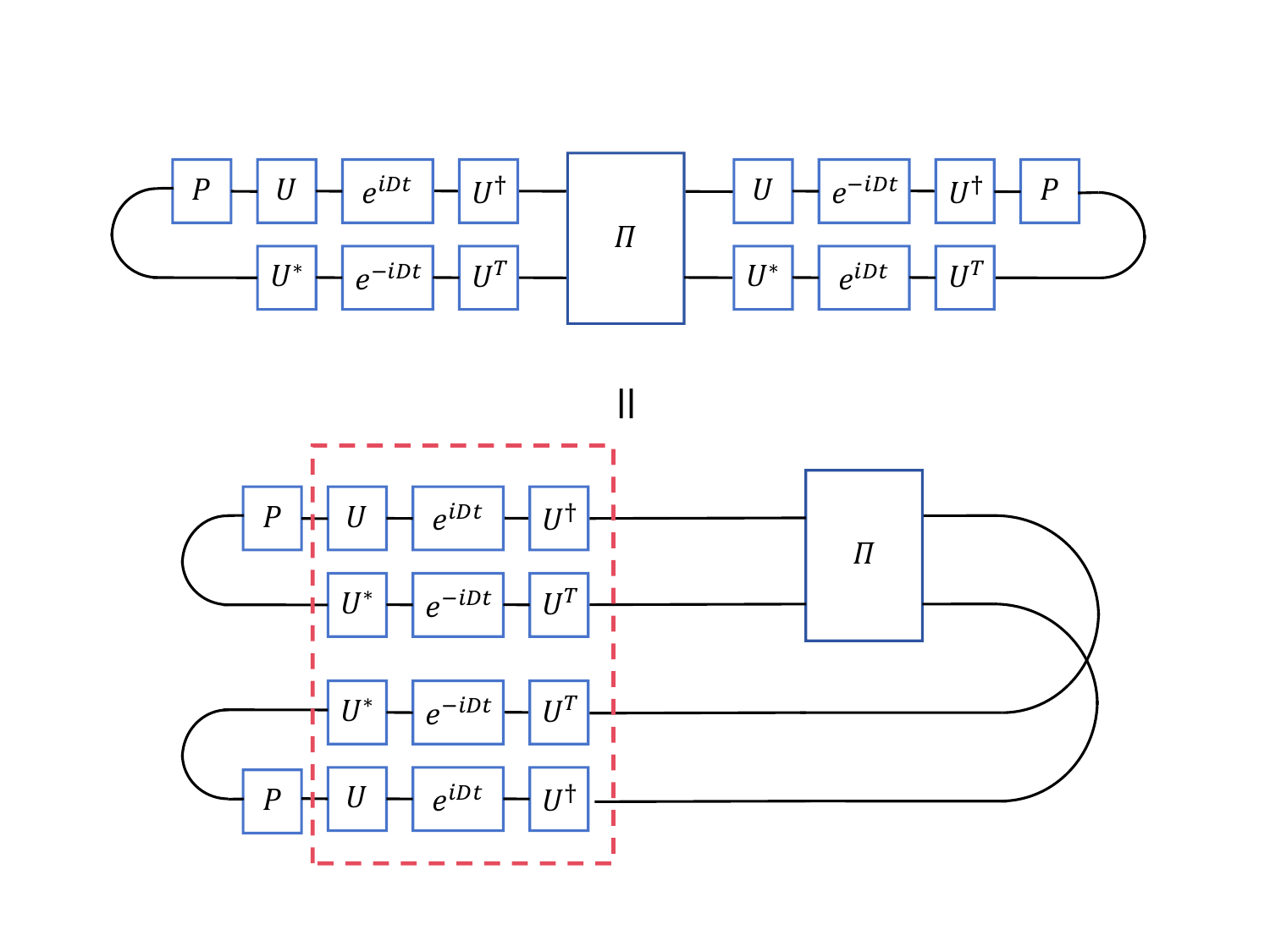}
    \caption{Conversion to 4-twirling}
    \label{fig:4-twirling}
\end{figure}

The key point here is to understand the $4$-twirling $Ue^{iDt}U^{\dagger}\otimes U^{*}e^{-iDt}U^{T}\otimes U^{*}e^{-iDt}U^{T}\otimes Ue^{iDt}U^{\dagger}$, or the red box in Figure~\ref{fig:4-twirling}. Consider the average of error detection probability over time ensemble $\mu$ and the Haar random unitary $U$, we would obtain: 
\begin{equation}\label{eq:target}
\begin{split}
    &\quad\int_{t}\int_{U}(U\otimes U^{*}\otimes U^{*}\otimes U)(e^{iDt}\otimes e^{-iDt}\otimes e^{-iDt}\otimes e^{iDt})(U^{\dagger}\otimes U^{T}\otimes U^{T}\otimes U^{\dagger})\mathrm{d}U\mathrm{d}t\\
    &=\Bigl(\int_{U,t}U^{\otimes 4}(e^{iDt}\otimes e^{-iDt}\otimes e^{-iDt}\otimes e^{iDt})U^{\dagger\otimes 4}\mu(t)\mathrm{d}U\Bigr)^{T2,T3}.
\end{split}
\end{equation}
The above equation is because $e^{iDt}$ is diagonal, thus ${e^{iDt}}^T=e^{iDt}$. The inner term $U^{\otimes 4}(e^{iDt}\otimes e^{-iDt}\otimes e^{-iDt}\otimes e^{iDt})U^{\dagger\otimes 4}$ is a standard $4$-twirling as introduced in Sec~\ref{appendix:twirling}. We rewrite it here as
\begin{equation}
\begin{split}
    \int_{U}U^{\otimes 4}OU^{\dagger\otimes 4}\mathrm{d}U&=\sum_{\pi,\sigma\in S_4}Wg(\pi^{-1}\sigma,d)S_{\pi}\tr(OS_\sigma^{\dagger}).
\end{split}
\end{equation}
For our case $O=\int_{t}(e^{iDt}\otimes e^{-iDt}\otimes e^{-iDt}\otimes e^{iDt})\mu(t)$. To compute Eq.~\eqref{eq:target}, we first compute the trace between different permutation $S_{\sigma}$ and the operator $O$:
\begin{equation}
\begin{split}
    &\quad\tr(\int_{t}(e^{iDt}\otimes e^{-iDt}\otimes e^{-iDt}\otimes e^{iDt})S_{\sigma}\mathrm{d}t)\\
    &=\left\{\begin{aligned}
        &\int_{t}\tr(e^{-iDt})^2\tr(e^{iDt})^2\mathrm{d}t,  \sigma=(1,1,1,1)\\
        &\int_{t}\tr(I)\tr(e^{-iDt})\tr(e^{iDt})\mathrm{d}t, \sigma=(2,1,1)\text{ with 2-cycle on $e^{iDt}$ and $e^{-iDt}$}\\
        &\int_{t}\tr(e^{\pm 2iDt})\tr(e^{\mp iDt})^2\mathrm{d}t, \sigma=(2,1,1)\text{ with 2-cycle on same sign}\\
        &\int_{t}tr(I)^2\mathrm{d}t, \sigma=(2,2)\text{ with both cycle on $e^{iDt}$ and $e^{-iDt}$}\\
        &\int_{t}\tr(e^{2iDt})\tr(e^{-2iDt})\mathrm{d}t, \sigma=(2,2)\text{ with both cycle on same sign}\\
        &\int_{t}\tr(e^{-iDt})\tr(e^{iDt})\mathrm{d}t, \sigma=(3,1)\\
        &\int_{t}\tr(I)\mathrm{d}t, \sigma=(4)
    \end{aligned}\right.,
\end{split}
\end{equation}
where notations like $(2,1,1)$ is the partition notation meaning that the permutation could be divided into one $2$-cycle and two $1$-cycle. 
We compute the average trace for each case,
\begin{enumerate}
\item $\sigma=(1,1,1,1)$: 
\begin{equation}\label{eq:tr(OS)start}
\begin{split}
    &\quad\int_{t}\tr(e^{-iDt})^2\tr(e^{iDt})^2\\
    &=\int_{t}\left(\sum_{j}e^{-id_jt}\right)^2\left(\sum_{j}e^{id_jt}\right)^2\\
    &=\int_{t}\sum_{j,k,l,m}e^{i(d_j+d_k-d_l-d_m)t}\\
    &=\sum_{j,k,l,m}\delta_{jl}\delta_{km}+\delta_{jm}\delta_{kl}-\delta_{jk}\delta_{jl}\delta_{jm}\\
    &=2d^2-d
\end{split}
\end{equation}
\item $\sigma=(2,1,1)\text{ with 2-cycle on $e^{iDt}$ and $e^{-iDt}$}$:
\begin{equation}
\begin{split}
    &\quad\int_{t}\tr(I)\tr(e^{-iDt})\tr(e^{iDt})\mathrm{d}t\\
    &=d\cdot\sum_{ij}\delta_{ij}\\
    &=d^2
\end{split}
\end{equation}
\item $\sigma=(2,1,1)\text{ with 2-cycle on same sign}$:
\begin{equation}
\begin{split}
    &\quad\int_{t}\tr(e^{\pm 2iDt})\tr(e^{\mp iDt})^2\mathrm{d}t\\
    &=\int_{t}\sum_{jkl}e^{i(2d_j-d_k-d_l)t}\mathrm{d}t\\
    &=\sum_{jkl}\delta_{jk}\delta_{jl}\\
    &=d
\end{split}
\end{equation}
\item $\sigma=(2,2)\text{ with both cycle on $e^{iDt}$ and $e^{-iDt}$}$:
\begin{equation}
    \int_{t}tr(I)^2\mathrm{d}t=d^2
\end{equation}
\item $\sigma=(2,2)\text{ with both cycle on same sign}$:
\begin{equation}
\begin{split}
    &\quad\int_{t}\tr(e^{2iDt})\tr(e^{-2iDt})\mathrm{d}t\\
    &=\int_{t}\sum_{jk}e^{2i(d_j-d_k)}\\
    &=\sum_{jk}\delta_{jk}\\
    &=d
\end{split}
\end{equation}
\item $\sigma=(3,1)$:
\begin{equation}
    \int_{t}\tr(e^{-iDt})\tr(e^{iDt})\mathrm{d}t=d
\end{equation}
\item $\sigma=(4)$:
\begin{equation}\label{eq:tr(OS)end}
    \int_{t}\tr(I)\mathrm{d}t=d
\end{equation}
\end{enumerate}
The full computation of Eq.~\eqref{eq:target} would be tedious. For simplicity we only consider approximation in the large $d$ limit. In this case we know that~\cite{collins2006integration}
\begin{equation}\label{eq:Weingarten}
\begin{split}
    Wg(1^4,d)&\approx d^{-4}\\
    Wg(21^2,d)&\approx -d^{-5}\\
    Wg(31,d)&\approx 2d^{-6}\\
    Wg(2^2,d)&\approx d^{-6}\\
    Wg(4,d)&\approx -5d^{-7}.
\end{split}
\end{equation}
Combine Eq.~\eqref{eq:Weingarten} and Eqs.~\eqref{eq:tr(OS)start}--\eqref{eq:tr(OS)end} one can have a approximation of Eq.~\eqref{eq:target} in the large $d$ limit. Note that now Eq.~\eqref{eq:target} would have the form of $\sum_{S_\sigma\in S_4}f(S_{\sigma},d)S_{\sigma}^{T_2T_3}$ where $f(S_\sigma, d)$ is a scalar function related to $S_{\sigma}$ and dimension $d$.
Finally to compute the error detection probability, the time ensemble and Haar random average of Eq.~\eqref{eq:singleP} which could be expressed as below:
\begin{equation}\label{eq:p_all}
\begin{split}
    p_{all}&=\frac{1}{d}\bra{\Phi^+}_{12}\bra{\Phi^+}_{34}P_1P_4\Biggl(\sum_{\pi,\sigma\in S_4}Wg(\pi^{-1}\sigma,d)S_{\pi}\tr(\int_{t}(e^{iDt}\otimes e^{-iDt}\otimes e^{-iDt}\otimes e^{iDt})\mu(t)S_\sigma^{\dagger})\Biggr)\Pi_{12}\ket{\Phi^+}_{13}\ket{\Phi^+}_{24}
\end{split}
\end{equation}
we have to evaluate the terms $\bra{\Phi^+}_{12}\bra{\Phi^+}_{34}P_1P_4S_{\pi}^{T_2T_3}\Pi_{12}\ket{\Phi^+}_{13}\ket{\Phi^+}_{24}$. The value of this quantity could be classified into following categories: 
\begin{enumerate}
    \item $0$, for permutation $(12);(34);(12)(34);(123);(124);(143);(243);(1234);(1243);(1432)$. As shown in Figure~\ref{fig:tr1}, this is the case where factor $\tr(P)$ occurs. Since $\tr(P)=0$ for any Pauli error, no matter what the leftover black box is, the overall value is $0$.
    \begin{figure}[htbp]
        \centering
        \includegraphics[width=\linewidth]{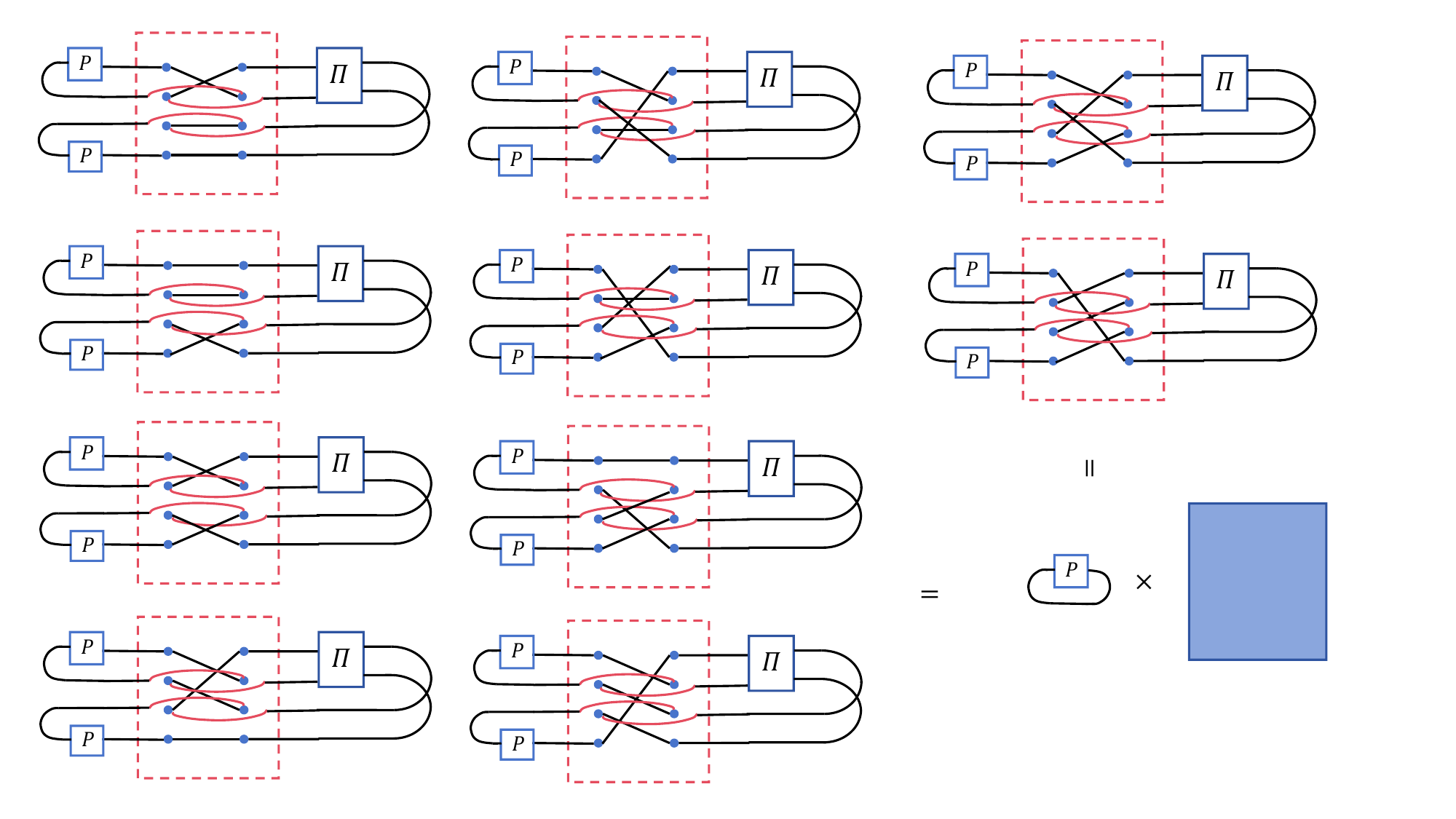}
        \caption{case1}
        \label{fig:tr1}
    \end{figure}
    \item $d(1-x_p)$, for permutation $(1^4);(14)(23);$. This case occurs when the trace can be expressed in Figure~\ref{fig:tr2}.
    \begin{figure}[htbp]
        \centering
        \includegraphics[width=\linewidth]{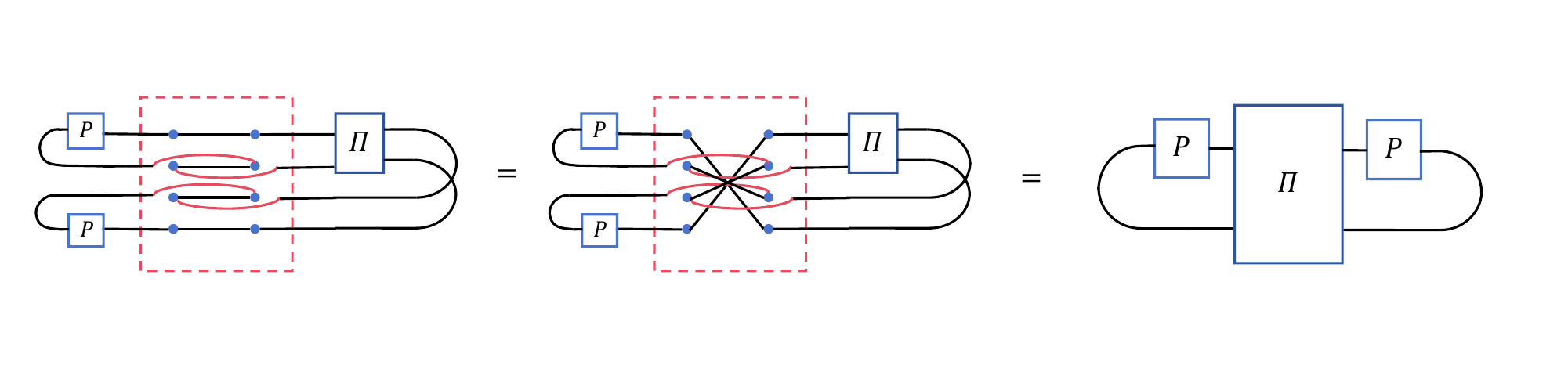}
        \caption{case2}
        \label{fig:tr2}
    \end{figure}
    In this case the trace should be 
    \begin{equation}
        \bra{\Phi^+}I-\sum_{i=0}^{2^{m}-1}P_1\ketbra{ii}{ii}\otimes I^{\otimes 2(n-m)}P_1\ket{\Phi^+}=d-\sum_{i=0}^{2^{m}-1}\bra{\Phi^+}P_1\ketbra{ii}{ii}\otimes I^{\otimes 2(n-m)}P_1\ket{\Phi^+}.
    \end{equation}
    The latter term is $0$ if $P$ has $X$ on the first $m$ qubits and $d$ if not. So this trace is $d(1-x_p)$ depending on whether $x_p=0$($P$ have $X$ on first $m$ qubits) or not.
    \item $d^2(1-2^{-m})$, for permutation $(13);(24);(1324);(1423)$
    This is the cases where factor in Figure~\ref{fig:tr3} occurs. And we have 
    \begin{equation}
        \tr(\Pi)=2^{2n}-2^{m}\cdot 2^{2(n-m)}=2^{2n}(1-2^{-m})=d^2(1-2^{-m})
    \end{equation}
    \begin{figure}[htbp]
        \centering
        \includegraphics[width=0.8\linewidth]{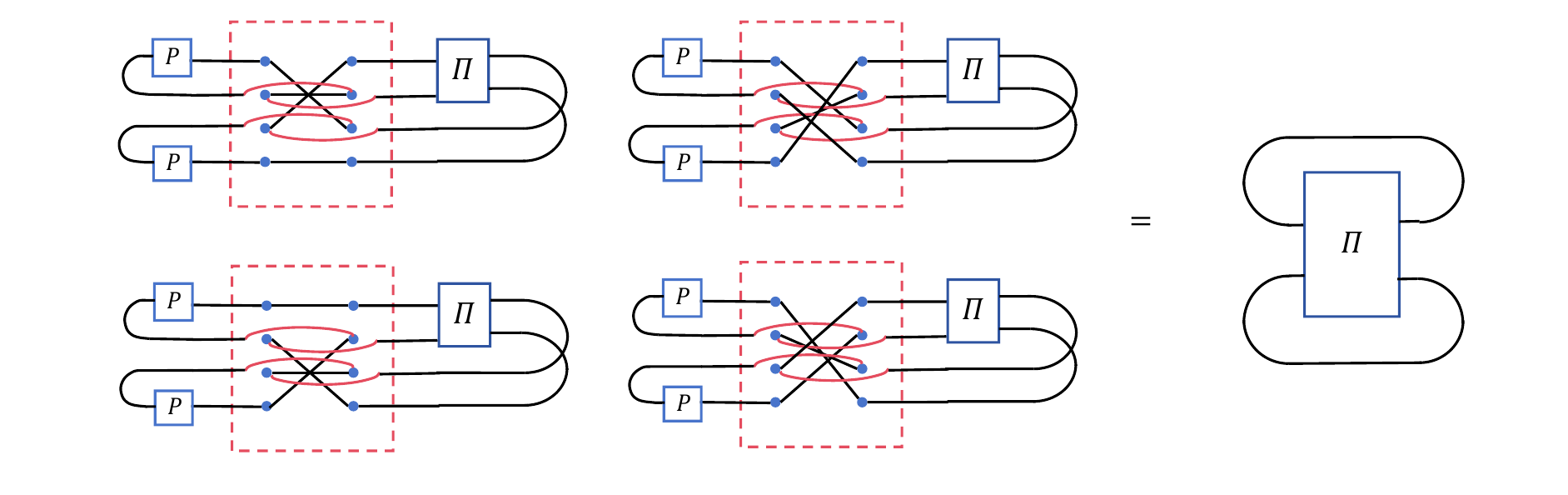}
        \caption{case3}
        \label{fig:tr3}
    \end{figure}
    \item $0$, for permutations $(132);(142);(134);(234)$. This is the case shown in Figure~\ref{fig:tr3-5} where we have the following expression:
    \begin{equation}
        \bra{\Phi^+}\Pi\ket{\Phi^+}=0
    \end{equation}
    \begin{figure}[htbp]
        \centering
        \includegraphics[width=0.8\linewidth]{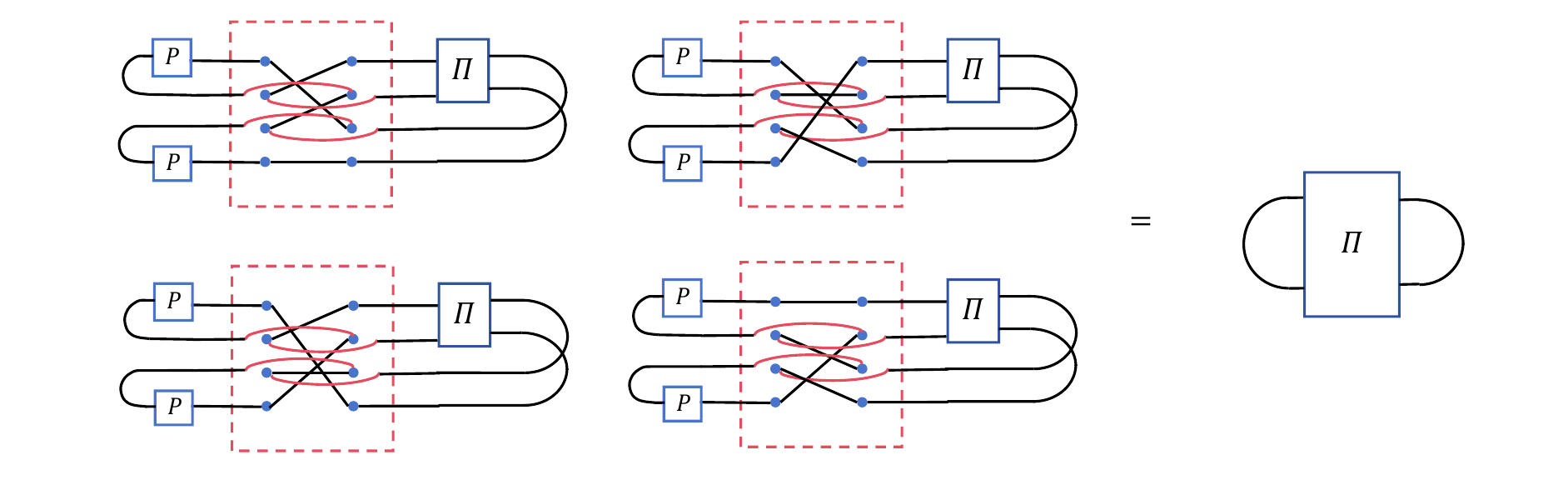}
        \caption{case4}
        \label{fig:tr3-5}
    \end{figure}
    \item $\pm y\cdot d^22^{-m}$, for permutation $(14);(23)$
    This is the cases where factor in Figure~\ref{fig:tr4} occurs.
    \begin{figure}[htbp]
        \centering
        \includegraphics[width=0.8\linewidth]{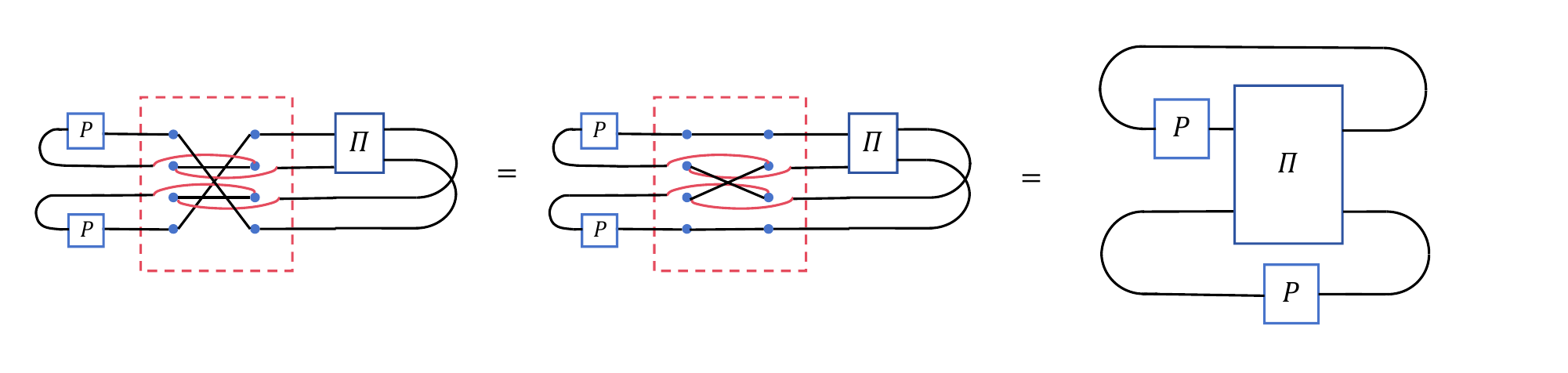}
        \caption{case5}
        \label{fig:tr4}
    \end{figure}
    This time we have,
    \begin{equation}
    \begin{split}
        \tr(P\otimes P^{T}\Pi)&=\pm\tr(P^{\otimes 2}\Pi)\\
        &=\pm\tr(P^{\otimes 2})\mp\sum_{i}\tr(P^{\otimes 2}\ketbra{ii}{ii}{\otimes I^{2(n-m)}})
    \end{split}
    \end{equation}
    The first term is obviously $0$. The second term is nonzero only if $P$ is $I$ on all $2(n-m)$ qubits and is pure $Z$-type in the $2m$ qubits. We use $y$ to denote whether this case is satisfied or not. Then this trace could be expressed as $\pm y\cdot d^22^{-m}$.
    \item $d^3(1-2^{-m})$, for permutation $(13)(24)$. This is a very special case where the trace is as in Figure~\ref{fig:tr5}.
    \begin{figure}[htbp]
        \centering
        \includegraphics[width=0.6\linewidth]{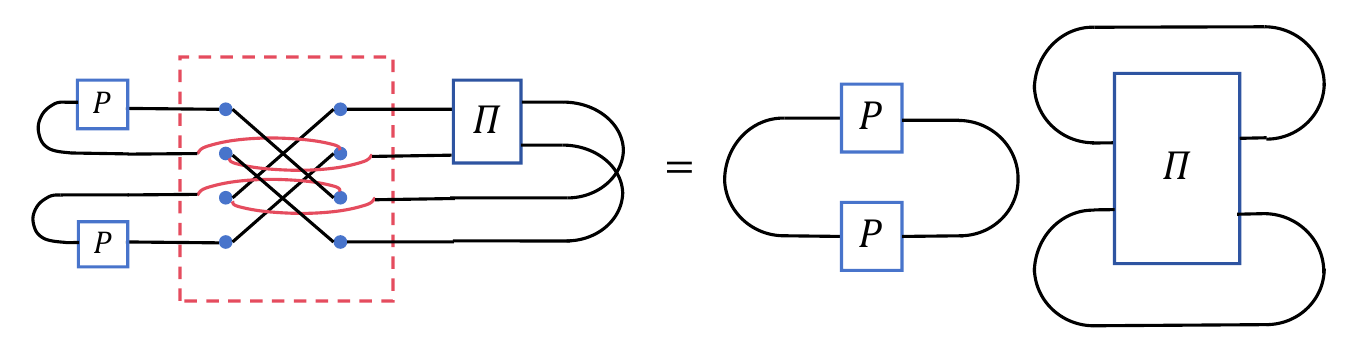}
        \caption{case6}
        \label{fig:tr5}
    \end{figure}
    It is $d^2(1-2^{-m})$ multiply with another $d$ obtained from the left circle.
    \item $0$, for permutation $(1342)$. This time the trace is as in Figure~\ref{fig:tr6}
    \begin{figure}[htbp]
        \centering
        \includegraphics[width=0.6\linewidth]{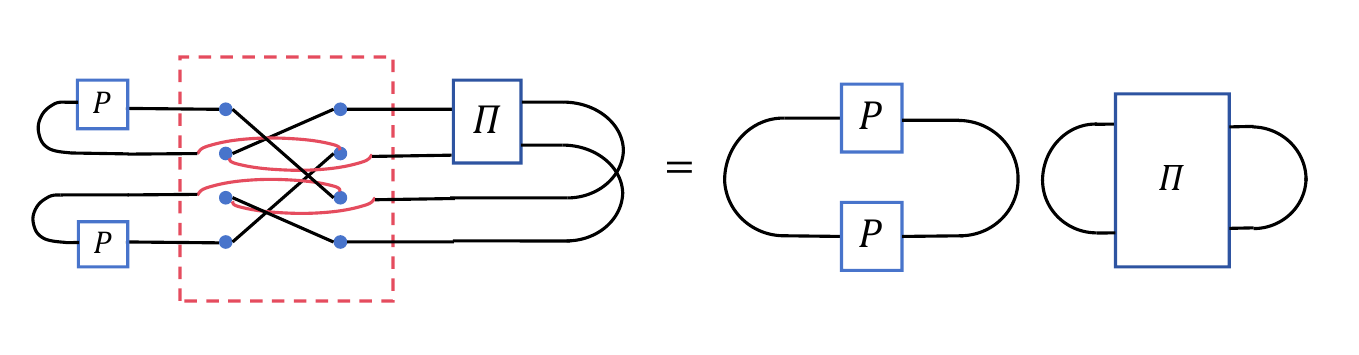}
        \caption{case7}
        \label{fig:tr6}
    \end{figure}
    The trace is $0$ due to the structure of $\bra{\Phi^+}\Pi\ket{\Phi^+}$.
\end{enumerate}

The leading term of Eq.~\eqref{eq:p_all} in the large $n$ limit should be the one that maximize all three of the Weingarten function $Wg(\pi^{-1}\sigma,d)$, the trace between diagonal matrices and permutation $\tr(\int_{t}(e^{-iDt}\otimes e^{iDt}\otimes e^{iDt}\otimes e^{-iDt})S_{\sigma}^{\dagger}\mathrm{d}t)$, and the error detection probability trace $\bra{\Phi^+}_{12}\bra{\Phi^+}_{34}P_1\otimes P_4S_{\pi}^{T_2T_3}\Pi_{12}\ket{\Phi^+}_{13}\ket{\Phi^+}_{24}$. And actually there is exactly one such $(\pi,\sigma)$ pair that simultaneously maximize all these three terms. First note that for Weingarten function only $Wg(1^4,d)\approx d^{-4}$ while others are smaller in magnitude. This restricts $\sigma=\pi$. Then consider $\bra{\Phi^+}_{12}\bra{\Phi^+}_{34}P_1\otimes P_4S_{\pi}^{T_2T_3}\Pi_{12}\ket{\Phi^+}_{13}\ket{\Phi^+}_{24}$, the only one with $d^3$ scaling is $(13)(24)$. Also note that $(13)(24)$ in the $\tr(\int_{t}(e^{-iDt}\otimes e^{iDt}\otimes e^{iDt}\otimes e^{-iDt})S_{\sigma}\mathrm{d}t)$ also gives the $d^2$ scaling. One can verify that this is the only pair that maximize all these three terms. So in the large $d$ limit, we only need to count the contribution merely from this term, which is,
\begin{equation}
\begin{split}
    p_{all}&\approx\frac{1}{d}\frac{1}{d^4}d^3(1-2^{-m})\cdot d^2+O(d^{-1})\\
    &\approx 1-2^{-m}+O(d^{-1}).
\end{split}
\end{equation}
Since this relationship holds independent of what Pauli error $P$ is, we can conclude that the fidelity and yield should converge to,
\begin{equation}
\begin{split}
    f_{\text{out}}&=\frac{c_I}{c_I+(1-c_I)(1-2^{-m})},
    \\
    Y&=(c_I+2^{-m}(1-c_I))(1-m/n),
\end{split}
\end{equation}
in the large $d$, or equivalently $n$, limit. 
\end{proof}

\subsection{Proof of Theorem~\ref{thm:clifford}}
\begin{thmrecap}[Theorem~\ref{thm:clifford}]
    Consider a Hamiltonian distillation protocol for $n$ EPR pairs $\ket{\Phi^+}_{AB}^{\otimes n}$, each subjected to i.i.d. local depolarizing noise of strength $p$ on Alice’s side. Let a Hamiltonian distillation protocol be applied with the following choices: the Hamiltonian used in the twirling step is of the form $H = C D C^{\dagger}$, where $C$ is drawn from the Clifford group and $D$ is diagonal. The error-detection step is implemented via a projective measurement defined by the orthogonal projectors $\left\{C \left( \prod_{j=1}^{m} \frac{I + (-1)^{s_j} X_j}{2} \right)\otimes I^{\otimes (n-m)} C^{\dagger}\;:\;\vec{s} \in \{0,1\}^{m}\right\}$. The output fidelity could be lower bounded by:
    \begin{equation}
        f_{\rm{out}}\geq \frac{1}{1+(\Delta+2^{-m}(1-\Delta-(1-3p/4)^{n}))/(1-3p/4)^{n}}.
    \end{equation}
    Here $\Delta$ is defined for an arbitrary integer parameter $d\leq n$, given by:
    \begin{equation}\label{eq:Delta}
        \Delta=2^{-n\Bigl(1-H(d/n)-d/n\log_2{3}-\delta\Bigr)}+2^{-n\Bigl(\log_2\frac{2}{4-3p}+d/n\log_2\frac{4-3p}{p}\Bigr)}.
    \end{equation}
    where $H(\cdot)$ denotes the binary entropy function, and $\delta > 0$ is an arbitrarily small constant.
\end{thmrecap}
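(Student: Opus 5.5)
The plan is to conjugate everything by the Clifford $C$ so that the problem reduces to the diagonal case of Theorem~\ref{thm:diagonal_twirling}. I will then split the Pauli errors into those that are scrambled (detected except with probability $2^{-m}$) and those that are not. The latter will be bounded in probability by $\Delta$, using the randomness of $C$ together with a weight-counting argument.

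\textbf{Step 1: reduction to the diagonal case.} Since $H=CDC^{\dagger}$, we have $e^{-iHt}\otimes e^{iH^*t}=(C\otimes C^*)(e^{-iDt}\otimes e^{iDt})(C^{\dagger}\otimes C^{T})$, and $C\otimes C^*$ fixes $\ket{\Phi^+}^{\otimes n}$. The measurement projectors are also $C(\cdot)C^{\dagger}$-conjugates of $X$-basis projectors on the first $m$ qubits. Hence a Pauli error $P$ on Alice's side is equivalent to the Pauli error $Q=C^{\dagger}PC$ under diagonal twirling with $D$, followed by an $X$-basis comparison on $m$ pairs.

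By Theorem~\ref{thm:diagonal_twirling}, if $Q$ has a nontrivial $X$-part, the twirled state is $(X_0\otimes I)\Delta(\ketbra{\Phi^+}{\Phi^+}^{\otimes n})(X_0\otimes I)$. Since $\Delta$ is a uniform mixture over $\{I,Z\}^{\otimes n}$, Alice's and Bob's $X$-outcomes agree with probability exactly $2^{-m}$. This is the same computation as the remark after Theorem~\ref{thm:diagonal_twirling}; the $X_0$ factor does not affect the $X$-basis statistics. If instead $Q\in\{I,Z\}^{\otimes n}\setminus\{I\}$, i.e.\ $P\in\mathsf{P}_{ud}$, I simply bound the pass probability by $1$.

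\textbf{Step 2: the fidelity bound.} Let $c_I=(1-3p/4)^n$ and $q=\Pr[P\in\mathsf{P}_{ud}]$. Exactly as in Eq.~\eqref{eq:rate&fidelity}, the post-selected fidelity satisfies
\[
f_{\rm out}\ge \frac{c_I}{c_I+q+2^{-m}(1-c_I-q)}.
\]
The right-hand side is decreasing in $q$ because $1-2^{-m}\ge 0$. It therefore suffices to show $q\le\Delta$, where $q$ is understood as averaged over $C$ drawn uniformly from the Clifford group, so that some $C$ achieves the bound.

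\textbf{Step 3: bounding $q$.} Choose a weight cutoff $d$ and split according to whether $|P|\le d$ or $|P|>d$.

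\emph{Low weight.} A uniformly random Clifford maps any fixed non-identity Pauli to a uniformly random non-identity Pauli. Hence
\[
\Pr_C\bigl[P\in C\{I,Z\}^{\otimes n}C^{\dagger}\bigr]=\frac{2^n-1}{4^n-1}\le 2^{-n}.
\]
The number of Paulis of weight at most $d$ is $\sum_{w\le d}\binom{n}{w}3^w\le 2^{n(H(d/n)+\frac{d}{n}\log_2 3+\delta)}$, where $\delta$ absorbs the polynomial factor. A union bound then gives the first term of $\Delta$.

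\emph{High weight.} Under i.i.d.\ depolarizing noise, a Pauli of weight $w$ has probability $(p/4)^w(1-3p/4)^{n-w}$. When $p<4/5$, i.e.\ $p/4<1-3p/4$, this is nonincreasing in $w$, so for $w>d$ it is at most $(p/4)^d(1-3p/4)^{n-d}$. The set $\mathsf{P}_{ud}$ has fewer than $2^n$ elements, so its heavy part carries probability at most
\[
2^n(1-3p/4)^n\Bigl(\tfrac{p}{4-3p}\Bigr)^d=2^{-n\left(\log_2\frac{2}{4-3p}+\frac{d}{n}\log_2\frac{4-3p}{p}\right)},
\]
which is the second term of $\Delta$.

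\textbf{Main obstacle.} I expect the main obstacle to be Step 1: checking carefully that the Clifford-rotated projective measurement exactly reproduces the $2^{-m}$ pass probability on the globally dephased state. In particular, one must verify that $C^{*}$ on Bob's side combines with the conjugate projectors to give a genuine $X$-basis comparison, including the sign conventions of the $X_0$ shift.

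A secondary issue is the interpretation of ``$C$ drawn from the Clifford group''. The bound is proved on average over $C$, and I will conclude that it holds in expectation and hence for some fixed $C$. The low-weight contribution is $q$-independent and is averaged over $C$. For the high-weight contribution I use only the deterministic cardinality bound $|\mathsf{P}_{ud}|<2^n$, which holds for every $C$.
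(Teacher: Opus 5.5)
Your proposal is correct and follows essentially the paper's proof: conjugating by $C$ reduces to Theorem~\ref{thm:diagonal_twirling}, which gives pass probability $2^{-m}$ outside $\mathsf{P}_{ud}$, and $\Pr\{\mathsf{P}_{ud}\}$ is then bounded by a Clifford-averaged union bound over Paulis of weight at most $d$ plus the cardinality bound $2^n(p/4)^d(1-3p/4)^{n-d}$ for the heavy part (you split by the weight of the error rather than, as the paper does, by whether $C$ maps some nontrivial $Z$-type Pauli to low weight, and the two splits are equivalent). One small slip: $p/4<1-3p/4$ is equivalent to $p<1$, not $p<4/5$, so your monotonicity step in fact covers the whole range $p<1$ that the theorem implicitly assumes.
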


\begin{proof}
    When the diagonalizing unitary of Hamiltonian is a Clifford gate $C$, the Pauli operators in the set $\mathsf{P}_{ud}=C\{ I,Z\}^{\otimes n} C^{\dagger}\backslash I^{\otimes n}$ actually commutes with the Hamiltonian. The Pauli operators in this set becomes hard to detect because they cannot be scrambled. We first bound the probability of the operators in $\mathsf{P}_{ud}$ occurring. For certain $C$ in Clifford group, using $\omega(d)$ to denote the set of all non-identity Pauli operators whose weight is smaller than $d$, the probability of occuring Pauli operators in $\mathsf{P}_{ud}$ whose weight is smaller than $d$ could be written as, 
    \begin{equation}
    \begin{split}
        p_{lw}(C)&=\Pr\{\exists\sigma\in\langle I,Z\rangle^{\otimes n}\backslash I^{\otimes n}, \exists\nu\in \omega(d), \tr(\nu C\sigma C^{\dagger})\neq 0\}\\
        &\leq\sum_{\sigma\in\langle I,Z\rangle^{\otimes n}\backslash I^{\otimes n}}\sum_{\nu\in\omega(d)}\Pr\{\tr(\nu C\sigma C^{\dagger})\neq 0\}\\
        &=\frac{1}{2^n}\sum_{\sigma\in\langle I,Z\rangle^{\otimes n}\backslash I^{\otimes n}}\sum_{\nu\in\omega(d)}|\tr(\nu C\sigma C^{\dagger})|\\
        &=\frac{1}{4^n}\sum_{\sigma\in\langle I,Z\rangle^{\otimes n}\backslash I^{\otimes n}}\sum_{\nu\in\omega(d)}\tr(\nu^{\otimes 2} C^{\otimes 2}\sigma^{\otimes 2} C^{\dagger\otimes 2})
    \end{split}
    \end{equation}
    The second line is by the union bound and the third line is due to the fact that non-traceless $\nu C\sigma C^{\dagger}$ must be $\pm I$. The final line is by the property that $\tr(A\otimes B)=\tr(A)\cdot\tr(B)$. Therefore the probability of occuring low weight Pauli operators, after average on the whole Clifford group, becomes
    \begin{equation}
    \begin{split}
        \mathbb{E}_{\mathcal{C}}[p_{lw}(C)]&\leq\frac{1}{4^n}\mathbb{E}_{\mathcal{C}}\left[\sum_{\sigma\in\langle I,Z\rangle^{\otimes n}\backslash I^{\otimes n}}\sum_{\nu\in\omega(d)}\tr(\nu^{\otimes 2} C^{\otimes 2}\sigma^{\otimes 2} C^{\dagger\otimes 2})\right]\\
        &=\frac{1}{4^{n}}\sum_{\sigma\in\langle I,Z\rangle^{\otimes n}\backslash I^{\otimes n}}\sum_{\nu\in\omega(d)}\tr\left(\nu^{\otimes 2}\frac{-2^nI+4^nF}{2^{n}(4^n-1)}\right)\\
        &=\frac{1}{4^n}\cdot (2^n-1)\cdot|\omega(d)|\cdot\frac{4^n}{4^n-1}\\
        &=\frac{2^n-1}{4^n-1}\cdot\sum_{i=m}^{d}\binom{n}{m}3^{m}
    \end{split}
    \end{equation}
    Suppose $d= k\cdot n$ for some $k<3/4$, then by the Shannon's coding theory, in the large $n$ limit we will have
    \begin{equation}
        \sum_{m=1}^{d}\binom{n}{m}3^{m}\leq 2^{n\Bigl(H(k)+k\log_23+\delta\Bigr)},
    \end{equation}
    for some arbitrarily small constant $\delta>0$ where $H(x)=-x\log{x}-(1-x)\log{(1-x)}$ is the Shannon entropy. Therefore we can bound the average low weight probability as
    \begin{equation}\label{eq:random_bound}
    \begin{split}
        \mathbb{E}_{\mathcal{C}}[p_{lw}(C)]\leq 2^{-n(1-H(d/n)-d/n\log_2{3}-\delta)}
    \end{split}
    \end{equation}
    The term $(1-H(d/n)-d/n\log_2{3})$ is larger than $0$ for any $d/n$ smaller than $0.55$. For local depolarizing noise with noise strength $p$, the occuring probability of this set $\mathsf{P}_{ud}$ can be bounded as,
    \begin{equation}\label{eq:Pud}
    \begin{split}
        \Pr\{\mathsf{P}_{ud}\}&\leq\Pr\{\exists\sigma\in\langle I,Z\rangle^{\otimes n}\backslash I^{\otimes n}, |C\sigma C^{\dagger}|\leq d\}\\
        &\quad+\Pr\{C\langle I,Z\rangle^{\otimes n}C^{\dagger}\backslash I^{\otimes n}\wedge\forall\sigma\in\langle I,Z\rangle^{\otimes n}\backslash I^{\otimes n}, |C\sigma C^{\dagger}|> d\}\\
        &\leq 2^{-n\Bigl(1-H(d/n)-d/n\log_2{3}-\delta\Bigr)}+2^{n}\left(\frac{p}{4}\right)^{d}\left(1-\frac{3p}{4}\right)^{n-d}\\
        &=2^{-n\Bigl(1-H(d/n)-d/n\log_2{3}-\delta\Bigr)}+2^{-n\Bigl(\log_2\frac{2}{4-3p}+d/n\log_2\frac{4-3p}{p}\Bigr)}
    \end{split}
    \end{equation}
    The second line is dut to Eq.~\eqref{eq:random_bound} and the condition that all elements have weights larger than $d$ in the second term. For error $P\not\in\mathsf{P}_{ud}$, define the error projector to be
    \begin{equation}
    \begin{split}
        \Pi&=I-\sum_{\vec{s}\in\{0,1\}^{m}}\Biggl(C\left( \prod_{j=1}^{m} \frac{I + (-1)^{s_j} X_j}{2} \right)\otimes I^{\otimes (n-m)} C^{\dagger}\Biggr)\otimes\Biggl(C^{*}\left( \prod_{j=1}^{m} \frac{I + (-1)^{s_j} X_j}{2} \right)\otimes I^{\otimes (n-m)} C^{T}\Biggr)
    \end{split}
    \end{equation}
    the probability of it being detected is,
    \begin{equation}
    \begin{split}
        p(P)&=\int_t\tr\Bigl(\Pi (C\otimes C^{*})(e^{-iDt}\otimes e^{iDt})(C^{\dagger}\otimes C^{T})(P\otimes I)\ketbra{\Phi^+}{\Phi^+}^{\otimes n}(P\otimes I)(C\otimes C^{*})(e^{iDt}\otimes e^{-iDt})(C^{\dagger}\otimes C^{T})\Bigr)\mu(t)\\
        &=1-\int_t\sum_{\vec{s}\in\{0,1\}^{m}}\tr\Biggl(\left( \prod_{j=1}^{m} \frac{I + (-1)^{s_j} X_j}{2} \right)^{\otimes 2}(e^{-iDt}\otimes e^{iDt})(C^{\dagger}PC\otimes I)\ketbra{\Phi^+}{\Phi^+}^{\otimes n}(C^{\dagger}PC\otimes I)(e^{iDt}\otimes e^{-iDt})\Biggr)\mu(t).
    \end{split}
    \end{equation}
    Since $C^{\dagger}PC$ is not pure $Z$-type, applying Theorem~\ref{thm:diagonal_twirling}, we could obtain,
    \begin{equation}\label{eq:nonPud}
    \begin{split}
        p(P)&=1-\sum_{\vec{s}\in\{0,1\}^{m}}\tr\Biggl(\left( \prod_{j=1}^{m} \frac{I + (-1)^{s_j} X_j}{2} \right)^{\otimes 2}(X_p\otimes I)\Delta\!\left(\ketbra{\Phi^+}{\Phi^+}^{\otimes n}\right)(X_p\otimes I)\Biggr)\\
        &=1-2^{-m}.
    \end{split}
    \end{equation}
    The second line is because among all errors in $\Delta(\cdot)$, only the identity channel can produce nonzero trace with $(\left( \prod_{j=1}^{m} \frac{I + (-1)^{s_j} X_j}{2} \right)^{\otimes 2}$, which occurs with probability only $2^{-m}$. Combined Eq.~\eqref{eq:Pud} and Eq.~\eqref{eq:nonPud} we could bound the fidelity by:
    \begin{equation}
    \begin{split}
        f_{\rm{out}}&=\frac{(1-3p/4)^{n}}{(1-3p/4)^{n}+(1-R_{ud})\cdot\Pr\{\mathsf{P}_{ud}\}+2^{-m}\bigl(1-(1-3p/4)^{n}-\Pr\{\mathsf{P}_{ud}\}\bigr)}\\
        &\geq\frac{(1-3p/4)^{n}}{(1-3p/4)^{n}+\Pr\{\mathsf{P}_{ud}\}+2^{-m}\bigl(1-(1-3p/4)^{n}-\Pr\{\mathsf{P}_{ud}\}\bigr)}\\
        &\geq\frac{(1-3p/4)^{n}}{(1-3p/4)^{n}+\Delta+2^{-m}\bigl(1-(1-3p/4)^{n}-\Delta\bigr)},
    \end{split}
    \end{equation}
    where $R_{ud}\in[0,1]$ is the detection rate of set $\mathsf{P}_{ud}$ and $\Delta$ is defined in Eq.~\eqref{eq:Delta}. The third line is simply due to Eq.~\eqref{eq:Pud}.
\end{proof}

To check whether the fidelity could be exponentially close to $1$, we basically need to focus on the term
\begin{equation}
\begin{split}
    \frac{\Delta}{(1-3p/4)^{n}}&=2^{n\Bigl(\log_2\frac{4}{4-3p}-1+H(d/n)+d/n\log_2{3}+\delta\Bigr)}+2^{n\left(1-d/n\log\frac{4-3p}{p}\right)}.
\end{split}
\end{equation}
Numerics show that take $d/n=0.16$, when $p\leq 6.5\%$, both $\Bigl(\log_2\frac{4}{4-3p}-1+H(d/n)+d/n\log_2{3}+\delta\Bigr)$ and $\left(1-d/n\log\frac{4-3p}{p}\right)$ can be less than $0$. Therefore the fidelity could be exponentially close to $1$ in the large $n$ limit.

\subsection{Proof of Theorem~\ref{lemma:diagonal+classical}}
\begin{thmrecap}[Theorem~\ref{lemma:diagonal+classical}]
    Consider an entanglement-based QKD protocol between Alice and Bob. Assume that the shared EPR pairs are subjected to local i.i.d.\ depolarizing noise of strength $p$ on Bob’s side. Alice and Bob group the noisy pairs into blocks of size $n$ and perform Hamiltonian twirling using a diagonal Hamiltonian. They then measure $m$ qubit pairs in the Hadamard basis for error detection. The surviving qubit of error detection are measured in the computational basis to generate key bits, followed by standard privacy amplification. The resulting secret key rate is given by:
    \begin{equation}
    \begin{split}
        R&=\Bigl((1-\frac{3}{4}p)^{m}(1-\frac{p}{2})^{n-m}+2^{-m}(1-(1-\frac{p}{2})^{n})\Bigr)(1-H(e_p))(1-m/n),
    \end{split}
    \end{equation}
    where $H(\cdot)$ denotes the Shannon entropy and $e_p=\frac{2^{-(m+1)}(1-(1-\frac{p}{2})^{n})+\frac{p}{4-2p}(1-\frac{3}{4}p)^{m}(1-\frac{p}{2})^{n-m}}{2^{-m}(1-(1-\frac{p}{2})^{n})+(1-\frac{3}{4}p)^{m}(1-\frac{p}{2})^{n-m}}$ is the effective phase error rate inferred from the error-detection step.
    This protocol tolerates local depolarizing noise up to a threshold $p_{\mathrm{tol}}$, where $p_{\mathrm{tol}}\in[0,2/3]$ is the solution to
    \begin{equation}\label{eq:tolerance_appendix}
        (1-\frac{p}{2})^{1-\frac{m}{n}}(2-\frac{3p}{2})^{\frac{m}{n}}=1.
    \end{equation}
\end{thmrecap}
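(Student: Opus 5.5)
The plan is to reduce everything to the per-block classification given by Theorem~\ref{thm:diagonal_twirling}, compute the passing probability and the conditional phase-error rate branch by branch, and then read off the key rate and the threshold. First, I will move Bob's local depolarizing noise to Alice's side using Eq.~\eqref{eq:useful_identity}; the Pauli channel keeps the same coefficients. Each pair then independently suffers $I$ with probability $1-\frac34p$ and each of $X,Y,Z$ with probability $p/4$. Writing the $n$-qubit error as $P=X_0Z_0$, the $X$-part $X_0$ is trivial with probability $(1-\frac p2)^n$, since $X$ or $Y$ occurs on a given qubit with probability $p/2$. Conditioned on $X_0=I$, each qubit independently carries $Z$ with probability $\frac{p/4}{1-p/2}=\frac{p}{4-2p}$.

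Next I will invoke Theorem~\ref{thm:diagonal_twirling} to split into two branches. In the \emph{scrambled} branch $X_0\neq I$, the twirled state is $(X_0\otimes I)\Delta(\ketbra{\Phi^+}{\Phi^+}^{\otimes n})(X_0\otimes I)$. Since $\Delta=\mathcal{Z}_{1/2}^{\otimes n}$, the $Z$-pattern is uniformly random and independent across qubits. A Hadamard-basis comparison on $m$ pairs passes iff the $Z$-pattern on those pairs is trivial, which happens with probability $2^{-m}$. The $n-m$ survivors then carry a phase error with probability exactly $1/2$ each. In the \emph{clean} branch $X_0=I$, the state is unchanged and the test passes iff none of the $m$ measured qubits has a $Z$. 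Including the branch weight, this has joint probability $(1-\frac34p)^m(1-\frac p2)^{n-m}$, and the survivors carry independent phase errors with rate $\frac{p}{4-2p}$. Summing the two contributions gives the passing probability
\[
(1-\tfrac34p)^m(1-\tfrac p2)^{n-m}+2^{-m}\bigl(1-(1-\tfrac p2)^n\bigr).
\]
The posterior-weighted average of $1/2$ and $\frac{p}{4-2p}$ is exactly the stated $e_p$.

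For the key rate I will use the standard entanglement-based (Shor--Preskill style) argument. The surviving $n-m$ pairs per block are measured in the computational basis, and the privacy-amplification cost per sifted bit is $H(e_p)$, where $e_p$ is the phase-error rate inferred from the detection statistics. Multiplying by the passing probability and by the fraction $(1-m/n)$ of retained pairs gives $R$ in the stated form. I will follow the statement's accounting here, so bit-error reconciliation is not charged in $R$. The delicate point is justifying that the block-averaged phase error rate $e_p$, rather than a worst case, is the right parameter for privacy amplification. I would handle it by treating the two branches as a classical mixture and noting that phase errors are independent within each branch, so the asymptotic phase-error rate of the sifted string concentrates on $e_p$.

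Finally, the threshold. The rate is positive iff $e_p$ stays bounded away from $1/2$ as $n\to\infty$ with $m/n$ fixed. I will compare the exponents of the two branch weights. The ratio of the clean weight to the scrambled weight behaves like
\[
\bigl[(1-\tfrac p2)^{1-m/n}\,(2-\tfrac{3p}{2})^{m/n}\bigr]^n,
\]
using $2^m(1-\frac34p)^m=(2-\frac{3p}{2})^m$. If the bracket exceeds $1$, the clean branch dominates, $e_p\to\frac{p}{4-2p}<1/2$ for $p<2/3$, and $R>0$. If the bracket is below $1$, the scrambled branch dominates, $e_p\to1/2$ and $R\to0$. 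The bracket is decreasing in $p$ on $[0,2/3]$, so $p_{\mathrm{tol}}$ is the unique root of Eq.~\eqref{eq:tolerance_appendix}. The main obstacle I anticipate is this boundary analysis: checking monotonicity and existence of the root in $[0,2/3]$, and controlling the finite-$n$ corrections rigorously. The branch computations themselves are routine once Theorem~\ref{thm:diagonal_twirling} is applied.
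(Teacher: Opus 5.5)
Your survivor weights and your $e_p$ match the paper's; its three classes (pure EPR, $Z$-only, $X$-affected) are your clean branch split in two plus your scrambled branch. Where you differ is the threshold. The paper does not use $e_p$ for it. It computes the bit-error rate of the survivors,
\[
e_b=\frac{p/2}{1+(1-\frac{p}{2})^{n-m}\bigl((2-\frac{3p}{2})^{m}-(1-\frac{p}{2})^{m}\bigr)},
\]
and defines $p_{\mathrm{tol}}$ as the point where $e_b$ stops being exponentially suppressed. It then observes that any $e_p<1/2$ is tolerable. You instead track whether $e_p$ tends to $\frac{p}{4-2p}$ or to $1/2$. Both criteria are governed by the same clean-to-scrambled weight ratio, so they give the same equation. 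Your route has one advantage: it supplies the converse. Above $p_{\mathrm{tol}}$, $1-H(e_p)\to 0$ even under the optimistic accounting, which the paper's proof never argues. Your concentration remark about using the block-averaged $e_p$ in privacy amplification is also a point the paper skips.

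The step you leave open is the one the paper's route is built to close. You say you will ``follow the statement's accounting'' and not charge bit-error reconciliation. But the absence of an $H(e_b)$ term in $R$ is part of what has to be proved, and it holds only because $e_b\to 0$ below threshold. You can get this from your own branch decomposition:
\begin{itemize}
\item Clean survivors carry no bit errors.
\item In the scrambled branch, passing (probability $2^{-m}$) is independent of $X_0$. By exchangeability of the i.i.d.\ noise, the survivors there carry average bit-error rate $\frac{p}{2(1-(1-p/2)^n)}$.
\end{itemize}
So $e_b$ equals this rate times the scrambled fraction of survivors. That is exactly the formula above, and it vanishes iff your bracket exceeds $1$. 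Add this and the derivation of $R$ is complete.

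Two smaller points. First, the dichotomy ``$R>0$ versus $R\to 0$'' does not separate the regimes. The passing-probability prefactor of $R$ decays exponentially in $n$ at fixed $m/n>0$ in both cases. State the criterion instead for the per-surviving-pair rate $1-H(e_p)$, together with $e_b\to 0$. Second, the root analysis you flag as the main obstacle is immediate. Both factors of the bracket decrease in $p$, and the bracket equals $2^{m/n}>1$ at $p=0$ and $(2/3)^{1-m/n}<1$ at $p=2/3$.
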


\begin{proof}
    We first consider the error detection step. States surviving this step and their corresponding surviving probability are listed below:
    \begin{enumerate}
        \item Pure EPR states: The initial probability of pure EPR state is $(1-\frac{3}{4}p)^{n}$. They pass the error detection with $100\%$ probability, therefore the surviving probability of EPR states are also $p_1=(1-\frac{3}{4}p)^{n}$.
        \item EPR states affected by $Z$ errors: This kind of state can survive the error detection if $Z$-errors happen in the $n-m$ qubits while the $m$ qubits being measured is of pure EPR states. Therefore the probability of these kinds of state left is $p_2=(1-\frac{3p}{4})^{m}\left((1-\frac{p}{2})^{n-m}-(1-\frac{3p}{4})^{n-m}\right)$.
        \item EPR states also affected by $X$ errors: This kind of state pass the error detection with probability only $2^{-m}$. Therefore the overall surviving probability of them after error detection is $p_3=2^{-m}\left(1-(1-\frac{p}{2})^{n}\right)$.
    \end{enumerate}
    Now we calculate the bit and phase error rate of the surviving $n-m$ qubits after error detection. Denote $\overline{e}_b$ the average bit error rate over states that are affected by $X$-errors before distillation. Since overall bit error rate is $p/2$ before distillation for local depolarizing noise, we should have
    \begin{equation}
        \frac{\overline{e}_b\left(1-(1-\frac{p}{2})^{n}\right)+0\cdot(1-\frac{p}{2})^{n}}{\left(1-(1-\frac{p}{2})^{n}\right)+(1-\frac{p}{2})^{n}}=p/2~\Rightarrow~\overline{e}_b=\frac{p}{2\left(1-(1-\frac{p}{2})^{n}\right)}.
    \end{equation}
    After suppression of $X$ error, the remaining bit error rates are
    \begin{equation}\label{eq:eb}
    \begin{split}
        e_b&=\frac{2^{-m}\overline{e}_b\left(1-(1-\frac{p}{2})^{n}\right)}{2^{-m}\left(1-(1-\frac{p}{2})^{n}\right)+(1-\frac{3p}{4})^{m}(1-\frac{p}{2})^{n-m}}\\
        &=\frac{1}{1-(1-\frac{p}{2})^{n}+2^{m}(1-\frac{3p}{4})^{m}(1-\frac{p}{2})^{n-m}}\cdot\frac{p}{2}\\
        &=\frac{1}{1+(1-\frac{p}{2})^{n-m}\Bigl((2-\frac{3p}{2})^{m}-(1-\frac{p}{2})^{m}\Bigr)}\cdot\frac{p}{2}.
    \end{split}
    \end{equation}
    To require the bit error to be exponentially suppressed. We first need $(2-\frac{3p}{2})^{m}-(1-\frac{p}{2})^{m}>0$. This is trivially true because $2-\frac{3p}{2}>1-\frac{p}{2}$ whenever $p<1$. Then what is required is that $(1-\frac{p}{2})^{n-m}(2-\frac{3p}{2})^{m}$ is exponentially large. This can be achieved in the large $n$ limit as long as $(1-\frac{p}{2})^{1-\frac{m}{n}}(2-\frac{3p}{2})^{\frac{m}{n}}=1+\delta$ for any small constant $\delta>0$. Then the solution $p_{tol}$ to equation $(1-\frac{p}{2})^{1-\frac{m}{n}}(2-\frac{3p}{2})^{\frac{m}{n}}=1$ serves as the maximal tolerable noise strength because, in the large $n$ limit, when bit error is exponentially suppressed, the protocol could tolerate any phase error $e_p<\frac{1}{2}$ since $1-H(e_p)>0$ for any $e_p<\frac{1}{2}$. 

    Now we consider the phase error rate to complete the proof for key rate. For pure EPR states and EPR states affected only by $Z$-errors, the average phase error rate is
    \begin{equation}
    \begin{split}
        \overline{e}_p&=\Pr\{\text{$Z$-error }|\text{ no $X$ or $Y$ error}\}\\
        &=\frac{\Pr\{\text{$Z$-error },\text{ no $X$ or $Y$ error}\}}{\Pr\{\text{ no $X$ or $Y$ error}\}}\\
        &=\frac{p/4}{1-p/2}.
    \end{split}
    \end{equation}
    Note that after diagonal twirling, those surviving EPR pairs with $X$-errors would have $1/2$ phase error rate due to the complete dephasing channel. Therefore the overall phase error is:
    \begin{equation}\label{eq:ep}
    \begin{split}
        e_p&=\frac{\frac{p/4}{1-p/2}(1-\frac{3p}{4})^{m}(1-\frac{p}{2})^{n-m}+\frac{1}{2}\cdot2^{-m}(1-(1-\frac{p}{2})^{n})}{(1-\frac{3p}{4})^{m}(1-\frac{p}{2})^{n-m}+2^{-m}(1-(1-\frac{p}{2})^{n})}.
    \end{split}
    \end{equation}
    Since bit error is exponentially suppressed. After measuring all pairs in the computational basis, no bit error correction is needed. To generate secure keys, the QKD protocol only requires a privacy amplification step with output ratio $1-H(e_p)$. Therefore the overall key rate is
    \begin{equation}
        R=\Bigl((1-\frac{3}{4}p)^{m}(1-\frac{p}{2})^{n-m}+2^{-m}(1-(1-\frac{p}{2})^{n})\Bigr)(1-H(e_p))(1-m/n).
    \end{equation}
\end{proof}

\section{Simulation setup for quantum key distribution and repeaters}\label{appendix:QKD}
Following~\cite{AQT}, we choose parameter from some of the most recent experiments~\cite{Ma18PM,Zhu24,Liu2026}. Important parameters for simulation includes: Average bit error rate for nonzero photon components $e_d$, defined to be the bit error rate for photons in a channel. We will choose this to be the varying factor in QKD simulation. In repeater this is set to $e_d=1.5\%$; Phase-noise accumulation rate $\beta=2.82\times10^{-4}\mathrm{km}^{-1}$, defined as the rate of dephasing channel strength accumulated through distance; Dark count rate $Y_0=3\times 10^{-8}$, defined as the probability of obtain a click when there is no photon in the channel; Optical fiber loss $\alpha=0.21$, defined as the coefficient such that the transmittance $\eta$ over fibres of length $l$ kilometers is $\eta=10^{-\alpha l/10}$; error correction efficiency $f=1.06$, which is the overhead factor of key rates in bit error correction.  

Here for simplicity, we assume that we have ideal single photon source or have an ideal photon-number-resolving device such that it can filtered out all multiple-photon components. Note that even these two very ideal assumption are becoming increasingly realistic with the advancements in single-photon source and detector technology~\cite{lu2023quantumdotsinglephotonsourcesquantum}. In this scenario, we could focus on the single photon case corresponding to the qubit system. The dark count could be understood as the state is loss during transmission but a maximally-mixed state occurs from the channel prohibiting us from really knowing that the EPR state is lost. Therefore the corresponding error rates when dark count event happens is $1/2$. The total bit error rate for a single photon is thus,
\begin{equation}
    e_b=\frac{\eta e_d+\frac{1}{2}(1-\eta)Y_0}{\eta+(1-\eta)Y_0},
\end{equation}
where $\eta=10^{-\alpha l/10}$ decays exponentially with distance. The phase of quantum states are more fragile during the transmission, we model the phase error by another indepdent dephasing channel with strength~\cite{Liu2026},
\begin{equation}
    p=\frac{1-e^{-\beta l/2}}{2},
\end{equation}
here $l$ is the transmission distance. The value of $\beta$ is taken also from~\cite{Liu2026}, modeling their residual phase noise remained after active fibre phase-noise suppression. Taking the dark count into account, the overall phase error rate is therefore,
\begin{equation}
    e_p=\frac{\eta(1-e^{-\beta l/2})/2+\frac{1}{2}(1-\eta)Y_0}{\eta+(1-\eta)Y_0}.
\end{equation}
For QKD, the maximum transmission distance is obtained when the net key rate
\begin{equation}
    R_{\textrm{one}}\propto 1-fH(e_b)-H(e_p)\leq 0.
\end{equation}
Here $H(e)$ is the Shannon entropy. We add an error correction efficiency factor $f$ to one of such entropy to represent the additional overhead by the error correction step. For two way protocols with recurrence method, we first compute the bit error rate $e_b'$ and phase error rate $e_p'$ after given rounds of recurrence. Then, similary, the maximum transmission distance is given by the point where 
\begin{equation}
    R_{\mathrm{two}}\propto 1-fH(e_b')-H(e_p')\leq 0.
\end{equation}
For Hamiltonian-based protocol, we choose a $15$-qubit diagonal Hamiltonian and measure $12$ qubits in the Hadamard basis. Since here we are working with finite Hamiltonian. Even though the bit error rate is still significantly suppressed, we cannot directly overlook it as done in the proof of Theorem~\ref{lemma:diagonal+classical}. We calculate the bit error rate and phase error rate $e_b^{H},e_p^{H}$ using Eq.~\eqref{eq:eb} and Eq.~\eqref{eq:ep}. Similarly the maximum transmission distance is determined when
\begin{equation}
    R_{\mathrm{H}}\propto 1-fH(e_b^{H})-H(e_p^{H})\leq 0.
\end{equation}
For repeaters, we directly calculate the bit and phase error rate and then report the fidelity, both without distillation and that after two-way distillation methods including recurrence method or Hamiltonian-based method.

\end{document}